\numberwithin{equation}{section}
\theoremstyle{plain}
\newtheorem*{theorem*}{Theorem}
\newtheorem{theorem}{Theorem}
\numberwithin{theorem}{section}
\newtheorem{proposition}[theorem]{Proposition}
\newtheorem{lemma}[theorem]{Lemma}
\newtheorem{problem}[theorem]{Problem}
\newtheorem{remark}[theorem]{Remark}
\theoremstyle{definition}
\newtheorem{definition}[theorem]{Definition}
\newtheorem{example}[theorem]{Example}
\newcommand*{\mge}[1][]{\mathsmaller{\ge #1}}
\newcommand{\C}{{\mathbb C}}
\newcommand{\Q}{{\mathbb Q}}
\newcommand{\R}{{\mathbb R}}
\newcommand{\Z}{{\mathbb Z}}
\newcommand{\PP}{{\mathbb P}}
\def\bv{{\boldsymbol{v}}}
\def\bj{{\boldsymbol{j}}}
\def\bx{{\boldsymbol{x}}}
\def\ba{{\boldsymbol{a}}}
\def\bd{{\boldsymbol{d}}}
\def\balpha{{\boldsymbol{\alpha}}}
\def\bpi{{\boldsymbol{\pi}}}
\def\bone{{\boldsymbol{1}}}
\def\bzero{{\boldsymbol{0}}}
\def\codim{{\rm codim}}
\newcommand{\mT}{\mathsmaller{\mathsf{T}}}
\newcommand{\mR}{\mathsmaller{\mathbb{R}}}
\newcommand{\kc}{{\mathcal C}}
\newcommand{\kd}{{\mathcal D}}
\newcommand{\kl}{{\mathcal L}}
\newcommand{\kn}{{\mathcal N}}
\newcommand{\ko}{{\mathcal O}}
\newcommand{\kr}{{\mathcal R}}
\newcommand{\ks}{{\mathcal S}}
\newcommand{\ku}{{\mathcal U}}
\newcommand{\ky}{{\mathcal Y}}
\newcommand{\kz}{{\mathcal Z}}
\renewcommand*\env@matrix[1][*\c@MaxMatrixCols c]{%
    \hskip -\arraycolsep
    \let\@ifnextchar\new@ifnextchar
    \array{#1}}
\author{Hirotachi Abo}
\address{Department of Mathematics, University of Idaho, Moscow, Idaho 83844-1103, United States of America}
\email{abo@uidaho.edu}
\author{Irem Portakal}
\address{Max Planck Institute for Mathematics in the Sciences, Leipzig, Germany}
\email{mail@irem-portakal.de}
\author{Luca Sodomaco}
\address{Max Planck Institute for Mathematics in the Sciences, Leipzig, Germany}
\email{luca.sodomaco@mis.mpg.de}
\subjclass[2020]{14A10, 14C17, 14F06, 14P05, 91A06, 91A12, 91A80}
\keywords{game, totally mixed Nash equilibrium, tensor, variety, scheme, discriminant, resultant, vector bundle}
\title[A vector bundle approach to Nash equilibria]{A vector bundle approach to Nash equilibria}
\date{}
\begin{document}

\begin{abstract}
    We use vector bundles to study the locus of totally mixed Nash equilibria of an $n$-player game in normal form, which we call the Nash equilibrium scheme. When the payoff tensor format is balanced, we study the Nash discriminant variety, i.e., the algebraic variety of games whose Nash equilibrium scheme is nonreduced or has a positive dimensional component. We prove that this variety has codimension one. We classify all possible components of the Nash equilibrium scheme for a binary three-player game. We prove that if the payoff tensor is of boundary format, then the Nash discriminant variety has two components: an irreducible hypersurface and a larger-codimensional component. A generic game with an unbalanced payoff tensor format does not admit totally mixed Nash equilibria. We define the Nash resultant variety of games admitting a positive number of totally mixed Nash equilibria. We prove that it is irreducible and determine its codimension and degree.
\end{abstract}

\maketitle

\section{Introduction}

The study of Nash equilibria has impacted many areas beyond mathematics, including economics, computer science, evolutionary biology, quantum mechanics, and social science. It is well-known that every finite game has at least one Nash equilibrium \cite{nash1950equilibrium}. However, finding Nash equilibria is known to be PPAD-complete \cite{daskalakis2009complexity}, and even NP-complete in specific cases, such as finding a second Nash equilibrium, one that maximizes the sum of players' utilities, or one that uses a given strategy with positive probability \cite{conitzer2008complexity,gilboa1989nash}. Despite these computational challenges, the set of equilibria can still be described using algebro-geometric methods. The set of totally mixed Nash equilibria is a semialgebraic set that is defined as the intersection of the \emph{Nash equilibrium scheme} (Definition~\ref{def: Nash equilibria scheme}) with the probability simplices that correspond to the mixed strategies of the players. By applying the Bernstein-Khovanskii-Kushnirenko (BKK) theorem, a classical result from algebraic geometry, McKelvey and McLennan provided an upper bound on the number of totally mixed Nash equilibria of \emph{generic} games \cite{mckelvey1996computation,mckelvey1997maximal}. Since any Nash equilibrium gives rise to a totally mixed equilibrium of the smaller game obtained by eliminating all unused pure strategies, this upper bound is a lower bound for the maximal number of Nash equilibria of generic games.

Under the growing field of algebraic game theory, these developments have been extended to various types of equilibria and trace their inspiration back to the algebro-geometric study of Nash equilibria \cite{wilson1971computing}. For instance, the study of correlated equilibria has significantly benefited from convex geometry \cite{brandenburg2024combinatorics}, and strong connections have been discovered between oriented matroids, elliptic curves, rational varieties, and dependency equilibria via the use of Spohn varieties \cite{kidambi2025elliptic,portakal2022geometry}. In particular, it is proven that any Nash equilibrium lies on the Spohn variety of the game \cite[Theorem 3.18]{portakal2024dependency}. Our paper aims to highlight the profound and valuable connection between algebraic geometry and Nash equilibria. In particular, we study totally mixed Nash equilibria of $n$-player games in normal form using vector bundles over a product of $n$ projective spaces. This allows us to study the space of nongeneric games and their totally mixed Nash equilibria.

The notion of generic games has appeared under different names in previous game theory literature. Wilson \cite{wilson1971computing} refers to these as {\em nondegenerate games}, where he gives a first proof of the oddness and finiteness of Nash equilibria. Harsanyi \cite[Section 5]{harsanyi1973oddness} describes them as {\em almost all games}, i.e., for all games except for a closed set of measure zero, repeating Wilson's definition. McKelvey and McLennan \cite{mckelvey1997maximal} follow these definitions and use the term {\em generic games}. In this work, we call a finite game in normal form {\em generic} if it belongs to the complement of the {\emph{Nash discriminant variety}} or the {\emph{Nash resultant variety}} (Section~\ref{sec: Nash discriminants}). Our definition is thus motivated by algebraic geometry but aligns with those established in earlier game theory papers. Indeed, the study of nongeneric games, i.e., games whose Nash equilibrium schemes are nonreduced or positive dimensional, is not uncommon. The idea of considering discriminants appeared in \cite[Section 6]{mckelvey1997maximal} for three-player games with binary strategies and elaborated further in \cite{emiris2016compact}. This is, in particular, an example of a game with a nonreduced Nash equilibrium scheme.  For the same classes of games, in \cite[Section 2]{chin1974structure}, one finds a game whose Nash equilibrium scheme is a line. The concept of resultants was also discussed in the context of Nash equilibria in \cite{mckelvey1996computation}. A symbolic method for obtaining a parametric representation of totally mixed Nash equilibria is presented in \cite{jeronimo2009parametric}. This method is based on a symbolic procedure for calculating multihomogeneous resultants. Moreover, Datta \cite{datta2003universality} showed that for any real algebraic variety, it is possible to construct a three-player or an $n$-player game with binary strategies whose Nash equilibrium scheme is isomorphic to that variety. This further highlights the critical role algebraic geometry plays in advancing the study of Nash equilibria. 

We let $X$ be an $n$-player game $\bd=(d_1,\ldots,d_n)\in\Z_{\mge[2]}^n$ with $d_1\le\cdots\le d_n$. This means each player $i \in [n]$ can choose from $d_i$ pure strategies. In Section~\ref{subsec: vector bundles}, we formulate the Nash equilibrium scheme as the zero scheme of a global section of the vector bundle $E$ on the multiprojective space $\PP^\bd$ of rank $D=\dim\PP^\bd$. In the following, we revisit a well-known result \cite{emiris2016compact, mckelvey1997maximal, sturmfels2002solving} for generic games with vector bundles.

\begin{theorem*}[Theorem~\ref{thm: number tmNe generic game}]
If $X$ is a generic game, then the following three conditions are equivalent:
\begin{itemize}
    \item[(1)] The Nash equilibrium scheme is empty.
    \item[(2)] The degree of the top Chern class $c(\bd)$ of the vector bundle $E$ is zero.
    \item[(3)] $d_n-1>\sum_{i=1}^{n-1} (d_i-1)$. 
\end{itemize}
\end{theorem*}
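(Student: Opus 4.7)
The plan is to establish the chain $(1) \Leftrightarrow (2) \Leftrightarrow (3)$ by combining a standard Chern-class argument with an explicit multinomial computation.

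For $(1) \Leftrightarrow (2)$, recall from Section~\ref{subsec: vector bundles} that the rank of $E$ equals $D = \dim \PP^\bd$. Hence for a generic section $s$, the zero scheme $Z(s)$ either has the expected codimension $D$ -- in which case it is a finite subscheme whose class is the top Chern class $c_D(E)$ and whose length equals $\deg c_D(E)$ -- or it is empty. Therefore $Z(s) = \emptyset$ is equivalent to $\deg c_D(E) = 0$, since the top cohomology of $\PP^\bd$ is one-dimensional. Here ``generic game'' means that $s$ is chosen outside a proper closed subset of $H^0(\PP^\bd, E)$, which is the working definition already in place.

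For $(2) \Leftrightarrow (3)$, I would use the description of $E$ to identify it with the direct sum $\bigoplus_{i=1}^n \ko_{\PP^\bd}(\bone - \be_i)^{\oplus (d_i - 1)}$, reflecting the fact that player $i$ contributes $d_i - 1$ indifference equations of multidegree $\bone - \be_i$. Writing $h_j$ for the hyperplane class of $\PP^{d_j - 1}$ and $H = h_1 + \cdots + h_n$, this yields
\[
c_D(E) \;=\; \prod_{i=1}^n (H - h_i)^{d_i - 1}.
\]
Since $\deg(h_1^{d_1 - 1} \cdots h_n^{d_n - 1}) = 1$ on $\PP^\bd$, the quantity $\deg c_D(E)$ is the coefficient of this top monomial. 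Expanding each factor as $(H - h_i)^{d_i - 1} = \bigl(\sum_{j \neq i} h_j\bigr)^{d_i - 1}$ via the multinomial theorem and collecting gives
\[
\deg c_D(E) \;=\; \sum_B \prod_{i=1}^n \binom{d_i - 1}{b_{i,1}, \ldots, \widehat{b_{i,i}}, \ldots, b_{i,n}},
\]
where $B = (b_{ij})$ runs over nonnegative integer $n \times n$ matrices with vanishing diagonal, row sums $d_i - 1$, and column sums $d_j - 1$.

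Every summand is strictly positive, so $\deg c_D(E) = 0$ if and only if no such matrix $B$ exists, and reducing the vanishing of the top Chern number to this combinatorial existence question is the crux. The implication $(3) \Rightarrow (2)$ is then immediate: if $d_n - 1 > \sum_{i < n}(d_i - 1)$, then column $n$ of $B$ can only receive mass from rows $i < n$, whose total available mass is $\sum_{i < n}(d_i - 1) < d_n - 1$, a contradiction. For the converse, assume $d_n - 1 \le \sum_{i < n}(d_i - 1)$, so by maximality of $d_n$ the analogous inequality $d_i - 1 \le \sum_{j \neq i}(d_j - 1)$ holds for every $i$. The existence of $B$ is then a transportation problem on the complete bipartite graph between two copies of $[n]$ with forbidden cells $(i,i)$; applying Gale-type feasibility, all cut inequalities with $S \times T^c \subseteq F$ collapse to exactly the single-cell inequalities just stated, and integrality of the solution follows from the total unimodularity of the constraint matrix. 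This combinatorial existence check is the main technical obstacle, though it is classical in flavour.
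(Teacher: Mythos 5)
Your plan is correct, and the $(1)\Leftrightarrow(2)$ step coincides with the paper's: since $E$ is globally generated of rank $D=\dim\PP^\bd$, a generic section has finite zero scheme of length $\deg c_D(E)$, or is empty precisely when this degree vanishes. Where you genuinely diverge is $(2)\Leftrightarrow(3)$. The paper invokes the BKK theorem to identify $c(\bd)$ with a mixed volume of Newton polytopes and then applies the criterion (\cite[Theorem~5.1.8]{schneider2014convex}) that a mixed volume vanishes iff the collection of bodies is ``dependent,'' which reduces to a single dimension count on $(d_n-1)\Delta^{(n)}$. You instead expand $\prod_i\hat h_i^{\,d_i-1}$ multinomially and recognize $c(\bd)$ as the number-weighted count of nonnegative integer matrices with zero diagonal, row sums $d_i-1$, and column sums $d_j-1$; vanishing then becomes a feasibility question for a transportation problem with forbidden diagonal cells. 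Your reduction of the Gale/max-flow cut conditions is correct: the only products $I\times J$ contained in the diagonal are singletons $\{i\}\times\{i\}$, giving exactly $d_i-1\le\sum_{j\ne i}(d_j-1)$ for each $i$, which by maximality of $d_n$ is governed by the $i=n$ case; total unimodularity of the bipartite constraint matrix then yields an integral $B$. The trade-off is that your route is more elementary and self-contained (no BKK, no mixed volumes), at the cost of proving the transportation feasibility criterion, while the paper's route is shorter but outsources the vanishing criterion to convex geometry. As a bonus, your expansion recovers the combinatorial interpretation of $c(\bd)$ via ``block derangements'' that the paper uses later in Lemma~\ref{lem: section with all real zeros}, so the two perspectives are complementary rather than redundant.
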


The property of $c(\bd)=0$, called \emph{beyond boundary format} brings us the following problem: {\em given a beyond boundary format $\bd$, can we describe the set of (nongeneric) $n$-player games of format $\bd$ admitting at least one Nash equilibrium?}

\noindent The answer is positive, and we can say more: this locus is an algebraic variety, and we call it the \emph{Nash resultant variety} $\kr(\bd)$ (Definition~\ref{def: Nash resultant variety}). 

\begin{theorem*}[Proposition~\ref{prop: Nash resultant variety two players}, Theorem~\ref{thm: codim degree Nash resultant variety}]
The Nash resultant variety $\kr(\bd)$ is irreducible and has codimension 
\[
\codim\,\kr(\bd) = d_n-1-\sum_{i=1}^{n-1}(d_i-1)\,.
\]
Its degree is 
\begin{align*}
\begin{split}
\deg\kr(\bd) = \frac{(d_n-1)!}{(d_1-1)! \cdots (d_{n-1}-1)! (d_n-1-\sum_{i=1}^{n-1}(d_i-1))!}\,.
\end{split}
\end{align*}
\end{theorem*}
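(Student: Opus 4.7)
The plan is to reduce the problem to a classical multilinear resultant by exploiting the structure of the Nash bundle. Writing $E = \bigoplus_{i=1}^{n} E_{i}$ with $E_{i} \cong L_{i}^{\oplus(d_{i}-1)}$ and $L_{i} := \bigotimes_{k \neq i} \ko_{\PP^{d_{k}-1}}(1)$, the key observation is that $L_{n}$ is pulled back from $X := \prod_{k<n} \PP^{d_{k}-1}$, so a game's $n$-th payoff tensor determines a tuple $\sigma \in V^{(n)} := H^{0}(X, L_{n})^{d_{n}-1}$ of $d_{n}-1$ multilinear forms on $X$.

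First I would establish the set-theoretic equality $\kr(\bd) = R^{(n)} \times V_{-n}$, where $V_{-n}$ is the payoff space of players $1, \dots, n-1$ and $R^{(n)} \subset V^{(n)}$ is the multilinear resultant variety of tuples having a common zero on $X$. The inclusion $\kr(\bd) \subseteq R^{(n)} \times V_{-n}$ is immediate. For the converse, given a common zero $\bpi_{-n}^{*} \in X$ of the $n$-th player's sections, the first $n-1$ players' Nash conditions on $\bpi_{n}$ reduce to $\dim X = \sum_{k<n}(d_{k}-1)$ homogeneous linear equations in $\R^{d_{n}}$, which always admit a nonzero solution since their solution space has dimension at least $c+1 \geq 2$. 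This decomposition reduces both the codimension and the degree of $\kr(\bd)$ to those of $R^{(n)}$.

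Irreducibility and codimension of $R^{(n)}$ follow from the incidence $Z = \{(\sigma, p) \in V^{(n)} \times X : \sigma(p) = 0\}$, a subbundle of the trivial bundle $V^{(n)} \times X$ over $X$ of fiber codimension $d_{n}-1$, hence smooth and irreducible of dimension $\dim V^{(n)} - c$. Its image $R^{(n)} = \pi_{1}(Z)$ is thus irreducible, and generic finiteness of $\pi_{1}$ onto its image gives $\codim R^{(n)} = c$. For the degree, I would pass to the projectivized incidence $\tilde Z \subset \PP V^{(n)} \times X$, whose class is $[\tilde Z] = (c_{1}(L_{n}) + H)^{d_{n}-1}$ with $H = c_{1}(\ko_{\PP V^{(n)}}(1))$. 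Pushing forward to $\PP V^{(n)}$ selects only the $k=c$ term of the binomial expansion:
\[
p_{1*}[\tilde Z] \;=\; \binom{d_{n}-1}{c}\,\Bigl(\!\int_{X} c_{1}(L_{n})^{\dim X}\Bigr)\,H^{c} \;=\; \frac{(d_{n}-1)!}{c!\,\prod_{k<n}(d_{k}-1)!}\,H^{c},
\]
using $c_{1}(L_{n}) = h_{1}+\cdots+h_{n-1}$ and the standard multinomial evaluation on $X$. Provided $\pi_{1}$ is generically of degree one, this equals $\deg \PP R^{(n)} \cdot H^{c}$, yielding the asserted formula.

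The hardest step will be the transversality statement that $\pi_{1}: Z \to R^{(n)}$ is generically of degree one, equivalently that a generic $\sigma \in R^{(n)}$ has a unique simple common zero in $X$. For $c=1$ this follows from the classical hypersurface-resultant argument: the last form vanishes at exactly one of the $\int_{X} c_{1}(L_{n})^{\dim X}$ common zeros of the first $\dim X$ forms. For general $c$, one proceeds by induction on $c$ or by a Bertini-type argument on the residual sections; this is standard in the theory of multilinear resultants but is the crux of the degree calculation, without which one only concludes that $\deg \PP R^{(n)}$ divides the Chern-class number.
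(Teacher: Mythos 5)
Your proposal follows essentially the same route as the paper's: reduce to the multilinear resultant of the $n$th player's payoff (your $R^{(n)}$ is exactly the paper's $\kr(F)$ with $F = \ko_{\PP^{\bd'}}(\bone)^{\oplus(d_n-1)}$), then compute codimension and degree via the incidence variety and Chern classes. Your first reduction — once $\bpi_{-n}^*$ is a common zero of the $n$th player's sections, the remaining conditions become $\sum_{k<n}(d_k-1)$ linear equations in $\pi^{(n)}\in\C^{d_n}$, which automatically have a nonzero solution since the kernel has dimension at least $c+1\ge 2$ — is exactly the content of the paper's appeal to the proof of Theorem~\ref{thm: number tmNe generic game}, and your Chern-class push-forward on $\PP V^{(n)}\times X$ reproduces the formula $\binom{d_n-1}{c}\int_X c_1(L_n)^{\dim X} = (d_n-1)!/\bigl(c!\prod_{k<n}(d_k-1)!\bigr)$ correctly. (One small inaccuracy: $\kr(\bd)$ is the preimage of $R^{(n)}$ under the linear surjection $V^{\oplus n}\to H^0(\PP^{\bd'},F)$, not the product $R^{(n)}\times V_{-n}$; the kernel of that surjection also includes the summand of player~$n$'s own payoff that maps to zero. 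This changes nothing about codimension or degree.)

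The one step you correctly flag as the crux — that $\pi_1\colon Z\to R^{(n)}$ is generically of degree one, so the Chern-class number is the actual degree rather than a multiple of it — is precisely what the paper discharges by citing \cite[Chapter 3, Proposition 3.1 and Theorem 3.10]{GKZ}; the $c=1$ case is Chapter~3 of GKZ verbatim (with $\kl = \ko_{\PP^{\bd'}}(\bone)$ very ample and $\kv_0=\cdots=\kv_{\dim X}=H^0(\PP^{\bd'},\kl)$), and the paper's ``with slight modifications'' covers the $c>1$ generalization (Bertini on $c$ residual sections through one of the finitely many common zeros of the first $\dim X$ sections, exactly as you sketch). Without that citation or a written-out Bertini argument your proposal only establishes $\codim\kr(\bd)\ge c$ and that the right-hand side is a multiple of $\deg\kr(\bd)$, so you should either cite GKZ or make the birationality argument explicit.
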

\noindent In particular, for two-player games, the ideal of the Nash resultant variety is the maximal minors of the $(d_1 +1)\times d_2$ matrix obtained by adding the row consisting of ones to the payoff matrix of the second player. 

We consider a similar question for nongeneric games \emph{within boundary format} where $d_n-1\le \sum_{i=1}^{n-1} (d_i-1)$: describe the set of nongeneric games with an ``unexpected'' number of totally mixed Nash games. 
This within boundary format condition, in particular, was also earlier used to give a necessary and sufficient condition for the existence of a game with a unique totally mixed Nash equilibrium \cite{kreps1981finite}. For this, we define the Nash discriminant variety $\Delta(\bd)$ (Definition~\ref{def: Nash discriminant variety}). 

\begin{theorem*}[Proposition~\ref{prop: discriminant_two_players}, Theorem~\ref{thm: codimension real part discriminant of E is 1}]
For $\bd = (d, d)$ games, $\Delta(\bd)$ has two irreducible components of codimension two and degree $\binom{d}{2}$. Otherwise, the real part of the Nash discriminant variety $\Delta(\bd)$ has codimension one.
\end{theorem*}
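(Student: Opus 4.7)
My approach splits according to the format.

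\textit{Case 1 ($\bd = (d,d)$, two players).} I will exploit the fact that two-player indifference equations decouple: for payoff matrices $A, B \in \R^{d \times d}$ the totally mixed Nash conditions are $Aq \in \operatorname{span}(\bone)$ and $B^\top p \in \operatorname{span}(\bone)$, involving $q$ and $p$ independently. Encoding each as $d-1$ linear equations via row differences yields two $(d-1)\times d$ matrices $\bar A, \bar B$ whose entries are linear in those of $A, B$, and the Nash equilibrium scheme is the product $Z(\bar A \cdot)\times Z(\bar B^\top\cdot)\subset\PP^{d-1}\times\PP^{d-1}$. Since linear subschemes are reduced in characteristic zero, this scheme is nonreduced or positive-dimensional if and only if $\operatorname{rank}\bar A\le d-2$ or $\operatorname{rank}\bar B\le d-2$. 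Each condition is the preimage, under a surjective linear map, of the classical determinantal variety of submaximal-rank $(d-1)\times d$ matrices; this gives two irreducible components, each of codimension $(1)(2)=2$ and degree $\binom{d}{2}$ by the Giambelli--Thom--Porteous formula.

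\textit{Case 2 (all other within-boundary formats).} I will first show the complex Nash discriminant is a hypersurface, using the incidence variety
\[
\mathcal{I}=\{(g,x)\in H^0(E)\times\PP^\bd : s_g(x)=0,\ D_x s_g : T_x\PP^\bd\to E(x) \text{ is not an isomorphism}\}\,.
\]
Since $\operatorname{rank} E=\dim\PP^\bd=D$ and $E$ is globally generated, for each $x$ the fiber $\mathcal{I}_x$ has codimension $D+1$ in $H^0(E)$: $D$ linear conditions from $s_g(x)=0$ and one determinantal condition from $\det D_x s_g=0$. Hence $\mathcal{I}$ is pure of dimension $\dim H^0(E)-1$, and its image in $H^0(E)$ is the locus of games with a nonreduced zero. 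Generic finiteness of the projection $\mathcal{I}\to H^0(E)$, verified by a local deformation argument at a transverse collision of two Nash equilibria, gives the complex hypersurface statement; the locus of positive-dimensional Nash schemes is controlled by a parallel incidence construction and is of codimension at least two. To transfer to $\R$, I will exhibit a real game $g_0$ whose Nash scheme consists of one simple double point at a real point of $\PP^\bd$ together with $c(\bd)-2$ further reduced real zeros: start from a generic real game with at least two distinct real Nash equilibria (available because $c(\bd)\ge 2$ whenever $\bd$ is within-boundary and $\bd\ne(d,d)$) and continuously deform the payoffs to make two of them collide transversally. The transversality forces $g_0$ to be a smooth real point of $\Delta(\bd)$, so the real tangent cone is a real hyperplane and the real codimension equals one.

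\textit{Main obstacle.} Case 1 is essentially classical determinantal geometry once the decoupling is noted. The principal difficulties in Case 2 are (i) verifying generic finiteness of the incidence projection, so that the complex discriminant is genuinely a hypersurface rather than a lower-dimensional variety swept by positive-dimensional fibers, and (ii) guaranteeing for every within-boundary $\bd\ne(d,d)$ the existence of a real game carrying at least two real totally mixed Nash equilibria to perform the transverse collision. Step (ii) is the least routine ingredient; I expect it to require an inductive reduction on the tuple $(d_1,\ldots,d_n)$ or explicit small-format base cases such as $\bd=(2,2,2)$, together with a standard semialgebraic deformation argument.
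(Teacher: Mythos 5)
Your Case 1 is essentially the paper's Proposition~\ref{prop: discriminant_two_players}: decouple the two linear systems, observe that each solution scheme is a (reduced) linear subscheme so the only degeneracy is a rank drop in a $(d-1)\times d$ coefficient matrix, and compute codimension and degree from the classical determinantal formula. Nothing to add there.

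For balanced formats with $n\ge 3$, your Case 2 takes a genuinely different route from the paper, and as written it has several gaps. The paper (Lemmas~\ref{lem: section with all real zeros}, \ref{lem: section with almost all real zeros}, Theorem~\ref{thm: codimension real part discriminant of E is 1}) constructs two explicit real sections --- one whose $c(\bd)$ zeros are all real (products of generic real linear forms, indexed by block derangements), and one with at least two nonreal zeros (a modification that forces a negative discriminant on a $2\times 2$ bilinear subsystem) --- and then applies Thom's Isotopy Lemma: every smooth real path from the first to the second must meet $\Delta(E)(\R)$, which therefore disconnects the space of real sections and hence has real codimension one. Crucially, this argument never needs to exhibit a smooth point of $\Delta$, compute a tangent cone, or prove generic finiteness of any incidence projection.

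Concrete problems with your Case 2 as written. \textbf{(1)} The claim that $\mathcal{I}_x$ has codimension $D+1$ ``because $E$ is globally generated'' does not follow. Global generation gives exactly $D$ independent conditions from $s_g(x)=0$; whether $\det D_x s_g=0$ cuts one more dimension out of $\{g: s_g(x)=0\}$ requires knowing that this determinant is not identically zero on that linear subspace. For this particular bundle, $\ko(\bone_i)$ is constant in the $i$th factor, so the $1$-jet evaluation is not surjective (the Jacobian in the proof of Theorem~\ref{thm: discriminant of E for boundary format} has zero diagonal blocks for exactly this reason); one needs, e.g., homogeneity of $\PP^\bd$ plus the existence of a single section with a nondegenerate zero. \textbf{(2)} ``The locus of positive-dimensional Nash schemes ... is of codimension at least two'' is asserted without proof; the paper proves it only in boundary format (Theorem~\ref{thm: discriminant of E for boundary format}) with a dedicated Jacobian block argument, and it is not obvious in general. \textbf{(3)} Your parenthetical ``available because $c(\bd)\ge 2$'' does not supply a real game with two distinct \emph{real} totally mixed Nash equilibria; $c(\bd)$ counts complex zeros. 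Supplying such a game is exactly the content of Lemma~\ref{lem: section with all real zeros}, which you identify as ``the least routine ingredient'' but do not provide. Without it, there is nothing to collide. \textbf{(4)} ``The transversality forces $g_0$ to be a smooth real point of $\Delta(\bd)$'' is the step that would require the most work: you need the double zero to be an $A_1$ singularity of the section and then a jet-transversality computation relating this to smoothness of the discriminant. The paper's separation argument via Thom's Isotopy Lemma avoids all of this, which is a real structural advantage of that route.
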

We present specific computations for this variety for $(2,2,2)$ and $(2,2,3)$ games in Example~\ref{ex: computation degree hypersurface component 222} and Example~\ref{ex: computation degree hypersurface component 223}. 
Section~\ref{subsec: 2x2x2 game} is dedicated to the study of the Nash discriminant variety of $(2, 2, 2)$ games.
\begin{theorem*}[Theorem~\ref{thm: degree Nash discriminant variety 2x2x2}, Proposition~\ref{prop:curve_component}]
The Nash discriminant variety $\Delta(\bd)$ for $\bd = (2,2,2)$ is an irreducible hypersurface of degree~6 in $\mathbb P^{11}$. A nonsingular point of $\Delta(\bd)$ corresponds to a game $X$ whose Nash equilibrium scheme $\kz_X$ is a nonreduced point of multiplicity two. A game $X$ lies in the singular locus of $\Delta(\bd)$ if and only if its Nash equilibrium scheme $\kz_X$ contains a line, a nonsingular conic, or a nonsingular cubic. 
\end{theorem*}

We further compute all the irreducible components of the singular strata of this Nash discriminant variety $\Delta(2,2,2)$. The computations are done with \verb|Macaulay2| \cite{grayson1997macaulay2}, and the details are explained on the Mathrepo repository \cite{mathrepo} with illustrative images. A summary of these results can also be found in Table~\ref{tab: description varieties singular strata} and Table~\ref{tab: singular strata}. Additionally, we analyze the Nash discriminant variety for these games in the \emph{boundary format} (Section~\ref{subsec: Nash discriminant boundary format}). In this case, the Nash discriminant variety can be written as the union of an irreducible hypersurface (with a known degree) and a component of codimension bigger than or equal to two (Theorem~\ref{thm: discriminant of E for boundary format}).

\section{Games, tensors, and Nash}\label{sec: preliminaries}\label{subsec: Nash}

\subsection{General notations}\label{sec: notations}
We start by setting up the main notations used throughout the paper.
For a positive integer $n$, we denote $\{1,\ldots,n\}$ by $[n]$.
Let $\bd\coloneqq(d_1,\ldots,d_n)\in\Z_{\mge[2]}^n$.
Unless otherwise stated, we assume that $d_1\le\cdots\le d_n$. For every $i\in[n]$, let $V_i\coloneqq\C^{d_i}$ with the standard basis $\{e_1^{(i)},\ldots,e_{d_i}^{(i)}\}$.
We denote by $\bone$ a tuple that consists of ones and by $\bone_i$ a tuple that has a single zero in the $i$th element and one everywhere else. For the sake of simplicity, we use the same notations for the corresponding column vectors. For each $i \in [n]$, let  $(\pi_1^{(i)},\ldots, \pi_{d_i}^{(i)})$ be coordinates on $V_i$, which form the dual basis for the dual space $V_i^*$ of $V_i$ with respect to $\{e_1^{(i)},\ldots,e_{d_i}^{(i)}\}$. We denote by $\pi^{(i)}$ the column vector $(\pi_1^{(i)},\ldots, \pi_{d_i}^{(i)})^\mT$, and let $\bpi\coloneqq(\pi^{(1)},\ldots,\pi^{(n)})\in \prod_{i=1}^n V_i$.

Define $I \coloneqq \prod_{i=1}^n[d_i]$. For each $i \in [n]$, set $I_{-i} \coloneqq \prod_{j \neq i} [d_j]$. If $\bj\coloneqq (j_1, \ldots, j_n) \in I$, then we write $\bj_{-i}$ for the element of $I_{-i}$ obtained from $\bj$ by removing its $i$th element:  
\[
\bj_{-i} \coloneqq (j_1, \ldots, j_{i-1}, j_{i+1}, \ldots, j_n)\in I_{-i}, 
\]
and $(k,\bj_{-i})$ denotes the element of $I$ obtained from $\bj_{-i} \in I_{-i}$ by inserting $k$ in its $i$th position: 
\[
(k,\bj_{-i}) \coloneqq (j_1, \ldots, j_{i-1},k, j_{i+1}, \ldots, j_{n}) \in I. 
\]
Furthermore, we write
\[
\bpi_{-i}\coloneqq(\pi^{(1)},\ldots,\pi^{(i-1)},\pi^{(i+1)},\ldots, \pi^{(n)})\in \prod_{j\neq i}V_j\,.
\]
If $\bj= (j_1, \ldots, j_n) \in I$, then we set $\pi_{\bj_{-i}} \coloneqq \prod_{k \neq i} \pi_{j_k}^{(k)}$. 

Let $V\coloneqq\bigotimes_{i=1}^n V_i$. 
An element $T\in V$ can be described in coordinates as a tensor $T=(t_{\bj})_{\bj \in I}$ of format $\bd$. We denote by $T(\bpi_{-i})$ the vector-valued function on $\bigoplus_{j \neq i} V_j$ whose $k$th component function is 
\[
\sum_{\bj_{-i} \in I_{-i}} t_{(k,\bj_{-i})} \, \pi_{\bj_{-i}}\,. 
\]
The function $T(\bpi_{-i})$ corresponds to the operation of {\em tensor contraction} $T\cdot \bigotimes_{k \neq i}\pi^{(k)}$ between the tensor $T$ and the rank-one tensor $\bigotimes_{k \neq i} \pi^{(k)}\in \bigotimes_{k\neq i}V_k$.

For each $i \in [n]$, let $\PP^{d_i-1}\coloneqq \PP(V_i)$ be the projective space of one-dimensional subspaces of $V_i$ and let
\[
R_i \coloneqq \C[\pi_1^{(i)},  \ldots, \pi_{d_i}^{(i)}] =  \bigoplus_{k \ge 0} \mathrm{Sym}^k(V_i^*)
\] 
be its homogeneous coordinate ring. The multihomogeneous coordinate ring of $\PP^\bd\coloneqq\prod_{i=1}^n\PP^{d_i-1}$ is the multigraded polynomial ring $R \coloneqq R_1 \otimes_\C \cdots \otimes_\C R_n$ with multigrading defined by $\deg (\pi_j^{(i)})=\bone-\bone_i$.
For each $i \in [n]$, let $\mathrm{pr}_i\colon\PP^\bd \to \PP^{d_i-1}$ be the projection from $\PP^\bd$ to its $i$th factor. If $A(\PP^\bd)$ denotes the Chow ring of $\PP^\bd$, then 
\begin{equation}\label{eq: coordinate ring P^dd}
A(\PP^\bd) = \frac{\Z[h_1,\ldots, h_n]}{\langle h_1^{d_1},\ldots, h_n^{d_n}\rangle}\,,  
\end{equation}
where $h_i$ denotes the pullback of the hyperplane class on the $i$th factor $\PP^{d_i-1}$ of $\PP^\bd$ via the projection map $\mathrm{pr}_i$. Throughout the paper, we denote by $\ko$ the structure sheaf $\ko_{\PP^\bd}$ on $\PP^\bd$ and by $\ko(\balpha)$ the line bundle $\ko_{\PP^\bd}(\balpha)$ on $\PP^\bd$ for each $\balpha\in\Z^n$. We refer to \cite[Chapter 8]{fulton1998intersection} for more details.

\subsection{Preliminaries}\label{ssec:prelim}
In this section, we define an $n$-player game in normal form and the notion of Nash equilibria. We show how globally generated vector bundles \eqref{eq: Nash vector bundle} over the product of projective spaces $\PP^\bd$ provide an elegant description of the totally mixed Nash equilibria of an $n$-player game. This allows us to give an alternative proof of \cite[Theorem 3.3]{mckelvey1997maximal} for the maximal number of totally mixed Nash equilibria of generic games, see Theorem~\ref{thm: number tmNe generic game}. Any Nash equilibrium gives rise to a totally mixed Nash equilibrium of a smaller game obtained by eliminating unused strategies. Thus, it is still interesting to consider totally mixed Nash equilibria, as, for example, in the generic case, the maximal number of them gives a lower bound for the number of Nash equilibria of the smaller game. The completely (totally) mixed games, i.e., the games with only totally mixed Nash equilibria, are also of interest to game theorists (e.g., \cite{bubelis1979equilibria}, \cite{chin1974structure}).

\begin{definition}\label{def: game}
For a given positive integer $n \ge 2$, let $[n]$ be the set of players. If $\bd = (d_1,\ldots, d_n)$ with $d_i \ge 2$, then we interpret $[d_i]$ as the set of strategies for player $i$, and let $I = \prod_{i=1}^n [d_i]$. For the selected strategies $\bj = (j_1, \ldots, j_n) \in I$ of the players, we write $x_{\bj}^{(i)}$ for the payoff for player~$i$, and let $X^{(i)}$ be the {\em payoff tensor} $(x_{\bj}^{(i)})_{\bj\in I}$. If $X$ denotes the collection $(X^{(1)}, \ldots, X^{(n)})\in V^{\oplus n}$ of payoff tensors, then the triplet $([n], I, X)$ is called an {\em $n$-player game in normal form of format $\bd$.} For the sake of simplicity, we use $X$ to represent the game $([n],I,X)$.
\end{definition}
\begin{remark}
This paper addresses games with an arbitrary number of players. However, most examples concern three-player games. Thus, we discuss how we represent a tensor of order $3$.

If $X = (X^{(1)}, X^{(2)}, X^{(3)})$ is a three-player game of format $(d_1, d_2, d_3)$, then each payoff tensor $X^{(i)}=(x_\bj^{(i)})_{\bj\in I}$ is a three-dimensional array with real entries. 
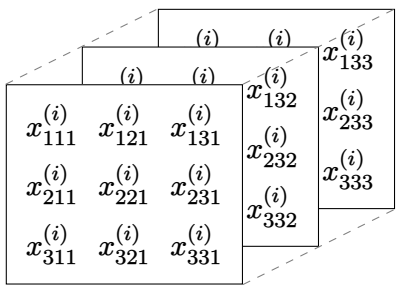
\begin{figure}[ht]
\begin{tikzpicture}
\def\xs{1} %shift in x direction
\def\ys{0.5} %shift in y direction
\foreach \x in {3,2,1}
{
\matrix [draw, % for the rectangle border
         fill=white, % so that it is not transparent
         ampersand replacement=\&] %see explanation
(mm\x)%give the matrix a name
at(\x * \xs, \x * \ys) %shift the matrix
{
    \node {$x_{11\x}^{(i)}$}; \& \node {$x_{12\x}^{(i)}$}; \& \node {$x_{13\x}^{(i)}$};\\
    \node {$x_{21\x}^{(i)}$}; \& \node {$x_{22\x}^{(i)}$};  \& \node {$x_{23\x}^{(i)}$};\\
    \node {$x_{31\x}^{(i)}$}; \& \node {$x_{32\x}^{(i)}$};  \& \node {$x_{33\x}^{(i)}$};\\
};
}
\draw [dashed,gray](mm1.north west) -- (mm3.north west);
\draw [dashed,gray](mm1.north east) -- (mm3.north east);
\draw [dashed,gray](mm1.south east) -- (mm3.south east);
\end{tikzpicture}
\caption{$3$-player game of format $(3,3,3)$}
\end{figure}
We express each payoff tensor~$X^{(i)}$ as the $d_1\times d_2d_3$ matrix obtained by concatenating the horizontal slices of $X^{(i)}$. For example, when $(d_1,d_2,d_3)=(3,3,3)$, the $3 \times 9$ matrix
\[
\begin{bmatrix}[ccc:ccc:ccc]
x_{111}^{(i)} & x_{121}^{(i)} & x_{131}^{(i)} & x_{112}^{(i)} & x_{122}^{(i)} & x_{132}^{(i)} & x_{113}^{(i)} & x_{123}^{(i)} & x_{133}^{(i)}\\[2pt]
x_{211}^{(i)} & x_{221}^{(i)} & x_{231}^{(i)} & x_{212}^{(i)} & x_{222}^{(i)} & x_{232}^{(i)} & x_{213}^{(i)} & x_{223}^{(i)} & x_{233}^{(i)}\\[2pt]
x_{311}^{(i)} & x_{321}^{(i)} & x_{331}^{(i)} & x_{312}^{(i)} & x_{322}^{(i)} & x_{332}^{(i)} & x_{313}^{(i)} & x_{323}^{(i)} & x_{333}^{(i)}
\end{bmatrix}\,
\]
represents $X^{(i)}$. 
\end{remark}

For all $i\in[n]$, let $\Delta_{d_i-1}$ be the $(d_i-1)$-dimensional probability simplex:
\[
\Delta_i \coloneqq \left\{\pi^{(i)}\in \R^{d_i}\,  \left|\, \mbox{$\sum_{j=1}^{d_i}\pi_{j}^{(i)}=1$,  $\pi_{j}^{(i)} \ge 0$ for all $j$} \right. \right\}\,.
\]
Given $\pi^{(i)}\in \Delta_{d_i-1}$, we interpret each component $\pi_{j}^{(i)}$ as the probability (or mixed strategy) that player $i$ unilaterally selects the pure strategy $j \in [d_i]$. The $n$ players choose a joint probability distribution $\pi^{(1)}\otimes\cdots\otimes\pi^{(n)}=(\pi_{j_1}^{(1)}\cdots\pi_{j_n}^{(n)})_{\bj\in I}$. The {\em expected payoff} for player~$i$ is the standard inner product of the tensors $\pi^{(1)}\otimes\cdots\otimes\pi^{(n)}$ and $X^{(i)}$, namely, 
\begin{equation}\label{eq: expected payoff}
\pi^{(1)}\otimes\cdots\otimes\pi^{(n)} \cdot X^{(i)} = \sum_{\bj\in I}\pi_{j_1}^{(1)}\cdots\pi_{j_n}^{(n)}x_{\bj}^{(i)}\,.
\end{equation}

A {\em Nash equilibrium} $\bpi\in\prod_{i=1}^n \Delta_{d_i-1}$ for a game $X$ is obtained if no player can increase their expected payoff \eqref {eq: expected payoff} by changing their mixed strategy $\pi^{(i)}$ while the other players keep their mixed strategies fixed. The semialgebraic set defining Nash equilibria was studied in, e.g., \cite{mckelvey1996computation}, \cite[Chapter 6]{sturmfels2002solving}.
Our interest is in the case where each player's strategy is \emph{totally mixed}, i.e., strictly positive. We give an algebraic definition of a totally mixed Nash equilibrium.

\begin{definition}\label{def: tmNe}
Let $X=(X^{(1)},\ldots,X^{(n)})$ be an $n$-player game. For all $i\in[n]$, we write $\Delta_{d_i-1}^\circ$ for the interior of $\Delta_{d_i-1}$.
A tuple $\bpi=(\pi^{(1)},\ldots,\pi^{(n)})\in \prod_{i=1}^n \Delta_{d_i-1}^\circ$ is a {\em totally mixed (completely mixed) Nash equilibrium} of $X$ if, for all $i\in[n]$, the contraction $X^{(i)}(\bpi_{-i})$ is a scalar multiple of~$\bone$.
\end{definition}

\begin{remark}\label{rmk: equations tmNe}
With the same notation used in Definition~\ref{def: tmNe}, the linear dependence of $X^{(i)}(\bpi_{-i})$ and $\bone$ is equivalent to the vanishing of the wedge product $X^{(i)}(\bpi_{-i}) \wedge \bone$, or the vanishing of the $2 \times 2$ minors of the $d_i \times 2$ matrix
\[
[X^{(i)}(\bpi_{-i})\mid\bone]\,.
\]
Since $X^{(i)}(\bpi_{-i})$ is the column vector of multihomogeneous polynomials $f_1^{(i)}, \ldots, f_{d_i}^{(i)}$, where 
\[
f_k^{(i)}\coloneqq\sum_{\bj_{-i} \in I_{-i}} x_{(k,\bj_{-i})}^{(i)}\,\pi_{\bj_{-i}}, 
\]
the $2 \times 2$ minors of $[X^{(i)}(\bpi_{-i})\mid\bone]$ are 
\[
\varDelta f_{k,\ell}^{(i)} \coloneqq f_k^{(i)}-f_\ell^{(i)}=\sum_{\bj_{-i} \in I_{-i}} (x_{(k,\bj_{-i})}^{(i)}-x_{(\ell,\bj_{-i})}^{(i)}) \, \pi_{\bj_{-i}}
\]
for all $1 \le k < \ell \le d_i$. As $\varDelta f_{k,\ell}^{(i)} = \varDelta f_{1,\ell}^{(i)}-\varDelta f_{1,k}^{(i)}$, the necessary and sufficient conditions for an element $\bpi=(\pi^{(1)},\ldots,\pi^{(n)})$ of $\prod_{i=1}^n \Delta_{d_i-1}^\circ$ to be a totally mixed Nash equilibrium of $X$ are expressed as the system of the following polynomial equalities and polynomial inequalities:     
\begin{equation}\label{eq: semi_algebraic_set}
\begin{cases}
    \varDelta f_{1,k}^{(i)} = 0 & \text{for all $i \in [n]$ and $k\in[d_i]\setminus\{1\}$,}\\
    \pi_k^{(i)} > 0 & \text{for all $i \in [n]$ and $k\in[d_i]$,}\\
    \sum_{k=1}^{d_i} \pi_k^{(i)} = 1 & \text{for all $i \in [n]$.}
\end{cases} 
\end{equation}
In other words, a totally mixed Nash equilibrium of $X$ is a point of the semialgebraic set defined by \eqref{eq: semi_algebraic_set}.    
\end{remark}

\begin{example}\label{ex: 3x3x3}
Let $X = (X^{(1)}, X^{(2)}, X^{(3)})$ be the three-player game of format $\bd=(3,3,3)$ with
\begin{align*}
X^{(1)} &= \begin{bmatrix}[rrr:rrr:rrr]
-20 & -4 & -12 & -16 & 8 & 4 & 12 & 8 & -4\\
8 & 20 & 20 & -20 & -20 & 16 & 12 & -4 & -23\\
-76 & 4 & -4 & 8 & 4 & -4 & 12 & -4 & 8 
\end{bmatrix},\\
X^{(2)} &= \begin{bmatrix}[rrr:rrr:rrr]
6 & -2 & 4 & -8 & 4 & -6 & -10 & -2 & 0 \\
-2 & 0 & -10 & -2 & 6 & 4 & 10 & 6 & -2 \\
2 & -10 & 4 & 4 & -8 & -2 & -2 & -5 & 2 
\end{bmatrix},\\
X^{(3)} &= \begin{bmatrix}[rrr:rrr:rrr]
-8 & -4 & 6 & -8 & -4 & 0 & -10 & 10 & 10 \\
-6 & -10 & 6 & 0 & -10 & 2 & 2 & 4 & -10 \\
4 & 10 & -2 & -4 & 0 & 14 & 0 & -6 & 3 
\end{bmatrix}\,.
\end{align*}

A triple $\bpi = (\pi^{(1)},\pi^{(2)}, \pi^{(3)}) \in \Delta_2^\circ \times \Delta_2^\circ  \times \Delta_2^\circ$ is a totally mixed Nash equilibrium of $X$ if and only if it satisfies $\varDelta f_{1,k}^{(i)} = 0$ for all $i \in [3]$ and $k \in [3] \setminus\{1\}$, where  
\begin{align*}
\begin{smallmatrix*}[l]
f_1^{(1)} & = & -20\,\pi_1^{(2)}\pi_1^{(3)}-4\,\pi_2^{(2)}\pi_1^{(3)}-12\,\pi_3^{(2)}\pi_1^{(3)}-16\,\pi_1^{(2)}\pi_2^{(3)}+8\,\pi_2^{(2)}\pi_2^{(3)}+4\,\pi_3^{(2)}\pi_2^{(3)}+12\,\pi_1^{(2)}\pi_3^{(3)}+8\,\pi_2^{(2)}\pi_3^{(3)}-4\,\pi_3^{(2)}\pi_3^{(3)}\\
f_2^{(1)} & = & 8\,\pi_1^{(2)}\pi_1^{(3)}+20\,\pi_2^{(2)}\pi_1^{(3)}+20\,\pi_3^{(2)}\pi_1^{(3)}-20\,\pi_1^{(2)}\pi_2^{(3)}-20\,\pi_2^{(2)}\pi_2^{(3)}+16\,\pi_3^{(2)}\pi_2^{(3)}+12\,\pi_1^{(2)}\pi_3^{(3)}-4\,\pi_2^{(2)}\pi_3^{(3)}-23\,\pi_3^{(2)}\pi_3^{(3)}\\
f_3^{(1)} & = & -76\,\pi_1^{(2)}\pi_1^{(3)}+4\,\pi_2^{(2)}\pi_1^{(3)}-4\,\pi_3^{(2)}\pi_1^{(3)}+8\,\pi_1^{(2)}\pi_2^{(3)}+4\,\pi_2^{(2)}\pi_2^{(3)}-4\,\pi_3^{(2)}\pi_2^{(3)}+12\,\pi_1^{(2)}\pi_3^{(3)}-4\,\pi_2^{(2)}\pi_3^{(3)}+8\,\pi_3^{(2)}\pi_3^{(3)}\vspace{5pt}\\
f_1^{(2)} & = & 6\,\pi_1^{(1)}\pi_1^{(3)}-2\,\pi_2^{(1)}\pi_1^{(3)}+2\,\pi_3^{(1)}\pi_1^{(3)}-8\,\pi_1^{(1)}\pi_2^{(3)}-2\,\pi_2^{(1)}\pi_2^{(3)}+4\,\pi_3^{(1)}\pi_2^{(3)}-10\,\pi_1^{(1)}\pi_3^{(3)}+10\,\pi_2^{(1)}\pi_3^{(3)}-2\,\pi_3^{(1)}\pi_3^{(3)}\\
f_2^{(2)} & = & -2\,\pi_1^{(1)}\pi_1^{(3)}-10\,\pi_3^{(1)}\pi_1^{(3)}+4\,\pi_1^{(1)}\pi_2^{(3)}+6\,\pi_2^{(1)}\pi_2^{(3)}-8\,\pi_3^{(1)}\pi_2^{(3)}-2\,\pi_1^{(1)}\pi_3^{(3)}+6\,\pi_2^{(1)}\pi_3^{(3)}-5\,\pi_3^{(1)}\pi_3^{(3)}\\
f_3^{(2)} & = & 4\,\pi_1^{(1)}\pi_1^{(3)}-10\,\pi_2^{(1)}\pi_1^{(3)}+4\,\pi_3^{(1)}\pi_1^{(3)}-6\,\pi_1^{(1)}\pi_2^{(3)}+4\,\pi_2^{(1)}\pi_2^{(3)}-2\,\pi_3^{(1)}\pi_2^{(3)}-2\,\pi_2^{(1)}\pi_3^{(3)}+2\,\pi_3^{(1)}\pi_3^{(3)}\vspace{5pt}\\
f_1^{(3)} & = & -8\,\pi_1^{(1)}\pi_1^{(2)}-6\,\pi_2^{(1)}\pi_1^{(2)}+4\,\pi_3^{(1)}\pi_1^{(2)}-4\,\pi_1^{(1)}\pi_2^{(2)}-10\,\pi_2^{(1)}\pi_2^{(2)}+10\,\pi_3^{(1)}\pi_2^{(2)}+6\,\pi_1^{(1)}\pi_3^{(2)}+6\,\pi_2^{(1)}\pi_3^{(2)}-2\,\pi_3^{(1)}\pi_3^{(2)}\\
f_2^{(3)} & = & -8\,\pi_1^{(1)}\pi_1^{(2)}-4\,\pi_3^{(1)}\pi_1^{(2)}-4\,\pi_1^{(1)}\pi_2^{(2)}-10\,\pi_2^{(1)}\pi_2^{(2)}+2\,\pi_2^{(1)}\pi_3^{(2)}+14\,\pi_3^{(1)}\pi_3^{(2)}\\
f_3^{(3)} & = & -10\,\pi_1^{(1)}\pi_1^{(2)}+2\,\pi_2^{(1)}\pi_1^{(2)}+10\,\pi_1^{(1)}\pi_2^{(2)}+4\,\pi_2^{(1)}\pi_2^{(2)}-6\,\pi_3^{(1)}\pi_2^{(2)}+10\,\pi_1^{(1)}\pi_3^{(2)}-10\,\pi_2^{(1)}\pi_3^{(2)}+3\,\pi_3^{(1)}\pi_3^{(2)}\,.
\end{smallmatrix*}
\end{align*}
We verified with the \verb|Macaulay2| package \verb|RealRoots|~\cite{lopez2024real} that the system \eqref{eq: semi_algebraic_set} admits four real solutions. More precisely, our code provides a list of intervals of  $(a_k^{(i)},b_k^{(i)}]$ of $\R$ containing a unique root $\pi_k^{(i)}$ such that $b_k^{(i)}-a_k^{(i)}$ is less than the tolerance determined by the user. One can directly check that
\[
\bpi=\left(\left(\frac{1}{3},\frac{1}{3},\frac{1}{3}\right),\left(\frac{1}{4},\frac{1}{4},\frac{1}{2}\right),\left(\frac{1}{5},\frac{2}{5},\frac{2}{5}\right)\right)
\]
is one of the solutions of \eqref{eq: semi_algebraic_set}.\hfill$\diamondsuit$
\end{example}

If $n \ge 3$, then the polynomials $\varDelta f_{1,k}^{(i)}$ in \eqref{eq: semi_algebraic_set} are multihomogeneous in the $n$ vectors of variables $\pi^{(i)}=(\pi_1^{(i)},\ldots,\pi_{d_i}^{(i)})$ for all $i\in[n]$. Thus, we can regard a totally mixed Nash equilibrium as a point of a multiprojective space $\PP^\bd$. This leads to the following definition.

\begin{definition}\label{def: Nash equilibria scheme}
Let $X=(X^{(1)},\ldots,X^{(n)})\in V^{\oplus n}$. For all $i \in [n]$, let $J_i$ be the multihomogenous ideal of $R$ generated by $\varDelta f_{1,k}^{(i)}$ in \eqref{eq: semi_algebraic_set}.
The {\em Nash equilibrium scheme} of $X$ is the subscheme $\kz_X$ of $\PP^\bd$ defined by the multihomogeneous ideal~$J\coloneqq \sum_{i=1}^n J_i$.
\end{definition}

It follows immediately from Definitions~\ref{def: tmNe} and~\ref{def: Nash equilibria scheme} that $\bpi=(\pi^{(1)},\ldots,\pi^{(n)}) \in \prod_{i=1}^{d_i} \Delta_{d_i-1}^\circ$ is a totally mixed Nash equilibrium of the game $X=(X^{(1)},\ldots,X^{(n)})$ if the corresponding point~$[\bpi]\coloneqq([\pi^{(1)}],\ldots,[\pi^{(n)}])$ of $\PP^\bd$ lies in the Nash equilibrium scheme $\kz_X$ of $X$. 

\subsection{Vector bundles}\label{subsec: vector bundles}
If $X = (X^{(1)},\ldots, X^{(n)}) \in V^{\oplus n}$, then the $2 \times 2$ minors $\varDelta f_{k,\ell}^{(i)}$ of the matrix $[X^{(i)}(\pi_{-i})\mid\bone]$ are multihomogeneous of multidegree $\bone_i$. Thus, we view them as elements of the cohomology group $H^0(\PP^\bd, \ko(\bone_i))$ of the line bundle $\ko(\bone_i)$ on $\PP^\bd$ and $r_X^{(i)} \coloneqq (\varDelta f_{1,2}^{(i)}, \ldots, \varDelta f_{1,d_i}^{(i)})$ as an element of the cohomology group $H^0(\PP^\bd, \ko(\bone_i)^{\oplus(d_i-1)})$ of the direct sum $\ko(\bone_i)^{\oplus(d_i-1)}$ of $d_i-1$ copies of~$\ko(\bone_i)$. Therefore, $r_X \coloneqq (r_X^{(1)}, \ldots, r_X^{(n)})$ can be thought of as a global section of the following vector bundle of rank $D \coloneqq \sum_{i=1}^n (d_i-1)$ on $\PP^\bd$: 
\begin{equation}\label{eq: Nash vector bundle}
    E \coloneqq \bigoplus_{i=1}^n \ko(\bone_i)^{\oplus(d_i-1)}\,.  
\end{equation}
Consequently, the Nash equilibrium scheme of $X$ is obtained as the zero scheme of a global section of~$E$. 

The vector bundle $E$ is globally generated, as so are its direct summands. Thus, the degree $\int_{\PP^{\bd}} c_D(E)$ of the top Chern class $c_D(E)$ of $E$ is nonnegative \cite[Proposition 10]{geertsen2002degeneracy}. Moreover, the global sections $f$ of $E$ whose zero schemes $\kz(f)$ are nonsingular of the expected codimension (meaning that it is of codimension $D+1$ if $\int_{\PP^{\bd}}c_D(E) = 0$; it is of codimension $D$ otherwise) form an open subset $U(E)$ of $H^0(\PP^{\bd},E)$ \cite[Lemma 2.5]{ein1982some}. 
\begin{proposition}\label{prop: open subset generic global sections E}
    The map $\Phi\colon V^{\oplus n} \to H^0(\PP^{\bd},E)$ defined by $\Phi(X) = r_X$ is an onto linear transformation. In particular, for each $X \in V^{\oplus n}$ there exists a global section $f$ of the vector bundle $E$ of rank $D$ such that $\kz_X = \kz(f)$. Furthermore, the subset of elements of $V^{\oplus n}$ whose Nash equilibrium scheme is nonsingular of the expected codimension forms an open subset. 
\end{proposition}
\begin{proof}
    For each $i \in \{1,\ldots, n\}$, let $V_i$ denote the $i$th direct summand of $V^{\oplus n}$ and let $\Phi_i\colon V_i \to H^0(\PP^{\bd},\ko(\bone_i)^{\oplus(d_i-1)})$ be the map defined by $\Phi_i(X^{(i)})=r_X^{(i)}=(\varDelta f_{1,2}^{(i)}, \ldots, \varDelta f_{1,d_i}^{(i)})$. The linearity of $\Phi_i$ follows from that of the coefficient $x_{(1,\bj_{-i})}^{(i)}-x_{(\ell,\bj_{-i})}^{(i)}$ of $\pi_{\bj_{-i}}$ in $\varDelta f_{1,\ell}^{(i)}$. These coefficients are linearly independent, which implies the surjectivity of $\Phi_i$. Note that $\Phi$ is the direct sum of $\Phi_1, \ldots, \Phi_n$. Thus, it is an onto linear transformation. Therefore, for each $X^{(i)} \in V_i$, there exists a global section $f^{(i)}$ of $\ko(\bone_i)^{\oplus(d_i-1)}$ such that $r_X^{(i)} = f^{(i)}$. Since $r_X^{(i)}$ generated the ideal $J_i$ and since $J = \sum_{i=1}^n J_i$ defines the Nash equilibrium scheme $\kz_X$ of $X$, the zero scheme of $f = (f^{(1)}, \ldots, f^{(n)}) \in H^0(\PP^{\bd},E)$ coincides with $\kz_X$. 

    The subset of elements of $V^{\oplus n}$ whose Nash equilibrium schemes are nonsingular of the expected codimension is the inverse image of $U(E)$ under $\Phi$. Therefore, it is open in~$V^{\oplus n}$. 
\end{proof}

We say that a game $X\in V^{\oplus n}$ is {\em generic} if $X$ belongs to the open subset of $V^{\oplus n}$ defined in Proposition \ref{prop: open subset generic global sections E}.
As formalized in the upcoming Theorem \ref{thm: number tmNe generic game}, there is a dichotomy in the previous notion of genericity, which depends only on the format $\bd$. We summarize this in Table~\ref{tab:boundary_formats}.
\begingroup
\renewcommand{\arraystretch}{1.4}
\begin{table}[ht]
\centering
\resizebox{\textwidth}{!}{
\begin{tabular}{l||c|c}
 & Within boundary format  & Beyond boundary format \\ \hhline{===}
 Formulation & $d_n - 1 \leq \sum_{i=1}^{n-1} (d_i - 1)$ & $d_n - 1 > \sum_{i=1}^{n-1} (d_i - 1)$ \\ \hline
 Generic $X$ & $\dim\kz_X=0$ and $\kz_X$ reduced of degree $c(\bd)$ & $\kz_X=\emptyset$ \\ \hline
 Nongeneric $X$ & Nash discriminant variety, Definition \ref{def: Nash discriminant variety} & Nash resultant variety, Definition \ref{def: Nash resultant variety}
\end{tabular}
}
\cprotect\caption{The dictionary for format separation, the generic behavior of the Nash equilibrium scheme $\kz_X$, and nongeneric behavior of games, depending on the format $\bd=(d_1,\ldots,d_n)\in\Z_{\mge[2]}^n$ with $d_1\le\cdots\le d_n$.}\label{tab:boundary_formats}
\end{table}
\endgroup
We use the above description and Proposition \ref{prop: open subset generic global sections E} to give an alternative formulation and proof to the well-known result on the maximal number of totally mixed Nash equilibria of a generic game in terms of the degree of the top Chern class of the vector bundle $E$. 
Indeed, Mckelvey and McLennan \cite{mckelvey1997maximal} compute this maximal number as the mixed volume of the Newton polytopes associated to the generators of the multihomogeneous ideal $J$ defining the Nash equilibrium scheme $\kz_X$ (Definition~\ref{def: Nash equilibria scheme}). Equivalently, this number can be computed combinatorially via so-called block derangements \cite[Theorem 6.8]{sturmfels2002solving}.

\begin{theorem}\label{thm: number tmNe generic game}
Let $n\in\Z_{\mge[2]}$, let $\bd=(d_1,\ldots,d_n)\in\Z_{\mge[2]}^n$ with $d_1\le\cdots\le d_n$, and let $c(\bd)$ be the coefficient of the monomial $\prod_{i=1}^n h_i^{d_i-1}$ in $\prod_{i=1}^n \hat{h}_i^{d_i-1}$ with $\hat{h}_i\coloneqq \sum_{j\neq i}h_j$.
If $X \in V^{\oplus n}$ is generic, then the following three conditions are equivalent: 
\begin{itemize}
    \item[(1)] $\kz_X = \emptyset$. 
    \item[(2)] $c(\bd) = 0$.
    \item[(3)] $d_n-1>\sum_{i=1}^{n-1} (d_i-1)$. 
\end{itemize}
Furthermore, if $\kz_X \neq\emptyset$, then $\kz_X$ has dimension $0$, and its degree is equal to $c(\bd)$. In particular, the number of distinct totally mixed Nash equilibria of $X$ is bounded above by $c(\bd)$. 
\end{theorem}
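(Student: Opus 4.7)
The plan is to identify $c(\bd)$ with the degree $\int_{\PP^\bd}c_D(E)$ of the top Chern class of the Nash vector bundle $E$, and then combine a Bertini-type transversality argument for generic sections with a combinatorial flow argument controlling the vanishing of $c(\bd)$.

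In the Chow ring \eqref{eq: coordinate ring P^dd} we have $c_1(\ko(\bone_i))=\hat h_i\coloneqq\sum_{j\neq i}h_j$, so the Whitney product formula applied to $E=\bigoplus_{i=1}^n\ko(\bone_i)^{\oplus(d_i-1)}$ gives
\[
c_D(E)=\prod_{i=1}^n\hat h_i^{\,d_i-1}.
\]
Since $\dim\PP^\bd=D$, the degree-$D$ part of $A(\PP^\bd)$ is generated by the class of a point $\prod_i h_i^{d_i-1}$, and any other monomial of total degree $D$ contains some $h_i^{d_i}$ and so vanishes in the quotient; hence $\int_{\PP^\bd}c_D(E)=c(\bd)$ by the very definition of $c(\bd)$. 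Because the linear map $V^{\oplus n}\to H^0(\PP^\bd,E)$, $X\mapsto r_X$, is surjective and $E$ is globally generated, a Bertini-type argument \cite[Section~5.5.1]{eisenbud20163264} shows that for generic $X\in V^{\oplus n}$ the zero scheme $\kz_X$ is smooth of expected codimension $\operatorname{rank}(E)=D$, hence either empty or a reduced $0$-cycle of length $c(\bd)$. This immediately proves (1)$\Leftrightarrow$(2) together with the final degree and cardinality assertions.

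For (2)$\Leftrightarrow$(3), I would expand $\prod_i \hat h_i^{\,d_i-1}$ by the multinomial theorem and read off the coefficient of $\prod_j h_j^{d_j-1}$, obtaining
\[
c(\bd)=\sum_{A}\,\prod_{i=1}^n\binom{d_i-1}{(a_{i,j})_{j\neq i}},
\]
where $A=(a_{i,j})$ ranges over nonnegative integer matrices with $a_{i,i}=0$, row sums $d_i-1$, and column sums $d_j-1$. Only the factors $\hat h_i$ with $i\neq n$ involve $h_n$, and their total $h_n$-contribution is at most $\sum_{i<n}(d_i-1)$; if this quantity is strictly less than $d_n-1$, no admissible $A$ exists, so $c(\bd)=0$. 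This yields (3)$\Rightarrow$(2).

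The heart of the argument, and the main obstacle, is the converse: producing an admissible $A$ (hence witnessing $c(\bd)>0$) whenever $d_n-1\le\sum_{i<n}(d_i-1)$. I would recast this as an integer transportation problem on the bipartite graph obtained from $K_{n,n}$ by deleting the diagonal edges, with both row supplies and column demands equal to $d_i-1$. Integral feasibility, by max-flow/min-cut (equivalently Hall's theorem), reduces to the inequality $\sum_{i\in S}(d_i-1)\le\sum_{j\in N(S)}(d_j-1)$ for every $S\subseteq[n]$. For $|S|\ge 2$ adjacency forces $N(S)=[n]$, so the inequality is trivial; for $S=\{i\}$ it specializes to $d_i-1\le\sum_{j\neq i}(d_j-1)$, whose binding instance $i=n$ (using $d_1\le\cdots\le d_n$) is precisely hypothesis (3). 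Thus at least one admissible $A$ exists, its multinomial weight is strictly positive, and $c(\bd)>0$, completing (2)$\Rightarrow$(3).
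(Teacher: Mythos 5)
Your proof is correct, and the identification $c_D(E)=\prod_i\hat h_i^{\,d_i-1}=c(\bd)$ together with the Bertini-type argument for $(1)\Leftrightarrow(2)$ is essentially the same as the paper's. Where you genuinely diverge is the equivalence $(2)\Leftrightarrow(3)$. The paper re-interprets $c(\bd)$ as a mixed volume $\mathrm{MV}(\Delta[\bd])$ of the Newton polytopes of $\ko(\bone_i)$ via the BKK theorem, then invokes the Minkowski-dependence criterion for vanishing mixed volume (Schneider's Theorem~5.1.8), reducing to the inequality $\dim\bigl((d_n-1)\Delta^{(n)}\bigr)<d_n-1$. You instead expand $\prod_i\hat h_i^{\,d_i-1}$ by the multinomial theorem, identify $c(\bd)$ with a weighted count of nonnegative integer matrices $A=(a_{i,j})$ with zero diagonal and prescribed row and column sums $d_i-1$, and settle existence of such an $A$ by the Gale/Hoffman feasibility criterion (max-flow/min-cut). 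Your Hall-type check is correct: $N(S)=[n]$ for $|S|\ge 2$, so only singletons $S=\{i\}$ can be binding, and under $d_1\le\cdots\le d_n$ the single nontrivial constraint is $d_n-1\le\sum_{i<n}(d_i-1)$. Both routes are valid; the paper's stays in the convex-geometric language of BKK that McKelvey--McLennan already used and ties naturally to the rest of the paper's geometric toolkit, while yours is more elementary and self-contained, trades polytope dependence for an integer transportation problem, and is arguably more constructive in that a feasible $A$ exhibits an explicit positive contribution to $c(\bd)$.
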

\begin{proof}
The second and third assertions are immediate consequences of (1), (2), and (3). Thus, we focus on proving the equivalence of (1), (2), and (3). 

If $c_t(\ko(\bone_i))$ denotes the Chern polynomial of $\ko(\bone_i)$, then $c_t(\ko(\bone_i)) = 1+\hat{h}_it$, and hence $c_t(\ko(\bone_i)^{\oplus(d_i-1)}) = (1+\hat{h}_it)^{d_i-1}$. So, we have $c_t(E) = \prod_{i=1}^n (1+\hat{h}_it)^{d_i-1}$. Since the $D$th Chern class of $E$ is the coefficient of $t^D$ in $c_t(E)$, the degree $\int_{\PP^\bd} c_D(E)$ of $c_D(E)$ is equal to the coefficient~$c(\bd)$ of $\prod_{i=1}^n h_i^{d_i-1}$ in $\prod_{i=1}^n \hat{h}_i$. This establishes the equivalence between (1) and (2); it remains to show the equivalence between (2) and (3). 

For each $i \in [n]$, the Newton polytope corresponding to the torus-invariant Cartier divisor associated with $\ko(\bone_i)$ is 
\[
\Delta^{(i)} \coloneqq \Delta_{d_1-1} \times \cdots \times \Delta_{d_{i-1}-1} \times \{0\} \times  \Delta_{d_{i+1}-1} \times \cdots \times \Delta_{d_n-1}\,,
\]
where $\Delta_{d_j-1}$ is the $(d_j-1)$-dimensional probability simplex. Define
\[
\Delta[\bd] \coloneqq \{
\underbrace{\Delta^{(1)}, \ldots, \Delta^{(1)}}_{d_1-1}, 
\underbrace{\Delta^{(2)}, \ldots, \Delta^{(2)}}_{d_2-1}, \ldots, 
\underbrace{\Delta^{(n)}, \ldots, \Delta^{(n)}}_{d_n-1}
\}\,.
\]
The mixed volume $\mathrm{MV}(\Delta[\bd])$ of $\Delta[\bd]$ is the coefficient of the monomial $\prod_{i=1}^n \prod_{j=1}^{d_i-1} \lambda_i^{(j)}$ in the volume of $\sum_{i=1}^n \sum_{j=1}^{d_i-1} \lambda_i^{(j)} \Delta^{(i)}$, and it coincides with $c(\bd)$ by the BKK theorem \cite{bernstein1975number, kouchnirenko1976polyedres}. Thus, to complete the proof, we show that~$\mathrm{MV}(\Delta[\bd])=0$ if and only if $d_n-1>\sum_{i=1}^{n-1} (d_j-1)$. 

Note that $\mathrm{MV}(\Delta[\bd]) = 0$ if and only if $\Delta[\bd]$ is ``dependent,'' i.e., there exist $B_1, B_2, \ldots, B_\delta \in \Delta[\bd]$ such that $\sum_{j=1}^\delta B_j < \delta$. See, for example, \cite[Theorem~5.1.8]{schneider2014convex}. The definition of the polytopes $\Delta^{(1)},\ldots,\Delta^{(n)}$ implies that the latter happens precisely when the following inequality holds:  
\begin{equation}\label{eq: inequality dim sum}
\dim (\underbrace{\Delta^{(n)}+\Delta^{(n)} + \cdots + \Delta^{(n)}}_{d_n-1}) < d_n-1\,.
\end{equation}
Since the left-hand side of \eqref{eq: inequality dim sum} is $\sum_{i=1}^{n-1} (d_i-1)$, the last part of the statement follows. 
\end{proof}

\begin{remark}\label{rmk: combinatorial formulas Nash special formats}
If each player of an $n$-player game has $2$ strategies, then
\[
c(2,\ldots,2) = n!\sum_{i=2}^n\frac{(-1)^j}{j!}\,,
\]
which is the number of derangements of $[n]$ or the integer sequence \cite[\href{http://oeis.org/A000166}{A000166}]{oeis} (see also \cite[Theorem 3.3]{mckelvey1997maximal}, \cite[Eq. (3.2)]{vidunas2017counting}). The values of $c\,(2,\ldots,2)$ for $n \in \{2, \ldots, 10\}$ are 
\[
1,2,9,44,265,1854,14833,133496,1334961
\]
respectively. 

Another interesting case is when $n=3$. If each player has $d$ strategies, then 
\[
c(d,d,d)=\sum_{j=0}^{d-1}\binom{d-1}{j}^3.  
\]
This is also known as a {\em Franel number}, see also \cite[\href{http://oeis.org/A000172}{A000172}]{oeis}.
\end{remark}

\begin{example}\label{ex: 3x3x3 number complex solutions}
Let $X$ be the three-player game of Example~\ref{ex: 3x3x3} and let $J$ be the ideal of Definition~\ref{def: Nash equilibria scheme}. We symbolically verified in \verb|Macaulay2| that $J$ defines a reduced zero-dimensional scheme $\kz_X$ of degree~$10$ in $\PP^2\times\PP^2\times\PP^2$. The degree of $\kz_X$ coincides with $c(3,3,3)$ of Theorem~\ref{thm: number tmNe generic game}:
\[
c(3,3,3) = \binom{2}{0}^3+\binom{2}{1}^3+\binom{2}{2}^3 = 1 + 8 + 1 = 10\,.
\]
As was discussed in Example \ref{ex: 3x3x3}, four of the ten points in $\kz_X$ are real.
The remaining six nonreal solutions are grouped into three pairs of complex-conjugated solutions because~$J$ is generated by polynomials with real coefficients.\hfill$\diamondsuit$
\end{example}

A natural question is whether $c(\bd)$ can be uniquely described via a multivariate generating function. The next result goes in this direction and is reported in \cite[Eq. (1.5)]{vidunas2017counting}. Its proof employs MacMahon's Master Theorem \cite[Section 3, Chapter 2, 66]{MacMahon1915combinatory}.

\begin{theorem}\label{thm: generating function Nash}
The generating function of the coefficients $c(\bd)$ is given by
\[
\sum_{\bd\in\Z_{\mge[2]}^n}c(\bd)\,{\bx}^d=\frac{x_1\cdots x_n}{\sum_{i=0}^{n}(1-i)\,e_i({\bx})}\,,
\]
where ${\bx}=(x_1,\ldots,x_n)$, $\bx^\bd=x_1^{d_1}\cdots x_n^{d_n}$, and $e_i({\bx})$ is the $i$th elementary symmetric polynomial in the entries of ${\bx}$.
\end{theorem}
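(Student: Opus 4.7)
The plan is to recast $c(\bd)$ as the diagonal coefficient extracted from powers of linear forms associated with a specific matrix, so that MacMahon's Master Theorem (MMT) applies directly. Recall that MMT states: for an $n\times n$ matrix $A=(a_{ij})$ and commuting variables $y_1,\ldots,y_n$, the coefficient $V(\bm)$ of $\prod_i y_i^{m_i}$ in $\prod_{i=1}^n \bigl(\sum_{j=1}^n a_{ij}\,y_j\bigr)^{m_i}$ satisfies
\[
\sum_{\bm\in\Z_{\mge[0]}^n} V(\bm)\,\by^\bm \;=\; \frac{1}{\det(I_n-\mathrm{diag}(\by)\,A)}\,.
\]
Setting $m_i=d_i-1$ and choosing $a_{ij}=1-\delta_{ij}$ so that $\sum_j a_{ij} h_j=\hat h_i$, the coefficient $V(\bm)$ equals $c(\bm+\bone)$ by the definition of $c(\bd)$ in Theorem~\ref{thm: number tmNe generic game}. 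Thus the relevant matrix is $A=J_n-I_n$, where $J_n$ is the all-ones matrix. Extending $c$ by zero whenever some $d_i=0$ (which is legitimate because $\hat{h}_i^{d_i-1}$ contributes no negative power of $h_i$), multiplying by $x_1\cdots x_n$ absorbs the index shift $\bd=\bm+\bone$, and MMT yields
\[
\sum_{\bd\in\Z_{\mge[0]}^n} c(\bd)\,\bx^\bd \;=\; \frac{x_1\cdots x_n}{\det\bigl(I_n-\mathrm{diag}(\bx)(J_n-I_n)\bigr)}\,.
\]

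It remains to identify the denominator with $\sum_{i=0}^n(1-i)\,e_i(\bx)$. Writing $M\coloneqq I_n-\mathrm{diag}(\bx)(J_n-I_n)$, a direct entry-wise check gives $M_{ii}=1$ and $M_{ij}=-x_i$ for $i\neq j$, so
\[
M \;=\; \mathrm{diag}(1+x_1,\ldots,1+x_n)-\bx\,\bone^{\mT}\,,
\]
with $\bx$ regarded as a column vector. The matrix determinant lemma then produces
\[
\det M \;=\; \prod_{i=1}^n(1+x_i)-\sum_{i=1}^n x_i\prod_{j\neq i}(1+x_j)\,.
\]
Expanding each product in elementary symmetric polynomials, the first term equals $\sum_{k=0}^n e_k(\bx)$. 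For the second, using the standard identity $\sum_{i=1}^n x_i\,e_k(\bx_{-i})=(k+1)\,e_{k+1}(\bx)$ (a double-counting of subsets of size $k+1$, where $\bx_{-i}$ omits the $i$th variable), one gets $\sum_{i=1}^n x_i\prod_{j\neq i}(1+x_j)=\sum_{k=1}^n k\,e_k(\bx)$. Subtracting and noting $(1-0)e_0(\bx)=1$ yields $\det M=\sum_{k=0}^n(1-k)\,e_k(\bx)$, completing the argument.

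The only genuine subtleties are bookkeeping: correctly identifying the matrix $A=J_n-I_n$ in MMT, and handling the shift $\bd\leftrightarrow\bm+\bone$ together with the vanishing of $c(\bd)$ on the coordinate hyperplanes so that the sum in MMT matches the sum in the statement. Once these are in place, the determinant computation is entirely routine, so I do not expect any serious obstacle.
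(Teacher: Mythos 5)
Your proof is correct and follows exactly the route the paper indicates: the paper does not spell out a proof but merely cites Vidunas and remarks that the argument ``employs MacMahon's Master Theorem.'' Your write-up supplies the details that the paper leaves implicit — identifying $A=J_n-I_n$ so that $\sum_j a_{ij}h_j=\hat h_i$, matching $V(\bm)$ with $c(\bm+\bone)$, handling the shift $\bd=\bm+\bone$ (with the convention $c(\bd)=0$ when some $d_i=0$), and then reducing $\det\bigl(I_n-\mathrm{diag}(\bx)(J_n-I_n)\bigr)$ to $\sum_{k=0}^n(1-k)e_k(\bx)$ via the matrix determinant lemma and the identity $\sum_i x_i\,e_k(\bx_{-i})=(k+1)e_{k+1}(\bx)$. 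All of these steps check out.
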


\begin{remark}
We compare the generating function of the coefficients $c(\bd)$ given in Theorem \ref{thm: generating function Nash} with other relevant generating functions.  We list three such examples below. 
\begin{enumerate}
    \item The degrees of hyperdeterminants of tensors of format $\bd$, see \cite[Theorem 2.4, Chapter 14]{GKZ}. 
    \item The number of singular vector tuples (see Definition \ref{def: singular tuples}) of a generic tensor of format $\bd$, see \cite[Proposition 1]{ekhad2016number}. 
    \item The degrees of Kalman varieties of tensors \cite[Theorem 2]{shahidi2021degrees}.
\end{enumerate}
In each case, the generating function admits a rational expression $F(\bx)/G(\bx)$ for some holomorphic functions $F$ and $G$. Thanks to \cite[Theorem 3.2]{raichev2008asymptotics}, it is possible to study the asymptotic behavior of the coefficients of the chosen generating function as long as the vector $\bd$ diverges along a certain direction. The most natural one is the asymptotic behavior along the main diagonal, namely when $d_1=\cdots=d_n=d$ and as $d\to \infty$. In the case of totally mixed Nash equilibria, Vidunas studied the asymptotic behavior of $c(d,\ldots,d)$ as $d\to\infty$. In particular \cite[Theorem 6.1]{vidunas2017counting} in our notation is expressed as follows:
\begin{equation}\label{eq: asymptotics Nash main diagonal}
    c\,(d,\ldots,d) = \frac{\sqrt{n}(n-1)^{nd-1}}{(2n(n-2)\pi d)^{\frac{n-1}{2}}}\left(1+O\left(\frac{1}{d}\right)\right)\,.
\end{equation}
It may be interesting to view \eqref{eq: asymptotics Nash main diagonal} alongside the asymptotics of the degrees of hyperdeterminants or the number of singular vector tuples of a hypercubic tensor. We refer to \cite[Remark 3.9]{ottaviani2021asymptotics} for more details.
\end{remark}

\subsection{Intermezzo: totally mixed Nash equilibria vs singular vector tuples}

We compare totally mixed Nash equilibria of an $n$-player game and singular vector tuples of tensors. In~2005, Lim introduced the concept of a singular vector tuple, which generalizes the concept of a singular vector pair of a rectangular matrix \cite{lim2005singular}.

\begin{definition}\label{def: singular tuples}
Let $n\in\Z_{\mge[2]}$ and let $\bd = (d_1,\ldots,d_n)\in\Z_{\mge[2]}^n$. For each $i \in [n]$, let $V_i$ denote a $d_i$-dimensional vector space over $\R$.  We denote by $V$ the tensor product of $V_1,\ldots,V_n$.
An $n$-tuple $\bv\coloneqq(v^{(1)},\ldots,v^{(n)}) \in \prod_{i=1}^n(V_i\setminus\{0\})$ is called a {\em singular vector tuple} of a tensor $T \in V$ if $\mathrm{rank}\,[T(\bv_{-i})\mid v^{(i)}]\le 1$ for each $i\in[n]$. 
\end{definition}

\noindent The same definition holds for complex tensors, but in this remark, we restrict ourselves to real ones. 

Definition~\ref{def: singular tuples} indicates that an $n$-tuple $(v^{(1)}, \ldots, v^{(n)})$ is a singular vector tuple of $T$ if and only if $(\mu_1 v^{(1)}, \ldots, \mu_n v^{(n)})$ is also a singular vector tuple of $T$ for any nonzero $\mu_1, \ldots, \mu_n \in \R$. Thus, we can unambiguously define a singular vector tuple of $T$ as the point $([v^{(1)}],\ldots,[v^{(n)}]) \in \PP^\bd$. The locus of singular vector tuples of $T$ is the closed subscheme in $\PP^\bd$ defined by all the $2 \times 2$ minors of the $d_i \times 2$ matrix $[T(\bv_{-i})\mid v^{(i)}]$ for every $i\in[n]$.

In their paper~\cite{FO}, Friedland and Ottaviani showed that the scheme of singular vector tuples of the tensor can be expressed as the zero scheme of a global section of a certain vector bundle on $\PP^\bd$ of rank $D=\dim\PP^\bd$. Furthermore, they derived a formula for the number of distinct singular vector tuples of a generic tensor by using the top Chern class of the vector bundle \cite[Theorem 1]{FO}. It is also worth mentioning that singular vector tuples of tensors are a special case of {\em singular vector tuples of hyperquiver representations} introduced and studied in \cite{muller2025multilinear}.

We show that both finding a totally mixed Nash equilibrium of an $n$-player game $X=(X^{(1)},\ldots, X^{(n)})\in V^{\oplus n}$ and computing a singular vector tuple of a tensor $T\in V$ can be formulated as optimization problems:

(1) To compute the totally mixed Nash equilibria of $X=(X^{(1)},\ldots, X^{(n)})\in V^{\oplus n}$, one maximizes the expected payoff of each player of $X$ subject to the law of total probability. Recalling Definition \ref{def: tmNe}, a vector of probability distributions $\bpi = (\pi^{(1)},\ldots,\pi^{(n)}) \in \prod_{i=1}^n \Delta_{d_i-1}^\circ$ is a totally mixed Nash equilibrium of $X$ when, for every $i \in [n]$, there exists a $\lambda_i \in \R$ such that $X^{(i)}(\bpi_{-i})-\lambda_i \cdot\bone = \bzero$.
Since $X^{(i)}(\bpi_{-i})$ and $\bone$ are the gradients of player $i$'s expected payoff $\pi^{(1)}\otimes\cdots\otimes\pi^{(n)} \cdot X^{(i)}$ introduced in \eqref{eq: expected payoff} and the linear function $\sum_{k=1}^{d_i} \pi_k^{(i)} - 1$ respectively, the problem of finding a totally mixed Nash equilibrium of $X$ is the same as the problem of finding local extreme values of $\pi^{(1)}\otimes\cdots\otimes\pi^{(n)} \cdot X^{(i)}$ subject to $\sum_{k=1}^{d_i} \pi_k^{(i)} - 1=0$, or the problem of finding the simultaneous critical points of the functions
\[
\pi^{(1)}\otimes\cdots\otimes\pi^{(n)} \cdot X^{(i)} - \lambda_i\left(\sum_{k=1}^{d_i} \pi_k^{(i)} - 1\right)\colon \R^{d_i} \times \R \to \R
\]
with Lagrangian multipliers $\lambda_i$.

(2) For every $i\in [n]$, let $\ks^i\subset V_i$ be the sphere of equation $\sum_{k=1}^{d_i} (v_k^{(i)})^2 - 1=0$, where $v^{(i)}=(v_1^{(i)},\ldots,v_{d_i}^{(i)})$. The image of the map $\prod_{i=1}^n\ks^i\times\R\to V$ defined by $(v^{(1)},\ldots,v^{(n)},\lambda)\mapsto\lambda\,v^{(1)}\otimes\cdots\otimes v^{(n)}\in V$ is the affine cone $\kc$ over the Segre embedding of the real multiprojective space $\PP^\bd=\prod_{i=1}^n\PP(V_i)$. We equip $V$ with the Frobenius inner product induced by the spheres $\ks^1,\ldots,\ks^n$. A {\em best rank-one approximation} of a real tensor $T\in V$ is a global minimizer of the Frobenius distance function from $T$, restricted to $\kc$.
As described in \cite[Section 7]{FO}, computing a best rank-one approximation of $T$ is equivalent to maximizing the objective function $v^{(1)}\otimes\cdots\otimes v^{(n)} \cdot T$ over $\prod_{i=1}^n\ks^i$. The latter problem is solved by computing the critical points of the function
\begin{equation}\label{eq: optimization problem singular vector tuples}
v^{(1)}\otimes\cdots\otimes v^{(n)} \cdot T - \sum_{i=1}^n \lambda_i\left(\sum_{k=1}^{d_i} (v_k^{(i)})^2 - 1\right) \colon \left(\prod_{i=1}^n \R^{d_i}\right) \times \R^n \to \R
\end{equation}
with Lagrange multipliers $\lambda_i$.
Since $T(\bv_{-i})$ and $v^{(i)}$ are the gradients of $v^{(1)}\otimes\cdots\otimes v^{(n)} \cdot T$ and $\sum_{k=1}^{d_i} (v_k^{(i)})^2 - 1$ respectively, the critical points of \eqref{eq: optimization problem singular vector tuples} correspond to the singular vector tuples of $T$.

To conclude this remark, we highlight a relevant difference between totally mixed Nash equilibria and singular vector tuples. A consequence of \cite[Theorem 1]{FO} is that a generic tensor of format $\bd=(d_1,\ldots,d_n)$ always has at least one singular vector tuple. Furthermore, if $d_1\le\cdots\le d_n$, their number is nondecreasing in $d_n$ and stabilizes when $d_n-1=\sum_{i=1}^{n-1} (d_i-1)$, or when $\bd$ is a boundary format. A geometric description of this phenomenon is described in \cite{ottaviani2021asymptotics}.
Instead, Theorem \ref{thm: number tmNe generic game} implies that a generic game of format $\bd$ with $d_n-1>\sum_{i=1}^{n-1} (d_i-1)$ does not admit totally mixed Nash equilibria.

\section{Nash discriminants and Nash resultants}\label{sec: Nash discriminants}

The primary focus of the previous section was on the expected numbers of totally mixed Nash equilibria of games and their asymptotic behaviors. This section is devoted to studying games with unexpected numbers of totally mixed Nash equilibria and the sets formed by such games. In doing so, we consider the following two cases separately: when the formats of games are balanced and when they are unbalanced. 

If $\bd = (d_1, \ldots, d_n)$ with $d_1 \leq \cdots \leq d_n$ is balanced, i.e., if it satisfies $d_n-1\le\sum_{i=1}^{n-1}(d_i-1)$, then a game of format $\bd$ is anticipated to have an unexpected number of totally mixed Nash equilibria when its Nash equilibrium scheme has an unexpected number of points, or equivalently if it is nonreduced of dimension $0$ (so that its reduced scheme consists of less than the expected number of points) or has a positive dimensional component (so that it contains infinitely many points). We call the variety parameterizing the games whose Nash equilibrium schemes have unexpected numbers of points the {\em Nash discriminant variety}. The primary purpose of Section~\ref{subsec: Nash discriminant} is to show that the real part of the Nash discriminant variety has codimension one. A detailed study of the geometry of the Nash discriminant variety of games for a specific format will be presented in Section~\ref{subsec: 2x2x2 game}. 

The Nash discriminant variety is not always irreducible. In Section~\ref{subsec: Nash discriminant boundary format}, we show that it is reducible and consists of an irreducible hyperplane and a variety of higher codimension if the format is at the boundary between the balanced and unbalanced cases.  

If $\bd$ is unbalanced, then most games of format $\bd$ have no totally mixed Nash equilibria. For the game to have a positive number of totally mixed Nash equilibria, it is necessary for its Nash equilibrium scheme not to be empty. The variety parameterizing games with nonempty Nash equilibrium schemes is called the {\em Nash resultant variety}. In Section~\ref{subsec: Nash resultant}, we prove that the Nash resultant variety is irreducible and give formulas for its dimension and degree.

\subsection{The Nash discriminant variety}\label{subsec: Nash discriminant}

As was discussed in Section \ref{subsec: Nash}, the expected number of totally mixed Nash equilibria of an $n$-player game of format $\bd=(d_1,\ldots, d_n)$ with $d_n-1\le\sum_{i=1}^{n-1}(d_i-1)$ is nonzero and is computed in Theorem \ref{thm: number tmNe generic game}. However, some games have unexpected numbers of totally mixed Nash equilibria. For example, McKelvey and McLennan discussed examples of three-player games with two pure strategies for each player that admit only one totally mixed Nash equilibrium, while the expected number of totally mixed Nash equilibria of such games is two \cite[Section 6]{mckelvey1997maximal}.
There are also games with infinitely many totally mixed Nash equilibria. One example is given in \cite[Section 4]{bubelis1979equilibria} for a six-player binary game where the Nash equilibrium scheme is a one-dimensional manifold. Below, we present yet another such example in detail. 

\begin{example}\label{ex: 222 with infinitely many tmNes}
Keep the same notation as in Section~\ref{ssec:prelim}. Let $X=(X^{(1)},X^{(2)},X^{(3)})$ be the three-player game with the payoff tensors 
\[
X^{(1)} =  
\begin{bmatrix}[cc:cc]
1 & 3 & 2 & 1\\
3 & 2 & 2 & 3
\end{bmatrix}\,,\quad
X^{(2)} =
\begin{bmatrix}[cc:cc]
3 & 2 & 2 & 1\\
3 & 3 & 2 & 4
\end{bmatrix}\,,\quad
X^{(3)} = 
\begin{bmatrix}[cc:cc]
3 & 4 & 5 & 1\\
5 & 1 & 1 & 3
\end{bmatrix}\,.
\]
The Nash equilibrium scheme $\kz_X$ associated with $X$ is defined by the following system: 
\begin{equation}\label{eq: example system 222 game}
\begin{cases}
    0 = \varDelta f_{1,2}^{(1)} &= -2\pi_1^{(2)}\pi_1^{(3)}+\pi_2^{(2)}\pi_1^{(3)}-2\pi_2^{(2)}\pi_2^{(3)}\\ 
    0 = \varDelta f_{1,2}^{(2)} &= \pi_1^{(1)}\pi_1^{(3)}+\pi_1^{(1)}\pi_2^{(3)}-2\pi_2^{(1)}\pi_2^{(3)}\\
    0 = \varDelta f_{1,2}^{(3)} &= -2\pi_1^{(1)}\pi_1^{(2)}+3\pi_1^{(1)}\pi_2^{(2)}+4\pi_2^{(1)}\pi_1^{(2)}-2\pi_2^{(1)}\pi_2^{(2)}\,.
\end{cases} 
\end{equation}
It is straightforward to check that the system \eqref{eq: example system 222 game} has a solution 
\[
([a:b],[-3a+2b:-2a+4b],[-a+2b:a])\in \PP^1 \times \PP^1 \times \PP^1 
\]
for every $[a:b] \in \PP^1$. Therefore, the Nash equilibrium scheme $\kz_X$ contains a curve isomorphic to $\PP^1$ as an irreducible component. In particular, the triples of the vectors
\[
\left(
\left(
\frac{a}{a+b},\, \frac{b}{a+b}
\right), \, 
\left(
\frac{-3a+2b}{-5a+6b},\, \frac{-2a+4b}{-5a+6b}
\right), \, 
\left(
\frac{-a+2b}{2b},\, \frac{a}{2b} 
\right)
\right)\in \Delta_1 \times \Delta_1 \times \Delta_1
\]
are totally mixed Nash equilibria, and hence $X$ admits infinitely many totally mixed Nash equilibria.\hfill$\diamondsuit$
\end{example}

\begin{definition}\label{def: Nash discriminant variety}
Let $n\in\Z_{\mge[2]}$, and let $\bd=(d_1,\ldots,d_n)\in\Z_{\mge[2]}^n$ satisfying the inequalities $d_1\le\cdots\le d_n$ and $d_n-1\le\sum_{i=1}^{n-1}(d_i-1)$. For each $i \in [n]$, we write $V_i$ for a $d_i$-dimensional vector space over $\C$, and $V$ denotes the tensor product of $V_1,\ldots,V_n$. Let $\ku$ be the Zariski open set of $\PP V^{\oplus n}$ consisting of the elements $[X]$ whose Nash equilibrium schemes are reduced of codimension $D=\sum_{i=1}^n(d_i-1)$. We call $\Delta(\bd)\coloneqq \PP V^{\oplus n}\setminus\ku$ the {\em Nash discriminant variety of games of format $\bd$}. 
\end{definition}

\begin{proposition}\label{prop: discriminant_two_players}
If $d \ge 2$, then the Nash discriminant variety $\Delta(d,d)$ has two irreducible components of codimension two and degree $\binom{d}{2}$.
\end{proposition}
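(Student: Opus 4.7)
The plan is to exploit the fact that for $n=2$ the Nash equations decouple completely. Writing $X=(X^{(1)},X^{(2)})$, each polynomial $\varDelta f_{1,k}^{(1)}$ has multidegree $\bone_1=(0,1)$ and therefore depends only on $\pi^{(2)}$, while each $\varDelta f_{1,k}^{(2)}$ depends only on $\pi^{(1)}$. Collecting coefficients, I would assemble these into two $(d-1)\times d$ matrices $A=A(X^{(1)})$ and $B=B(X^{(2)})$, with entries linear in the corresponding payoff tensor, such that the assignments $X^{(1)}\mapsto A$ and $X^{(2)}\mapsto B$ are surjective linear maps onto $\mathrm{Mat}_{(d-1)\times d}(\C)$. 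The Nash equilibrium scheme then splits as a scheme-theoretic product
\[
\kz_X \;=\; \bigl\{[\pi^{(1)}]\in\PP^{d-1}:B\,\pi^{(1)}=0\bigr\}\,\times\,\bigl\{[\pi^{(2)}]\in\PP^{d-1}:A\,\pi^{(2)}=0\bigr\}.
\]

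I would then analyze each factor separately. Being cut out by $d-1$ linear forms in $\PP^{d-1}$, the first factor is a single reduced point exactly when $B$ has full rank $d-1$, and is a positive-dimensional linear subvariety otherwise; symmetrically for the second factor. Consequently $\kz_X$ is reduced of codimension $D=2(d-1)$ if and only if both $A$ and $B$ have rank $d-1$, which recovers $c(d,d)=1$ from Theorem~\ref{thm: number tmNe generic game} and shows that no purely nonreduced zero-dimensional degeneracy can occur. Hence
\[
\Delta(d,d)\;=\;\Delta_1\cup\Delta_2,
\]
where $\Delta_1$ (respectively $\Delta_2$) is the locus of $[X]\in\PP V^{\oplus 2}$ such that $A$ (respectively $B$) has rank at most $d-2$.

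The computation of codimension and degree is now classical. The variety $\mathcal{M}_{\le d-2}\subset\mathrm{Mat}_{(d-1)\times d}(\C)$ of matrices of rank at most $d-2$ is irreducible of codimension $((d-1)-(d-2))(d-(d-2))=2$ and degree $\binom{d}{d-2}=\binom{d}{2}$, by the Harris--Tu/Porteous formula. Each $\Delta_i$ is the preimage of $\mathcal{M}_{\le d-2}$ under one of the two surjective linear maps above, and its projectivization is a projective join of $\PP\mathcal{M}_{\le d-2}$ with a complementary linear subspace of $\PP V^{\oplus 2}$; such a join preserves both codimension and degree. Hence each $\Delta_i$ is irreducible of codimension $2$ and degree $\binom{d}{2}$. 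Finally $\Delta_1$ and $\Delta_2$ are cut out by equations in disjoint payoff-coordinate blocks, so neither contains the other, and together they are the two distinct irreducible components of $\Delta(d,d)$.

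The only step that requires real care is checking that the two rank conditions genuinely exhaust all failure modes of the discriminant condition; this is precisely the observation above that a full-rank linear system on $\PP^{d-1}$ cuts out a single reduced point, ruling out any hidden nonreduced zero-dimensional behavior. The remainder is a direct invocation of classical determinantal geometry.
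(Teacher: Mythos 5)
Your proof is correct and follows essentially the same route as the paper's. Both arguments exploit the decoupling of the two linear systems in the $n=2$ case, identify $\Delta(d,d)$ as the union of the two loci where the respective $(d-1)\times d$ coefficient matrix drops rank, note that the entries of each coefficient matrix are linearly independent linear forms on the corresponding payoff block (so the map to the matrix space is surjective), realize each locus as a cone/join over the generic determinantal variety of rank $\le d-2$ matrices of that format, and invoke the classical codimension-$2$ and degree-$\binom{d}{2}$ facts. Your added observation that a full-rank linear system in $\PP^{d-1}$ cannot produce a nonreduced zero-dimensional scheme, so that rank drop is the \emph{only} failure mode, is implicit in the paper and worth spelling out as you do; likewise your explicit remark that the two components lie in disjoint coordinate blocks and therefore neither contains the other. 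No gap.
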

\begin{proof}
If $X = (X^{(1)},X^{(2)}) \in V^{\oplus 2}$, then its Nash equilibrium scheme $\kz_X$ is defined by the following two linear systems of $d-1$ equations in $d$ variables:
\begin{equation}\label{eq: linear system 2 players}
\begin{cases}
    0 = \varDelta f_{1,2}^{(1)} = \sum_{j=1}^{d}(x_{1j}^{(1)}- x_{2j}^{(1)})\,\pi_{j}^{(2)}\\
    \quad\vdots\\
    0 = \varDelta f_{1,d}^{(1)} = \sum_{j=1}^{d}(x_{1j}^{(1)}- x_{dj}^{(1)})\,\pi_{j}^{(2)}
\end{cases}
\begin{cases}
    0 = \varDelta f_{1,2}^{(2)} = \sum_{j=1}^{d}(x_{j1}^{(2)}- x_{j2}^{(2)})\,\pi_{j}^{(1)}\\
    \quad\vdots\\
    0 = \varDelta f_{1,d}^{(2)} = \sum_{j=1}^{d}(x_{j1}^{(2)}- x_{jd}^{(2)})\,\pi_{j}^{(1)}\,.
\end{cases}
\end{equation}
By Theorem \ref{thm: number tmNe generic game}, a generic game $X$ admits exactly one totally mixed Nash equilibrium, and hence $X \in \Delta(d,d)$ if and only if $\kz_X$ has a positive dimensional component, or equivalently the $(d-1) \times (d-1)$ minors of the coefficient matrix of at least one of the linear systems in \eqref{eq: linear system 2 players} vanish. Therefore, the Nash discriminant variety $\Delta (d,d)$ is the union of the determinantal varieties defined by the maximal minors of the coefficient matrices of the linear systems in (\ref{eq: linear system 2 players})
\[
\begin{pmatrix}
x_{11}^{(1)}-x_{21}^{(1)} & \cdots & x_{1d}^{(1)}-x_{2d}^{(1)} \\
\vdots & & \vdots \\
x_{11}^{(1)}-x_{d1}^{(1)} & \cdots & x_{1d}^{(1)}-x_{dd}^{(1)} 
\end{pmatrix}, \, 
\begin{pmatrix}
x_{11}^{(1)}-x_{21}^{(1)} & \cdots & x_{1d}^{(1)}-x_{2d}^{(1)} \\
\vdots & & \vdots \\
x_{11}^{(1)}-x_{d1}^{(1)} & \cdots & x_{1d}^{(1)}-x_{dd}^{(1)} 
\end{pmatrix}. 
\]
Note that the linear entries of each of the coefficient matrices are linearly independent, as explained in the proof of Proposition \ref{prop: open subset generic global sections E}. Hence, after a suitable linear change of coordinates, the determinantal variety defined by the maximal minors of each coefficient matrix can be regarded as a cone over the projective determinantal variety in $\PP(\C^{(d-1)\times d})$ of the generic $(d-1)\times d$ matrix of rank at most $d-2$. 
Therefore, Proposition~\ref{prop: discriminant_two_players} follows from \cite[Chapter II.5]{arbarello1985geometry}.
\end{proof}

\begin{remark}\label{rmk: Nash discriminant cone over discriminant E} Let $E$ be the vector bundle on $\PP^\bd$ defined in \eqref{eq: Nash vector bundle}. As explained in Proposition \ref{prop: open subset generic global sections E}, there exists a surjective linear map from $V^{\oplus n}$ to $H^0(\PP^\bd,E)$. This implies that the Nash discriminant variety $\Delta(\bd)$ is a cone over the discriminant variety of~$E$
\[
    \Delta(E)\coloneqq \PP H^0(\PP^\bd,E) \setminus \left\{\left. [f]\in \PP H^0(\PP^\bd,E) \, \right| \, \mbox{$Z(f)$ is reduced of dimension $0$}\right\}.
\]
Therefore, their invariants, such as codimension and degree, are the same. Furthermore, one can derive the properties of $\Delta(\bd)$, such as irreducibility, from $\Delta(E)$. Thus, we focus on studying  $\Delta(E)$ for the rest of the section.
\end{remark}

We show that if $n\ge 3$, then $\Delta(E)$ has a hypersurface component.
The following result is essentially \cite[Theorem 4.1]{mckelvey1997maximal}. We provide its proof because it will be used in Lemma \ref{lem: section with almost all real zeros}.

\begin{lemma}\label{lem: section with all real zeros}
Let $E$ be the vector bundle on $\PP^\bd$ introduced in \eqref{eq: Nash vector bundle}.
There exists a real global section $f$ of $E$ whose zero scheme is reduced of dimension~$0$ and consists only of real points. 
\end{lemma}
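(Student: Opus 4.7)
The plan is to give an explicit real section $f \in H^0(\PP^\bd, E)$ whose zero scheme realizes all $c(\bd)$ zeros over $\R$. The ansatz is a product of linear forms: for every $i \in [n]$ and $k \in [d_i-1]$, set
\[
f_k^{(i)}(\bpi_{-i}) \;=\; \prod_{j \neq i} L^{(i,k)}_j(\pi^{(j)}),
\]
where each $L^{(i,k)}_j$ is a real linear form in $\pi^{(j)}$ with coefficients to be chosen generically. Since $L^{(i,k)}_j$ is homogeneous of degree $1$ in $\pi^{(j)}$ and constant in every other $\pi^{(\ell)}$, the product $f^{(i)}_k$ is multihomogeneous of multidegree $\bone_i$, and $f = (f^{(i)}_k)_{i,k}$ is a well-defined real global section of $E = \bigoplus_{i=1}^n \ko(\bone_i)^{\oplus(d_i-1)}$. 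Moreover, the Newton polytope of $f^{(i)}_k$ is the Minkowski sum $\sum_{j \neq i} \Delta_{d_j - 1}$, which is exactly the polytope $\Delta^{(i)}$ used in the BKK step in the proof of Theorem~\ref{thm: number tmNe generic game}.

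The next step is a combinatorial decomposition of the zero locus. If $\bpi \in Z(f)$, then for each pair $(i,k)$ some factor $L^{(i,k)}_{\sigma(i,k)}$ must vanish on $\pi^{(\sigma(i,k))}$; this defines a function $\sigma$ from $\{(i,k) : i\in[n],\, k\in[d_i-1]\}$ to $[n]$ with $\sigma(i,k) \neq i$. The constraints on $\pi^{(j)}$ are then the $|\sigma^{-1}(j)|$ linear equations $L^{(i,k)}_j(\pi^{(j)}) = 0$ indexed by $(i,k) \in \sigma^{-1}(j)$. For $\bpi$ to be a $0$-dimensional component of $Z(f)$, one needs $|\sigma^{-1}(j)| = d_j - 1$ for every $j$, which is a feasible distribution since $\sum_i(d_i-1) = D = \sum_j(d_j-1)$. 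Under such a $\sigma$, and for a generic real choice of the $L^{(i,k)}_j$, the resulting real linear system on $\pi^{(j)} \in \PP^{d_j-1}$ has a unique real solution lying in the open torus.

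I would then match the number of valid configurations with $c(\bd)$ via the Bernstein-Khovanskii-Kushnirenko theorem. Applied to $f$, whose Newton polytopes are precisely the $\Delta^{(i)}$'s, BKK asserts that the generic number of torus zeros of $f$ is $\mathrm{MV}(\Delta[\bd]) = c(\bd)$. On the other hand, every valid configuration produces a solution, and distinct configurations give distinct solutions for generic linear forms. Hence there are exactly $c(\bd)$ valid configurations, each contributing exactly one real zero of $f$, and $Z(f)$ consists of $c(\bd)$ distinct real points.

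The main obstacle is the simultaneous genericity argument needed for reducedness and for the count to be tight. One has to verify that a single real choice of the $L^{(i,k)}_j$ satisfies at once: (a) every zero of $f$ lies in the open torus of $\PP^\bd$, so that only one factor of each $f^{(i)}_k$ vanishes at each zero; (b) different configurations produce distinct points of $\PP^\bd$; (c) the Jacobian of $f$ has full rank $D$ at each zero, so that $Z(f)$ is reduced. Each condition defines a nonempty Zariski-open subset of the real parameter space of coefficients of the $L^{(i,k)}_j$; their intersection is nonempty, producing a section $f$ with the required properties.
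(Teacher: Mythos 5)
Your proposal is essentially the same construction as the paper's proof: take each coordinate section to be a product of generic real linear forms in the ``other'' players' variables, so that each isolated zero is forced to lie in the open torus and corresponds to choosing, for each $(i,k)$, which factor vanishes. Your map $\sigma$ with the balance condition $|\sigma^{-1}(j)| = d_j - 1$ and $\sigma(i,k)\neq i$ is exactly the paper's notion of a \emph{block derangement} of the index set $F = \bigcup_i F_i$, and the reality of the solutions follows immediately from the reality of the generic linear forms, as you say.

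The one place where the two arguments genuinely diverge is the final counting step, and yours is the weaker of the two. The paper simply cites the combinatorial identity $\#\{\text{block derangements of } F\} = c(\bd)$ (from Vidunas and McKelvey--McLennan) and concludes $|\kz(f)| = c(\bd)$. You instead invoke BKK, asserting that because $f$ has Newton polytopes $\Delta^{(i)}$, it has exactly $\mathrm{MV}(\Delta[\bd]) = c(\bd)$ torus zeros. But BKK equality holds for \emph{coefficient-generic} systems with those Newton polytopes, whereas your $f$ lives in the much lower-dimensional subfamily of products of linear forms, so you owe a verification that this specific $f$ is Bernstein-nondegenerate (no face system has a common torus zero). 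Without that, you only get the BKK upper bound, which combined with your distinctness argument yields $\#\{\text{configs}\}\le c(\bd)$, not equality. The cleanest repair is exactly what the paper does: observe that $\#\{\text{configs}\}$ is a purely combinatorial count, independent of the coefficients, and equals $c(\bd)$ by the known formula for restricted partitions; then reducedness follows from your Jacobian genericity condition (c). Alternatively, you could carry out the Bernstein-nondegeneracy check for products of generic linear forms, which is doable but more work than the citation.
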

\begin{proof}
For each $i \in [n]$, $k \in [n]\setminus \{i\}$, and $j \in [d_i-1]$, let $\ell^{(i)}_{j,k}$ be a generic linear form in $\pi^{(k)}$ with real coefficients, in the following sense: For every $k \in [n]$, any subset $S$ of $\{\ell_{j,k}^{(i)}\mid\text{$i\in[n]\setminus\{k\}$, $j\in[d_i-1]$}\}$ defines a linear subspace of $\PP^{d_k-1}$ of codimension $|S|$. If $f^{(i)}_j \coloneqq \prod_{k \in [n]\setminus\{i\}} \ell^{(i)}_{j,k} \in H^0(\PP^\bd,\ko(\bone_i))$, then define $f^{(i)}\coloneqq (f_1^{(i)},\ldots,f_{d_i-1}^{(i)})\in H^0(\PP^\bd,\ko(\bone_i)^{\oplus(d_i-1)})$ and $f\coloneqq (f^{(1)},\ldots,f^{(n)})\in H^0(\PP^\bd,E)$.
We prove that $f$ satisfies the desired property. 

Let
\begin{equation}\label{eq: sets F_i and F}
F_i\coloneqq\{(i,j)\mid j\in[d_i-1]\}\text{ for every $i\in[n]$,}\quad F\coloneqq \bigcup_{i=1}^n F_i\,.
\end{equation}
Given a $[\bpi] = ([\pi^{(1)}], \ldots, [\pi^{(n)}]) \in \PP^\bd$, define the subsets $A_1, \ldots, A_n$ of $F$ by $(i,j) \in A_k$ if and only if $\ell^{(i)}_{j,k} (\pi^{(k)}) = 0$. Because of the genericity of the linear forms $\ell^{(i)}_{j,k}$, the point $[\bpi]$ is in $\kz(f)$ precisely when the following two conditions are satisfied: (1) For every $k \in [n]$, there exists a unique subset of $\{\ell_{j,k}^{(i)}\mid\text{$i\in[n]\setminus\{k\}$, $j\in[d_i-1]$}\}$ with $d_k-1$ elements that defines $[\pi^{(k)}]$, and (2) for each $(i,j) \in F$, there exists a unique integer $k \in [n]\setminus \{i\}$ such that $(i,j) \in A_k$. In other words, $[\bpi]\in\kz(f)$ if and only if the associated subsets $A_1, \ldots, A_n$ are unique and form a partition of $F$ such that $|A_i|=d_i-1$ and $F_i\cap A_i=\emptyset$ for all $i\in[n]$. The restricted partition $\{A_1,\ldots,A_n\}$ of $F$ (with respect to the sets $F_i$) is called a {\em block derangement} in \cite[Section 1]{vidunas2017counting}. As explained in \cite[Section 2]{vidunas2017counting} or in \cite[Section 3]{mckelvey1997maximal}, the number of block derangements of $F$ is known to be equal to the integer $c(\bd)$ introduced in Theorem~\ref{thm: number tmNe generic game}. We conclude that the zero scheme of $f$ consists of $c(\bd)$ distinct points, and hence it is reduced of dimension $0$. Additionally, these points are all real because so are the coefficients of the linear forms $\ell^{(i)}_{j,k}$.
\end{proof}

\begin{lemma}\label{lem: section with almost all real zeros}
There exists a real global section $f$ of $E$ whose zero scheme is reduced of dimension~$0$ and contains at least two nonreal points. 
\end{lemma}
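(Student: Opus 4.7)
My plan is to start from the section $f_0$ produced by Lemma~4.4, whose zero scheme consists of $c(\bd)$ distinct real points indexed by block derangements of $F$, and deform it along a real one-parameter family until exactly two of its real zeros collide at a real double point and then emerge as a pair of complex conjugate simple zeros, while the remaining $c(\bd)-2$ zeros stay real and simple. The output of this deformation will be the desired section $f$.

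Concretely, I would fix two block derangements $A, B$ of $F$ which differ by swapping a single element between two blocks $A_{k_1}$ and $A_{k_2}$, and let $p_A, p_B$ be the corresponding real zeros of $f_0$. I would then perturb only the two linear forms $\ell^{(i_1)}_{j_1,k_1}$ and $\ell^{(i_1)}_{j_1,k_2}$ of the Lemma~4.4 construction whose identities distinguish $p_A$ from $p_B$, each by $t$ times a chosen real linear form, producing a one-parameter family $\{f_t\}_{t\in\R}\subset H^0(\PP^\bd,E)_\R$ with $f_{t=0}=f_0$. Since $f_0$ is reduced of dimension zero, the implicit function theorem gives real-analytic arcs $p_i(t)$ continuing each real zero of $f_0$ for $t$ near $0$; a direct linear calculation in each of the two perturbed factors $\PP^{d_{k_1}-1}$ and $\PP^{d_{k_2}-1}$ shows that at a well-chosen value $t^*>0$ the arcs $p_A(t)$ and $p_B(t)$ meet at a common real point $p^* \coloneqq p_A(t^*)=p_B(t^*)$, so that the zero scheme of $f_{t^*}$ acquires a real double point at $p^*$. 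By the genericity built into Lemma~4.4, the remaining $c(\bd)-2$ arcs depend only weakly on the perturbation (only one, or none, of the perturbed forms enters the hyperplane systems defining them), so they remain real, simple, and mutually distinct for $t$ in a neighborhood of $[0,t^*]$.

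The final step is a local analysis near $(p^*,t^*)$. Applying the implicit function theorem to $D-1$ of the $D$ defining equations reduces the system to a single real equation in one real variable $x$, of the normal form $\phi(x,t)=\alpha x^2+\beta(t-t^*)+(\text{higher order})$ with $\alpha,\beta\in\R$; the solutions are $x=\pm\sqrt{-\beta(t-t^*)/\alpha}$. Because the two colliding zeros are real for $t$ just below $t^*$ by construction, we must have $\alpha\beta>0$, and therefore for $t$ slightly greater than $t^*$ these solutions are purely imaginary, meaning that $p_A(t)$ and $p_B(t)$ leave the real locus as a complex conjugate pair. For such $t$, the zero scheme of $f_t$ is reduced of dimension zero with exactly $c(\bd)-2$ real and $2$ nonreal points, proving the lemma. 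The main obstacle is verifying that the collision is non-degenerate, i.e., that $\alpha\beta\neq 0$ and that no third zero is simultaneously involved; both conditions are Zariski-open on the space of perturbation directions and can be arranged by a small further adjustment of the two perturbing linear forms, without disturbing the rest of the construction.
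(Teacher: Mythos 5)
There is a genuine and fatal gap in your proposed deformation.

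Your family $\{f_t\}$ stays, for every real $t$, inside the locus of sections whose components $f^{(i)}_j$ factor as products of \emph{real} linear forms: replacing $\ell^{(i_1)}_{j_1,k_1}$ and $\ell^{(i_1)}_{j_1,k_2}$ by $\ell^{(i_1)}_{j_1,k}+t\,m_k$ still yields $f^{(i_1)}_{j_1,t}=\prod_{k\neq i_1}\ell^{(i_1)}_{j_1,k}(t)$ as a product of real linear forms. But, as the proof of the previous lemma shows, the zero-dimensional zeros of such a factored section are exactly the points indexed by block derangements, and for each block derangement $C$ the associated point $[\bpi_C]$ is the unique solution of a real linear system $\{\ell^{(i)}_{j,k}=0 \mid (i,j)\in C_k\}$ factor by factor; hence every simple zero is real, for \emph{every} value of $t$. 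Consequently, after your two arcs $p_A(t),p_B(t)$ collide at $t^*$ (if they do), they separate again as two distinct \emph{real} points for $t>t^*$ -- they pass through each other rather than escaping into a complex conjugate pair. Your normal-form reasoning ($\alpha\beta>0$ forcing the solutions to become imaginary) cannot be correct here, because it contradicts the algebraic fact that all zeros remain real; in this factored stratum the normal form at a collision is the "exchange" type, not the "fold into $\C$" type.

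The missing idea is that one must \emph{leave} the locus of factored sections. This is precisely what the paper's construction does: it keeps the factored structure on all but two of the components, and on those two it replaces the products $\ell^{(\lambda)}_{\mu,1}\ell^{(\lambda)}_{\mu,2}$ and $\ell^{(\xi)}_{\eta,1}\ell^{(\xi)}_{\eta,2}$ by genuinely irreducible real bilinear forms $b^{(\lambda)}_\mu,b^{(\xi)}_\eta$ chosen so that the resulting $2\times 2$ bilinear system restricted to $\kl_1\times\kl_2$ has negative discriminant. That negative discriminant is where the complex conjugate pair comes from, and the rest of the paper's argument is a careful bookkeeping (via the systems $S_1,\dots,S_5$) to check that all $c(\bd)$ block derangements still contribute distinct simple zeros. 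A secondary point: your proposal also misses the fact that the format $\bd=(2,2,2)$ has no pair of block derangements with the required overlap structure, so it must be handled by an explicit separate example, as the paper does.
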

\begin{proof}
We utilize block derangements to construct a real global section of $E$ whose zero scheme consists of $c(\bd)$ distinct points and has at least one pair of complex points.
Let $F_i$ and $F$ be the sets defined in \eqref{eq: sets F_i and F}. 
Fix two block derangements $\{A_1,\ldots,A_n\}$ and $\{B_1,\ldots,B_n\}$ of $F$ such that $A_k=B_k$ for all $k\in\{3,\ldots,n\}$, while $|A_k\cap B_k|=d_k-2$ if $k\in [2]$.
Notice that, under the assumptions $n\ge 3$ and $d_n-1\le\sum_{i=1}^{n-1}(d_i-1)$, there exist two such block derangements unless $\bd = (2,2,2)$. So, we treat this case separately. In this specific format, one verifies that the zero scheme of the global section
\[
f = (f^{(1)},f^{(2)},f^{(3)}) = (\pi_1^{(2)}\pi_2^{(3)}-\pi_2^{(2)}\pi_1^{(3)}, \pi_1^{(1)}\pi_2^{(3)}-\pi_2^{(1)}\pi_1^{(3)},\pi_1^{(1)}\pi_1^{(2)}+\pi_2^{(1)}\pi_2^{(2)})
\]
of $E$ is reduced of dimension zero and consists of two distinct nonreal points.

Next, assume that $\bd\neq(2,2,2)$. We define a real global section $f$ of $E$ as follows:

$(i)$ If $k\in\{3,\ldots,n\}$ and $(i,j)\in A_k=B_k$, define $f_j^{(i)}\coloneqq\prod_{r\neq i}\ell_{j,r}^{(i)}$ where $\ell^{(i)}_{j,r}$ is a generic linear form in $\pi^{(r)}$ such that $\{\ell^{(i)}_{j,k}\mid (i,j)\in A_k\}$ defines the point $[e_1^{(k)}]\in\PP^{d_k-1}$, where $\{e_1^{(k)},\ldots,e_{d_k}^{(k)}\}$ is the standard basis for $V_k$.

$(ii)$ If $k\in[2]$ and $(i,j)\in A_k\cap B_k$, define $f_j^{(i)}\coloneqq\prod_{r\neq i}\ell_{j,r}^{(i)}$ where $\ell^{(i)}_{j,r}$ is a generic linear form in $\pi^{(r)}$, with the property that $\{\ell^{(i)}_{j,k}\mid (i,j)\in A_k\cap B_k\}$ defines the line $\kl_k\subset\PP^{d_k-1}$ spanned by $[e_1^{(k)}]$ and $[e_2^{(k)}]$.

$(iii)$ Finally, consider $k\in[2]$ and the set $(A_1\cap B_2)\cup(A_2\cap B_1)$ consisting of two elements. For all $(i,j)\in (A_1\cap B_2)\cup(A_2\cap B_1)$, let $b_j^{(i)}$ be a bilinear form in $\pi^{(1)},\pi^{(2)}$, and define $\tilde{b}_j^{(i)}$ to be the restriction of $b_j^{(i)}$ to $\kl_1\times\kl_2$. We claim that there exists a choice of two bilinear forms $b_j^{(i)}$ such that the discriminant of the bilinear system $\widetilde{S}$ in the two pairs of variables $(\pi_1^{(1)},\pi_2^{(1)})$, $(\pi_1^{(2)},\pi_2^{(2)})$ defined by $\{\tilde{b}_j^{(i)}=0 \mid (i,j)\in (A_1\cap B_2)\cup(A_2\cap B_1)\}$ is negative. More explicitly, for all $(i,j)\in (A_1\cap B_2)\cup(A_2\cap B_1)$ we write $b_j^{(i)} = \sum_{\alpha=1}^{d_1}\sum_{\beta=1}^{d_2} c_{\alpha\beta}^{(i,j)}\,\pi_{\alpha}^{(1)}\pi_{\beta}^{(2)}$ for some real coefficients $c_{\alpha\beta}^{(i,j)}$. If $(A_1\cap B_2)\cup(A_2\cap B_1)=\{(\lambda,\mu),(\xi,\eta)\}$, then
\[
\widetilde{S}\colon
\begin{cases}
    \tilde{b}_\mu^{(\lambda)} = c_{11}^{(\lambda,\mu)}\,\pi_1^{(1)}\pi_1^{(2)}+c_{12}^{(\lambda,\mu)}\,\pi_1^{(1)}\pi_2^{(2)}+c_{21}^{(\lambda,\mu)}\,\pi_2^{(1)}\pi_1^{(2)}+c_{22}^{(\lambda,\mu)}\,\pi_2^{(1)}\pi_2^{(2)} = 0\\[2pt]
    \tilde{b}_\eta^{(\xi)} = c_{11}^{(\xi,\eta)}\,\pi_1^{(1)}\pi_1^{(2)}+c_{12}^{(\xi,\eta)}\,\pi_1^{(1)}\pi_2^{(2)}+c_{21}^{(\xi,\eta)}\,\pi_2^{(1)}\pi_1^{(2)}+c_{22}^{(\xi,\eta)}\,\pi_2^{(1)}\pi_2^{(2)} = 0\,.
\end{cases}
\]
Solving $\tilde{b}_\mu^{(\lambda)}=0$ for $(\pi_1^{(2)},\pi_2^{(2)})$ and substituting into $\tilde{b}_\eta^{(\xi)}=0$ yields a quadratic form in $(\pi_1^{(1)},\pi_2^{(1)})$ whose discriminant is the following irreducible polynomial of degree $4$:
\begin{small}
\[
(c_{11}^{(\lambda,\mu)}c_{22}^{(\xi,\eta)}+c_{12}^{(\lambda,\mu)}c_{21}^{(\xi,\eta)}-c_{21}^{(\lambda,\mu)}c_{12}^{(\xi,\eta)}-c_{22}^{(\lambda,\mu)}c_{11}^{(\xi,\eta)})^2-4(c_{11}^{(\lambda,\mu)}c_{21}^{(\xi,\eta)}-c_{21}^{(\lambda,\mu)}c_{11}^{(\xi,\eta)})(c_{12}^{(\xi,\eta)}c_{22}^{(\lambda,\mu)}-c_{22}^{(\xi,\eta)}c_{12}^{(\lambda,\mu)})\,.
\]
\end{small}

Observe that there exists no algebraic relation among the coefficients $c_{\alpha\beta}^{(i,j)}$. In particular, if the coefficients $c_{\alpha\beta}^{(i,j)}$ are generic, then the previous discriminant does not vanish. If for a choice of real coefficients $c_{\alpha\beta}^{(i,j)}$ we obtain a negative value of the discriminant, then the system $\widetilde{S}$ has two pairwise conjugate distinct solutions. Furthermore, the value of the discriminant also remains negative for a small perturbation of the coefficients $c_{\alpha\beta}^{(i,j)}$. For example, if
\[
(c_{11}^{(\lambda,\mu)},c_{12}^{(\lambda,\mu)},c_{21}^{(\lambda,\mu)},c_{22}^{(\lambda,\mu)},c_{11}^{(\xi,\eta)},c_{12}^{(\xi,\eta)},c_{21}^{(\xi,\eta)},c_{22}^{(\xi,\eta)}) = (0,1,-1,0,1,0,0,-1)\,,
\]
then the discriminant of $\widetilde{S}$ takes the negative value $-4$. Using the coefficients above and picking generic values for the remaining coefficients $c_{\alpha\beta}^{(i,j)}$ in $b_j^{(i)}$, define $f_j^{(i)} \coloneqq b_j^{(i)}\,\prod_{r\in\{3,\ldots,n\}\setminus\{i\}}\ell_{j,r}^{(i)}$ for every $(i,j)\in (A_1\cap B_2)\cup(A_2\cap B_1)$, where $\ell_{j,r}^{(i)}$ is a generic linear form in $\pi^{(r)}$.

We set $f^{(i)}\coloneqq (f_1^{(i)},\ldots,f_{d_i-1}^{(i)})\in H^0(\PP^\bd,\ko(\bone_i)^{\oplus(d_i-1)})$ using the components $f_j^{(i)}$ defined in steps $(i)-(iii)$.
We show that every block derangement $C=\{C_1,\ldots,C_n\}$ of $F$ can be associated with a point in the zero scheme of $f\coloneqq (f^{(1)},\ldots,f^{(n)})\in H^0(\PP^\bd,E)$:

$(1)$ Suppose that $C$ is either $A$ or $B$. Consider the system
\[
S_1\colon
\begin{cases}
    b_j^{(i)}=0 & \text{if $(i,j)\in (A_1\cap B_2)\cup(A_2\cap B_1)$}\\[2pt]
    \ell_{j,k}^{(i)}=0 & \text{for every $k\in[n]$ and $(i,j)\in A_k\cap B_k$.}
\end{cases}
\]
The last set of the $\dim\PP^\bd-2=D-2$ linear equations of $S_1$ defines the subscheme $\kl_1\times\kl_2\times \prod_{k=3}^n\{[e_1^{(k)}]\} \subset \PP^\bd$. Restricting the first subsystem of $S_1$, consisting of the two bilinear equations, to this subscheme yields the bilinear system $\widetilde{S}$ previously studied. In particular, the solution set of $S_1$ is the union of two pairwise conjugate points of $\PP^\bd$.

$(2)$ Suppose that $C$ is different from $A$ and $B$, and that $C_k\cap[(A_1\cap B_2)\cup(A_2\cap B_1)]\neq\emptyset$ for every $k\in[2]$. Consider the system
\[
S_2\colon
\begin{cases}
    b_j^{(i)}=0 & \text{for every $k\in[2]$ and $(i,j)\in C_k\cap[(A_1\cap B_2)\cup(A_2\cap B_1)]$}\\[2pt]
    \ell_{j,k}^{(i)}=0 & \text{for every $k\in[n]$ and $(i,j)\in C_k\setminus [(A_1\cap B_2)\cup(A_2\cap B_1)]$.}
\end{cases}
\]
The last set of the $D-2$ linear equations of $S_2$ defines the subscheme $\kl_1'\times\kl_2'\times \prod_{k=3}^n \{p^{(k)}\} \subset \PP^\bd$, where $\kl_k'\subset\PP^{d_k-1}$ is the line defined by $\{\ell^{(i)}_{j,k}\mid (i,j)\in C_k\setminus [(A_1\cap B_2)\cup(A_2\cap B_1)]\}$ for all $k\in[2]$ and $p^{(k)}\in \PP^{d_k-1}$ is the point defined by $\{\ell^{(i)}_{j,k}\mid (i,j)\in C_k\setminus [(A_1\cap B_2)\cup(A_2\cap B_1)]\}$ for all $k\in\{3,\ldots,n\}$. The intersection between $\kl_1'\times\kl_2'\times \prod_{k=3}^n \{p^{(k)}\}$ and the subset cut out by the first two bilinear equations of $S_2$ consists of two distinct points, both different from the solutions of $S_1$. One of the solutions of $S_2$ is the point of $\PP^d$ associated with the block derangement $C$.

$(3)$ Suppose that $[(A_1\cap B_2)\cup(A_2\cap B_1)]\subset C_k$ for some $k\in[2]$. This case holds only if $d_k\ge 3$. We consider the case $k=1$ for the sake of simplicity. Consider the system 
\[
S_3\colon
\begin{cases}
    b_j^{(i)}=0 & \text{if $k=1$ and $(i,j)\in (A_1\cap B_2)\cup(A_2\cap B_1)\subset C_1$}\\[2pt]
    \ell_{j,k}^{(i)}=0 & \text{for every $k\in[n]$ and $(i,j)\in C_k\setminus [(A_1\cap B_2)\cup(A_2\cap B_1)]$.}
\end{cases}
\]
In this case, the last set of the $D-2$ linear equations of $S_3$ defines the subscheme $\kn\times\prod_{k=2}^n\{q^{(k)}\}\subset \PP^\bd$, where $\kn$ is a plane in $\PP^{d_1-1}$ and $q^{(k)}$ is a point in $\PP^{d_k-1}$ for all $k\in\{2,\ldots,n\}$.
The intersection between $\kn\times\prod_{k=2}^n\{q^{(k)}\}$ and the subset cut out by the first two bilinear equations of $S_3$ is the point in $\PP^\bd$ associated with the block derangement $C$.

$(4)$ Suppose that $|C_1\cap[(A_1\cap B_2)\cup(A_2\cap B_1)]|=1$ and $C_2\cap[(A_1\cap B_2)\cup(A_2\cap B_1)]=\emptyset$. Similarly, one considers the case where the sets $C_1$ and $C_2$ are swapped. Consider the system 
\[
S_4\colon
\begin{cases}
    b_j^{(i)}=0 & \text{if $k=1$ and $(i,j)\in C_1\cap[(A_1\cap B_2)\cup(A_2\cap B_1)]$}\\[2pt]
    \ell_{j,k}^{(i)}=0 & \text{for every $k\in[n]$ and $(i,j)\in C_k\setminus[(A_1\cap B_2)\cup(A_2\cap B_1)]$.}
\end{cases}
\]
In this case, the last set of the $D-1$ linear equations of $S_4$ defines the subscheme $\kl\times\prod_{k=2}^n\{r^{(k)}\} \subset \PP^\bd$, where $\kl$ is a line in $\PP^{d_1-1}$ and $r^{(k)}$ is a point in $\PP^{d_k-1}$ for all $k\in\{2,\ldots,n\}$.
The intersection between $\kl\times\prod_{k=2}^n\{r^{(k)}\}$ and the subset cut out by the first bilinear equation of $S_4$ is the point in $\PP^\bd$ associated with the block derangement $C$.

$(5)$ The last case is when $C_k\cap[(A_1\cap B_2)\cup(A_2\cap B_1)]=\emptyset$ for every $k\in[2]$. Then the point of $\PP^\bd$ associated with the block derangement $C$ is the solution of the linear system $S_5$ consisting of $\ell_{j,k}^{(i)}=0$ for every $k\in[n]$ and $(i,j)\in C_k$.

As discussed in the proof of Lemma \ref{lem: section with all real zeros}, the number of block derangements of $F$ equals $c(\bd)$. The genericity of the linear forms $\ell_{j,k}^{(i)}$ and of the coefficients of the bilinear forms $b_j^{(i)}$ implies that all the solutions computed via the systems $S_1,\ldots,S_5$ are pairwise distinct. All these considerations imply that the zero scheme of the real global section $f$ of $E$ consists of $c(\bd)$ distinct points of $\PP^\bd$, and hence it is reduced and zero-dimensional. Furthermore, two points of $\kz(f)$ have nonreal coordinates. This concludes the proof.
\end{proof}

The two preceding lemmas serve as a preparation for Theorem \ref{thm: codimension real part discriminant of E is 1}. The main idea of its proof is as follow. Consider two distinct sections $f_1,f_2$ in $H^0(\PP^\bd,E)$ with real coefficients in the complement of the real part of $\Delta(E)$, hence $\kz(f_i)$ is a zero-dimensional reduced complex scheme of $\PP^\bd$ such that $|\kz(f_i)\cap\PP^\bd_\mR|<\infty$ for all $i\in\{1,2\}$. If $f_1$ and $f_2$ are connected by a path $\{f_t\mid t\in[1,2]\}\subseteq H^0(\PP^\bd,E)$ of sections with real coefficients, such that $0\in\R^D$ is a regular value of $f_t\colon\PP^\bd_\mR\to\R^D$ for all $t\in[1,2]$, then Thom's Isotopy Lemma ensures that the topology of $\kz(f_t)\cap\PP^\bd_\mR$ is preserved along the parameter space $[1,2]$. This implies that if additionally $|\kz(f_1)\cap\PP^\bd_\mR|=|\kz(f_2)\cap\PP^\bd_\mR|$, then  $|\kz(f_t)\cap\PP^\bd_\mR|=|\kz(f_1)\cap\PP^\bd_\mR|$ for all $t\in[1,2]$. Therefore, the path $\{f_t\mid t\in[1,2]\}$ does not meet the real part of $\Delta(E)$. In the upcoming proof, we instead start with two distinct sections $f_1,f_2$ in $H^0(\PP^\bd,E)$ such that $|\kz(f_1)\cap\PP^\bd_\mR|\neq|\kz(f_2)\cap\PP^\bd_\mR|$. We apply Thom's Isotopy Lemma to show that every path $\{f_t\mid t\in[1,2]\}$ must intersect the real part of $\Delta(E)$. In particular, the complement of the latter subset is not path-connected, which implies that the real part of $\Delta(E)$ has codimension one.

\begin{theorem}\label{thm: codimension real part discriminant of E is 1}
Let $E$ be the vector bundle on $\PP^\bd$ defined in \eqref{eq: Nash vector bundle} and let $\Delta(E)$ be its discriminant. The real part of $\Delta(E)$ has codimension one in the vector space of global sections $f$ of $E$ with real coefficients. In particular, $\codim\,\Delta(E)=1$.
\end{theorem}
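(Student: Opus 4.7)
The plan is to exploit the two preceding lemmas via a connectedness argument on the real form of $H^0(\PP^\bd,E)$. Set $W\coloneqq H^0(\PP^\bd,E)_\R$ and $U\coloneqq W\setminus\Delta(E)(\R)$, so that every $f\in U$ has a reduced zero-dimensional zero scheme consisting of $c(\bd)$ distinct complex points of $\PP^\bd$. Define $\nu\colon U\to\Z_{\mge[0]}$ by counting how many zeros of $f$ are real. The first key step is to verify that $\nu$ is locally constant on $U$: at any $f\in U$, a simple real zero deforms to a unique nearby real zero by the implicit function theorem, while any complex conjugate pair of zeros persists as a complex conjugate pair under a real perturbation by continuity together with the fact that $f$ has real coefficients. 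This is routine once the simplicity of the zeros is granted.

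The second step is to invoke Lemma~\ref{lem: section with all real zeros} to produce $f_1\in U$ with $\nu(f_1)=c(\bd)$, and Lemma~\ref{lem: section with almost all real zeros} to produce $f_2\in U$ with $\nu(f_2)\le c(\bd)-2$. Since $\nu$ takes at least two values on $U$ while being locally constant there, $U$ is disconnected. Setting $N\coloneqq\dim_\C H^0(\PP^\bd,E)=\dim_\R W$, the closed real algebraic subset $\Delta(E)(\R)\subset W\cong\R^N$ therefore separates $\R^N$.

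The final step is dimension-theoretic. Because $f_1\notin\Delta(E)(\R)$, the latter is a proper real algebraic subset of $W$ and so has real dimension at most $N-1$. Conversely, a closed subset of $\R^N$ of real dimension at most $N-2$ has path-connected complement, since generic paths avoid any such subset; hence $\dim_\R\Delta(E)(\R)\ge N-1$, which yields $\codim_\R\Delta(E)(\R)=1$. The complex statement $\codim\,\Delta(E)=1$ then follows from the general inequality $\dim_\R Z(\R)\le\dim_\C Z$ for a complex variety $Z$ defined over $\R$, combined with the properness $\Delta(E)\subsetneq H^0(\PP^\bd,E)$. The main conceptual point is choosing the invariant $\nu$ and establishing its local constancy; the combinatorial heavy lifting has already been done by the two preceding lemmas, and what remains is the topological observation that a real algebraic subset of codimension greater than one cannot separate its ambient space.
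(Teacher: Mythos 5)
Your proof is correct and follows the same overall strategy as the paper's: both start from Lemmas~\ref{lem: section with all real zeros} and~\ref{lem: section with almost all real zeros} to produce real sections $f_1$ and $f_2$ with real-root counts $c(\bd)$ and at most $c(\bd)-2$ respectively, and both conclude that the real part of $\Delta(E)$ must disconnect the space of real sections and hence have real codimension one. The paper achieves the key topological step by invoking Thom's Isotopy Lemma to show that any smooth path $\{f_t\}_{t\in[1,2]}$ joining $f_1$ to $f_2$ must contain some $f_{t^*}$ that fails transversality; you instead prove directly that the real-root count $\nu$ is locally constant on the complement of the real discriminant, via the implicit function theorem plus the observation that nonreal zeros of a real section come in conjugate pairs which cannot collide or become real under a small real perturbation when all zeros are simple. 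This is a more elementary and self-contained substitute for the same topological input. You also spell out the two closing steps that the paper leaves implicit, namely that a closed real algebraic subset of real codimension at least two cannot disconnect $\R^N$, and that the complex codimension follows from $\dim_\R Z(\R)\le\dim_\C Z$ together with properness of $\Delta(E)$. All steps check out; this is a sound and slightly more explicit rendering of the paper's argument.
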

\begin{proof}
Let $\PP^\bd_\mR$ be the multiprojective space over $\R$.
By Lemmas \ref{lem: section with all real zeros} and \ref{lem: section with almost all real zeros}, there exist two sections $f_1,f_2$ in $H^0(\PP^\bd,E)$ with real coefficients such that $\kz(f_1)$ and $\kz(f_2)$ are both zero-dimensional reduced complex schemes of $\PP^\bd$ such that
\begin{equation}\label{eq: different cardinalities}
    |\kz(f_1)\cap\PP^\bd_\mR|=c(\bd) \text{ and } |\kz(f_2)\cap\PP^\bd_\mR|\le c(\bd)-2\,.
\end{equation}

Consider any smooth map $F\colon\PP^\bd_\mR\times[1,2]\to\R^D$ and define $f_t\coloneqq F(\cdot,t)\colon\PP^\bd_\mR\to\R^D$ for all $t\in[1,2]$. Using \eqref{eq: different cardinalities} and applying Thom's Isotopy Lemma (for the version used, see Theorem 2.6 in the lecture notes \cite{lerario2023topology} with $M=\PP^\bd$, $N=\R^D$, and $A=\{0\}\subset\R^D$) show that there exist $t^*\in [1,2]$ such that $f_{t^*}$ is not transversal to $\{0\}$, or equivalently $\kz(f_{t^*})\cap\PP^\bd_\mR$ is nonreduced. This means that the path $\{f_t\mid t\in[1,2]\}$ in $H^0(\PP^\bd,E)$ between $f_1$ and $f_2$ must intersect the real part of the discriminant $\Delta(E)$. This completes the proof.
\end{proof}
\begin{problem}
    Study the irreducible components and the degree of the Nash discriminant variety.
\end{problem}
The computation of the codimension of $\Delta(\bd)$, or of $\Delta(E)$, is a fundamental step towards a complete description of these discriminants. Secondly, one is interested in showing whether these discriminants are irreducible and possibly computing the degrees of their irreducible components. In Section \ref{subsec: Nash discriminant boundary format}, we will show that, when $d_n-1 = \sum_{i=1}^{n-1}(d_i-1)$, then $\Delta(\bd)$ contains at least two components, one of which is an irreducible hypersurface, while the other component has codimension at least two. Therefore, a complete study of $\Delta(E)$ is an interesting and difficult problem. The main obstruction comes from the fact that $E$ is not very ample. If we considered a very ample vector bundle instead, then the study of its discriminant could be addressed by applying \cite[Corollary 2.7]{abo2022ramification}.

In the remainder of this section, we restrict to the hypersurface component of $\Delta(\bd)$, and we compute its degree for $\bd\in\{(2,2,2),(2,2,3)\}$. In general, thanks to Remark \ref{rmk: Nash discriminant cone over discriminant E}, we study the discriminant $\Delta(E)$ instead of $\Delta(\bd)$.

\begin{example}\label{ex: computation degree hypersurface component 222}
Let $\bd = (2,2,2)$. If $E$ denotes the vector bundle on $\PP^\bd$ defined in \eqref{eq: Nash vector bundle}, then $E = \ko(0,1,1)\oplus\ko(1,0,1)\oplus\ko(1,1,0)$. By Theorem \ref{thm: codimension real part discriminant of E is 1}, the discriminant variety $\Delta(E)$ of $E$ contains a hypersurface. In this example, we show that the degree of this hypersurface component is $6$. This is done by calculating the number of its intersection points with the pencil $\kl$ of generic global sections $f_1$ and $f_2$ of $E$. We use an approach similar to \cite{abo2020discriminant}. 

The proof of Lemma~\ref{lem: section with all real zeros} shows the existence of a global section of $E$ whose zero scheme is a nonreduced point of multiplicity $2$. This means that the global sections of $E$, whose zero schemes are such nonreduced points, form an open subset of $\Delta(E)$. Thus, we may assume that the global section of $E$ corresponding to any intersection point of $\Delta(E)$ and $\kl$ defines a nonreduced point of multiplicity $2$. 

Since $E$ is globally generated as explained at the beginning of Section \ref{subsec: vector bundles}, the dependency locus $\kc$ of $f_1$ and $f_2$ is a nonsingular curve \cite[Lemma 2.5]{ein1982some}. We show the irreducibility of $\kc$ and find the genus $g(\kc)$ of $\kc$ by utilizing the Eagon-Northcott complex of the sheaf morphism $(f_1,f_2) \colon\ko^{\oplus 2} \to E$ 
\begin{equation}\label{eq: Eagon-Northcott}
0 \to \left(\bigwedge^3 E^*\right)^{\oplus 2} \longrightarrow \bigwedge^2E^* \longrightarrow \ko \longrightarrow \ko_\kc \to 0\,, 
\end{equation}
which is a locally free resolution of the structure sheaf $\ko_\kc$ of $\kc$.

Note that 
\[
\bigwedge^i  E^* = 
\begin{cases}
\ko(-2,-1,-1)\oplus\ko(-1,-2,-1)\oplus\ko(-1,-1,-2) & \text{if $i=2$,}\\ 
\ko(-2,-2,-2) & \text{if $i=3$,} 
\end{cases}
\]
from which it follows that $H^i(\PP^\bd,\bigwedge^2 E^*) = 0$ for each $i \in \{0,1,2,3\}$ and 
\[
\dim H^i(\PP^\bd,\bigwedge^3 E^*) = 
\begin{cases}
2 & \text{if $i = 3$,} \\
0 & \text{otherwise.}
\end{cases}
\]
Therefore, decomposing \eqref{eq: Eagon-Northcott} into two short exact sequences and taking cohomology show that $\dim H^0(\kc,\ko_\kc) = \dim H^0(\PP^\bd,\ko) = 1$ and $\dim H^1(\kc,\ko_\kc) = \dim H^3 (\PP^\bd, \bigwedge^3 E^*) = 2$, and hence the curve $\kc$ is irreducible, and its genus $g(\kc)$ is $2$.

Define the morphism $\psi\colon \kc \to \kl$ by $\psi(p) \coloneqq [f]$ if and only if $p \in Z(f)$. This is a finite morphism of degree $c(2,2,2) = 2$. If $[f] \in \Delta(E) \cap \kl$, then, by assumption, the zero scheme $Z(f)$ of $f$ is a nonreduced point of multiplicity $2$ supported at a point $p \in \kc$. To put it another way, this means that the finite morphism $\psi$ is ramified at $p$ with ramification index $2$. Therefore, finding the intersection number $|\Delta(E) \cap \kl|$ is equivalent to counting the number of branch points of $\psi$. Furthermore, since the ramification index of $\psi$ at each ramification point is $2$, the number of branch points of $\psi$ coincides with the degree of the ramification divisor of $\psi$. It follows from the Hurwitz-Riemann formula \cite[Chapter IV, Section 2] {hartshorne1977algebraic} that the degree of the ramification divisor of $\psi$, and hence the degree of $\Delta(E)$, is $(2-2\,g(\kl))\deg\psi+2\,g(\kc)-2 = 6$.\hfill$\diamondsuit$
\end{example}

\begin{example}\label{ex: computation degree hypersurface component 223}
Let $\bd = (2,2,3)$. Since $c(2,2,3)=2$, one can use the same strategy as in Example~\ref{ex: computation degree hypersurface component 222} to compute the degree of the hypersurface component of the discriminant variety of the vector bundle $E = \ko(0,1,1)\oplus\ko(1,0,1)\oplus\ko(1,1,0)^{\oplus 2}$. 

It can be verified, as in Example~\ref{ex: computation degree hypersurface component 222}, that the dependency locus $\kc$ of two generic global sections $f_1$ and $f_2$ of $E$ is a nonsingular irreducible curve of genus $1$ in $\PP^\bd$ by calculating the Eagon-Northcott complex of the sheaf morphism $(f_1,f_2)\colon \ko^{\oplus 2} \to E$. The map $\psi\colon\kc\to\kl$ defined in the same way as in Example~\ref{ex: computation degree hypersurface component 222} is a finite morphism of degree $c(2,2,3)=2$. Thus, the Hurwitz-Riemann formula shows that the degree of the ramification divisor of $\psi$, and hence the degree of $\Delta(E)$, is $4$.\hfill$\diamondsuit$
\end{example}

\noindent We suspect that the same approach holds for any format $\bd$ such that $d_1\le\cdots\le d_n$ and $d_n-1\le\sum_{i=1}^{n-1}(d_i-1)$. In particular, the degree of the one-codimensional part of $\Delta(E)$ is equal to $2(c(\bd)+g(\kc)-1)$, where $\kc$ is the dependency locus of two generic global sections of $E$.

\subsection{The Nash discriminant variety of a three-player binary game}\label{subsec: 2x2x2 game}

In this section, we focus on three-player games of format $\bd=(2,2,2)$. Thanks to Remark \ref{rmk: Nash discriminant cone over discriminant E}, the study of $\Delta(2,2,2)$ is equivalent to the study of $\Delta(E)$, where in this format the vector bundle $E$ on $\PP^\bd$ defined in \eqref{eq: Nash vector bundle} is $E = \ko(0,1,1)\oplus\ko(1,0,1)\oplus\ko(1,1,0)$ . Throughout this section, we use $\Delta$ to denote $\Delta(E)$ for simplicity. Firstly, in Theorem \ref{thm: degree Nash discriminant variety 2x2x2}, we show that $\Delta\subset\PP H^0(\PP^\bd,E)$ is an irreducible hypersurface of degree $6$. Secondly, since a generic global section $f$ of $E$ is such that $\kz(f)$ is a nonreduced point of multiplicity $2$, we investigate the loci of global sections such that $\dim(\kz(f))>0$. In particular, in Propositions \ref{prop:curve_component} and \ref{prop:surface_component}, we characterize all zero schemes $\kz(f)$ of global sections containing either a one-dimensional or a two-dimensional component. Lastly, we compute the codimensions, degrees, and equations and study the irreducibility of the loci of global sections $f$ of $E$ such that $\kz(f)$ is positive-dimensional and of a specific type. With the aid of \texttt{Macaulay2}, we utilize these results to determine all singular strata of $\Delta$ (Table~\ref{tab: singular strata}, \cite{mathrepo}).

In this section, a global section in $H^0(\PP^\bd,E)$ is a triple $f=(f^{(1)},f^{(2)},f^{(3)})$, where
\begin{align}\label{eq: components global section E 222}
\begin{split}
    f^{(1)} &= a_{11}^{(1)}\pi_1^{(2)}\pi_1^{(3)}+a_{12}^{(1)}\pi_1^{(2)}\pi_2^{(3)}+ a_{21}^{(1)}\pi_2^{(2)}\pi_1^{(3)}+a_{22}^{(1)}\pi_2^{(2)}\pi_2^{(3)}\\
    f^{(2)} &= a_{11}^{(2)}\pi_1^{(1)}\pi_1^{(3)}+a_{12}^{(2)}\pi_1^{(1)}\pi_2^{(3)}+ a_{21}^{(2)}\pi_2^{(1)}\pi_1^{(3)}+a_{22}^{(2)}\pi_2^{(1)}\pi_2^{(3)}\\
    f^{(3)} &= a_{11}^{(3)}\pi_1^{(1)}\pi_1^{(2)}+a_{12}^{(3)}\pi_1^{(1)}\pi_2^{(2)}+ a_{21}^{(3)}\pi_2^{(1)}\pi_1^{(2)}+a_{22}^{(3)}\pi_2^{(1)}\pi_2^{(2)}\,.
\end{split}
\end{align}

We note a similarity between our approach and the universality theorem on Nash equilibria. The theorem proves that every real algebraic variety is isomorphic to the set of totally mixed Nash equilibria of a three-player game or an $n$-player game in which each player has two strategies. However, even for a real algebraic curve, more than two strategies per player are needed in a three-player game, or more than three players in a binary game (see \cite[Theorem 5, Theorem 6]{datta2003universality}). While the universality result begins with a given variety and constructs a corresponding game, our approach follows a similar line of thought but takes a slightly different focus: we fix the game and examine all possible zero schemes of global sections $f$ of $E$.

\begin{theorem}\label{thm: degree Nash discriminant variety 2x2x2}
The discriminant locus $\Delta$ of $E$ is an irreducible hypersurface of degree $6$.
\end{theorem}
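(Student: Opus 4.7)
The plan is to assemble three ingredients. The hypersurface status $\codim\,\Delta = 1$ is immediate from Theorem~\ref{thm: codimension real part discriminant of E is 1}. The degree-6 value was already carried out in Example~\ref{ex: computation degree hypersurface component 222} via the Hurwitz–Riemann formula applied to a generic pencil $\kl$ and its $2{:}1$ cover by the dependency locus curve $\kc$. The remaining new content is the \emph{irreducibility} of $\Delta$, which simultaneously rules out extra components of codimension greater than one and identifies $\Delta$ with the hypersurface whose degree has been computed.

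To establish irreducibility, I would introduce the incidence correspondence
\[
\mathcal{I} \coloneqq \left\{ (p, [f]) \in \PP^\bd \times \PP H^0(\PP^\bd, E) \mid f(p) = 0 \text{ and } \mathrm{rank}\,(df_p) < 3 \right\}
\]
with its two projections $\pi_1, \pi_2$. Since $E$ is globally generated of rank $3$ and $\dim \PP^\bd = 3$, the condition $f(p)=0$ cuts out a linear $\PP^8 \subset \PP^{11}$ in the fiber of $\pi_1$ over $p$, and within this $\PP^8$ the non-surjectivity of $df_p$ is the vanishing of a single $3 \times 3$ determinant in Jacobian coordinates. Exploiting the transitive action of $\mathrm{PGL}_2^{\times 3}$ on $\PP^\bd$, I would reduce to $p = ([1{:}0],[1{:}0],[1{:}0])$ and compute the cubic explicitly in the affine coordinates $x_i = \pi_2^{(i)}/\pi_1^{(i)}$; up to renaming the parameters as $u_k, v_k$ it takes the symmetric form $v_1 u_2 v_3 + u_1 v_2 u_3$. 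The cubic is multilinear in the three pairs of variables, and a short argument (any linear factor $\alpha u_k + \beta v_k$ forces incompatible equations on the cofactor, whose coefficients would have to divide both $u_i u_j$ and $v_i v_j$) shows it is irreducible. By $\mathrm{PGL}_2^{\times 3}$-equivariance, the fiber of $\pi_1$ over every $p$ is therefore irreducible of dimension $7$, so $\mathcal{I}$ is irreducible of dimension $10$.

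To finish, I would observe that $\pi_2(\mathcal{I}) = \Delta$: for any $[f] \in \Delta$, the scheme $\kz(f)$ is either nonreduced at some point or contains a positive-dimensional component, and in either case there exists $p$ with $df_p$ degenerate, hence $(p,[f]) \in \mathcal{I}$; the reverse inclusion is immediate. Since $\pi_2$ is proper, $\Delta$ is the closed irreducible image of $\mathcal{I}$, which together with Theorem~\ref{thm: codimension real part discriminant of E is 1} shows $\Delta$ is an irreducible hypersurface of dimension $10$, and its degree equals the $6$ computed in Example~\ref{ex: computation degree hypersurface component 222}. The main obstacle is the irreducibility of the cubic on the $\PP^8$ fiber; the explicit, highly symmetric form it takes at a chosen base point reduces this to a short algebraic check, after which the rest is standard incidence-variety bookkeeping.
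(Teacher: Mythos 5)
Your proposal is correct and takes a genuinely different route to the key new content of the theorem, which is irreducibility. The paper projects $\PP H^0(\PP^\bd, E)$ away from $\PP H^0(\PP^\bd,\ko(\bone_3))$ onto $P = \PP\bigl(H^0(\PP^\bd,\ko(\bone_1)) \oplus H^0(\PP^\bd,\ko(\bone_2))\bigr)$ and shows that over a suitable open $\ku_0 \subset P$ the fiber of $\Psi|_\Delta$ is the affine cone over a nonsingular quadric surface, hence irreducible of dimension $3$; since $\ku_0$ is irreducible of dimension $7$, $\Delta$ is irreducible of dimension $10$. You instead fiber the incidence correspondence $\mathcal{I} \subset \PP^\bd \times \PP H^0(\PP^\bd,E)$ over $\PP^\bd$, with the decisive point being irreducibility of the Jacobian cubic $u_1v_2u_3 + v_1u_2v_3$ at the base point $([1{:}0],[1{:}0],[1{:}0])$; your computation of that cubic, the irreducibility check (a linear factor would have to be supported on a single pair $(u_k,v_k)$ and forces the cofactor to be simultaneously proportional to two distinct monomials), and the identification $\pi_2(\mathcal{I}) = \Delta$ via the Jacobian criterion are all correct. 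Both proofs leave the degree-$6$ value to Example~\ref{ex: computation degree hypersurface component 222} and both use Theorem~\ref{thm: codimension real part discriminant of E is 1} (or an equivalent dimension bound) to pin down $\dim \Delta = 10$. Your version buys symmetry in the three tensor factors and a self-contained, computation-friendly irreducibility check; the paper's projection approach, while less symmetric, connects more directly to the explicit $6\times 6$ determinantal equation for $\Delta$ recorded immediately after the theorem.
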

\begin{proof}
Let $\Psi\colon \PP H^0(\PP^\bd,E) \dashrightarrow P \coloneqq \PP(H^0(\PP^\bd, \ko(\bone_1)) \oplus H^0(\PP^\bd, \ko(\bone_2))$ be the projection from $\PP H^0(\PP^\bd, \ko(\bone_3))$ to $P$, and define the open subset $\ku$ of $P$ to be
\[
\ku\coloneqq \left\{\left. \left[\left(f^{(1)},f^{(2)}\right)\right] \in P  \, \right| \, \text{$f^{(1)}$ and $f^{(2)}$ are irreducible}\right\}\,. 
\]  
The restriction of $\Psi$ to $\Delta$, for which we also write~$\Psi$, is onto. We show that there exists an open subset $\ku_0$ of $\ku$ such that the fiber of $\Psi\colon \Psi^{-1}(\ku_0) \to \ku_0$ over any point of $\ku_0$ is irreducible of dimension~$3$. 

Let $f = (f^{(1)}, f^{(2)}, f^{(3)})\in H^0(\PP^\bd,E)$ and  $([\pi^{(1)}], [\pi^{(2)}], [\pi^{(3)})]) \in \kz(f)$. Since $f^{(1)}(\pi^{(2)}, \pi^{(3)}) = f^{(2)}(\pi^{(1)}, \pi^{(3)}) = 0$, it is immediate that $(\pi^{(i)}_1, \pi^{(i)}_2)$ is a scalar multiple of 
\[
\pi^{(i)}\left(\pi^{(3)}_1,\pi^{(3)}_2\right) \coloneqq \left(a^{(i)}_{21}\pi^{(3)}_1+a^{(i)}_{22} \pi^{(3)}_2, -\left(a^{(i)}_{11}\pi^{(3)}_1+a^{(i)}_{11} \pi^{(3)}_2\right)\right) 
\]
for each $i \in \{2,3\}$. Thus, $[f] \in \Delta$ if and only if the following binary quadratic form in $\pi^{(3)}_1$ and $\pi^{(3)}_2$ defines a nonreduced point of multiplicity $2$ in $\PP(V_3)$: 
\[
f^{(3)}\left(\pi^{(2)}\left(\pi^{(3)}_1,\pi^{(3)}_2\right),\pi^{(3)}\left(\pi^{(3)}_1,\pi^{(3)}_2\right)\right)\,.  
\]
Its discriminant $\Delta(f^{(3)})$ is a quadratic form in $a^{(3)}_{11}, \ldots, a^{(3)}_{22}$ whose coefficients are bihomogeneous polynomials of bidegree $(2,2)$ in $a^{(1)}_{11}, \ldots, a^{(1)}_{22}$ and $ a^{(2)}_{11},\ldots, a^{(2)}_{22}$. This quadratic form defines a nonsingular quadric surface in $\PP H^0(\PP^\bd, \ko(\bone_3))$ if and only if the $4 \times 4$ symmetric matrix of the quadratic form has rank $4$. The complement $\ku_0$ of the hypersurface defined by the $4 \times 4$ symmetric matrix in $\ku$ is a nonempty open subset (for example, the symmetric matrix has rank $4$ if $f^{(1)} = \pi^{(2)}_1\pi^{(3)}_1-\pi^{(2)}_1\pi^{(3)}_2-\pi^{(2)}_2\pi^{(3)}_1-\pi^{(2)}_2\pi^{(3)}_2$ and $f^{(2)} = -\pi^{(1)}_1\pi^{(3)}_1+\pi^{(1)}_1\pi^{(3)}_2+\pi^{(1)}_2\pi^{(3)}_1+\pi^{(1)}_2\pi^{(3)}_2$).  

The fiber of $\Psi$ over any point $[(f^{(1)},f^{(2)})]$ of $\ku_0$,  
\[
\Psi^{-1}\left(\left[\left(f^{(1)},f^{(2)}\right)\right]\right) = 
\left\{\left. 
\left[\left(f^{(1)},f^{(2)},f^{(3)}\right)\right] \, \right| \, 
\Delta\left(f^{(3)}\right) = 0
\right\}\,, 
\]
is identified with the affine cone over a nonsingular quadric surface, and hence it is irreducible and has dimension $3$. 
\end{proof}

\begin{remark}
Let $f = (f^{(1)}, f^{(2)}, f^{(3)})$ be a global section of $E$, and let $\Theta_1\colon V_2 \to V_3^*$, $\Theta_2\colon V_1\to V_3^*$, and $\Theta_3\colon V_1 \to V_2^*$ be the linear transformations corresponding to $f^{(1)}$, $f^{(2)}$, and $f^{(3)}$ respectively.  
Define the linear transformation $\Theta_f$ from $V_1\oplus V_2 \oplus V_3 $ to $V_1^*\oplus V_2^* \oplus V_3^*$ by 
\[
    \Theta_f \coloneqq \left(\Theta_2+\Theta_3, \Theta_1+\Theta_3^T, \Theta_1^T+\Theta_2^T\right)\,.
\]
Emiris and Vidunas in \cite[Section 5]{emiris2016discriminants} show that the rank of $\Theta_f$ is less than or equal to $5$ if and only if the zero scheme of $f$ is singular at the point of $\PP^\bd$ corresponding to the kernel of~$\Theta_f$. 
A straightforward calculation shows that the matrix of $\Theta_f$ relative to the standard bases $\{e^{(i)}_1, e^{(i)}_2\}_{i \in [3]}$ for $V_1\oplus V_2 \oplus V_3$ and $\{\pi^{(i)}_1,\pi^{(i)}_2\}_{i \in [3]}$ for $V_1^*\oplus V_2^* \oplus V_3^*$ is 
\begin{equation}\label{eq: 6x6 matrix discriminant}
    \begin{pmatrix}
    0 & 0 & a^{(3)}_{11} & a^{(3)}_{12} & a^{(2)}_{11} & a^{(2)}_{12} \\[2pt]
    0 & 0 & a^{(3)}_{21} & a^{(3)}_{22} & a^{(2)}_{21} & a^{(2)}_{22} \\[2pt]
    a^{(3)}_{11} & a^{(3)}_{21} & 0 & 0 & a^{(1)}_{11} & a^{(1)}_{12} \\[2pt]
    a^{(3)}_{12} & a^{(3)}_{22} & 0 & 0 & a^{(1)}_{21} & a^{(1)}_{22} \\[2pt]
    a^{(2)}_{11} & a^{(2)}_{21} & a^{(1)}_{11} & a^{(1)}_{21} & 0 & 0 \\[2pt]
    a^{(2)}_{12} & a^{(2)}_{22} & a^{(1)}_{12} & a^{(1)}_{22} & 0 & 0 
    \end{pmatrix}\,,    
\end{equation}
and hence its determinant is the defining equation for $\Delta$. 
\end{remark}

\begin{example}[Selten's Horse]\label{ex: Selten}
We discuss a famous example from Selten \cite{selten1975reexamination}, known as ``Selten's Horse'', see also \cite[Section 4.3]{jahani2022automated}.
Let $X = (X^{(1)}, X^{(2)}, X^{(3)})$ be the three-player game of format $\bd=(2,2,2)$ with
\[
X^{(1)} =
\begin{bmatrix}[rr:rr]
3 & 3 & 0 & 0 \\
4 & 1 & 0 & 1
\end{bmatrix},\quad
X^{(2)} =
\begin{bmatrix}[rr:rr]
2 & 2 & 0 & 0 \\
4 & 1 & 0 & 1
\end{bmatrix},\quad
X^{(3)} = 
\begin{bmatrix}[rr:rr]
2 & 2 & 0 & 0 \\
0 & 1 & 1 & 1
\end{bmatrix}\,.
\]
The global section $f\in H^0(\PP^\bd,E)$ associated with the game $X$ is
\[
f = (-\pi_1^{(2)}\pi_1^{(3)}+2\pi_2^{(2)}\pi_1^{(3)}-\pi_2^{(2)}\pi_2^{(3)},3\pi_2^{(1)}\pi_1^{(3)}-\pi_2^{(1)}\pi_2^{(3)},2\pi_1^{(1)}\pi_1^{(2)}+2\pi_1^{(1)}\pi_2^{(2)}-\pi_2^{(1)}\pi_1^{(2)})\,.
\]
One verifies by direct computations that $\kz(f)$ is a nonreduced point of multiplicity $2$ supported at $[\bpi] = ([0:1],[1:-1]:[1:3]) \in \PP^\bd$.
This can be verified by observing that the determinant of the matrix obtained by evaluating \eqref{eq: 6x6 matrix discriminant} at the coefficients of $f$ is zero. This implies $[f] \in \Delta$. The point $[\bpi]$ cannot be associated with a triple of probability distributions. Hence, the game $X$ does not admit totally mixed Nash equilibria. However, it has mixed Nash equilibria, which is the union of two line segments containing two pure Nash equilibria. This prompts the following natural question.\hfill$\diamondsuit$
\end{example}

\begin{problem}\label{prob: mixed NE}
Can an $n$-player game in the Nash discriminant variety have a finite, nonempty set of mixed (but not totally mixed) Nash equilibria? Also, characterize the set of games whose Nash equilibria are nonreduced or form a positive-dimensional set.
\end{problem}

The existence of completely mixed games, i.e., games where all Nash equilibria are totally mixed, has been well studied. Kaplansky first analyzed such equilibria in zero-sum two-player games \cite{kaplansky1945contribution}, and this was later generalized to general two-player games in \cite{bohnenblust1950solutions, raghavan1970completely}. Chin, Parthasarathy, and Raghavan extended this study to $n$-player games \cite{chin1974structure}.
The approach introduced in this paper, building on extensive existing literature, may offer a new framework for addressing Problem~\ref{prob: mixed NE}.

\begin{proposition}\label{prop:curve_component}
If the zero scheme of a global section of $E$ contains a curve as a component (but no surface component), then the zero scheme of the global section is isomorphic to a (possibly degenerate) twisted cubic, a (possibly degenerate) plane conic, or a line. 
\end{proposition}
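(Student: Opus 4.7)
The plan is to apply Bezout in the Chow ring $A(\PP^\bd) = \Z[h_1, h_2, h_3]/(h_1^2, h_2^2, h_3^2)$ to the three divisors $D_i \coloneqq \{f^{(i)}=0\}$ of classes $[D_i] = \bone_i = h_j + h_k$ (with $\{i,j,k\}=\{1,2,3\}$). A direct computation gives
\[
[D_i]\cdot[D_j] = h_1 h_2 + h_1 h_3 + h_2 h_3 \quad \text{for every } i \neq j,
\]
so whenever $D_i \cap D_j$ is pure of codimension two, it is a curve of tridegree $(1,1,1)$ --- a (possibly reducible) twisted cubic in the Segre embedding of $\PP^\bd$ into $\PP^7$. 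Since $\kc \subset \kz(f) \subset D_i \cap D_j$, the curve $\kc$ is a union of components of such a $(1,1,1)$-curve, which already bounds its tridegree component-wise by $1$.

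I would split the analysis according to whether some pair of divisors $D_i, D_j$ shares a divisorial component $S$. The class of $S$ must be a common summand of $\bone_i$ and $\bone_j$, hence $h_k$, and so $S = \PP^1\times\PP^1\times\{p_k\}$; after relabeling, take $(i,j,k) = (1,2,3)$, so that $f^{(1)} = \ell(\pi^{(3)})\,m(\pi^{(2)})$ and $f^{(2)} = \ell(\pi^{(3)})\,m'(\pi^{(1)})$ with $\ell(p_3)=0$. The residual of $S$ in $D_1 \cap D_2$ is a single $(0,0,1)$-line $L = \{q_1\}\times\{q_2\}\times\PP^1$. Because $f^{(3)}$ does not involve $\pi^{(3)}$, its restriction to $S$ is a $(1,1)$-curve $K \subset S$, i.e., a plane conic (possibly degenerate into two lines meeting at a point when $\det A^{(3)} = 0$). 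The Chow-ring vanishing $[D_3]\cdot[L] = (h_1+h_2)\,h_1 h_2 = 0$ forces $L \cap D_3$ to be either empty or all of $L$; the no-surface hypothesis rules out $S \subset D_3$. Hence $\kz(f) = K$ (a plane conic) or $\kz(f) = K \cup L$ (a conic plus line meeting at one point --- a degenerate twisted cubic).

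If no pair of divisors shares a surface component, every $D_i \cap D_j$ is a proper $(1,1,1)$-curve. By the classification of such curves in $(\PP^1)^3$, it decomposes as either (i) an irreducible smooth $(1,1,1)$-curve, (ii) a $(1,1,0)$-conic plus a $(0,0,1)$-line up to permuting factors, or (iii) three coordinate lines of tridegrees $(1,0,0),(0,1,0),(0,0,1)$. Case (ii) and its permutations all force internal reducibility of two of the $D_i$ together with the curve-containment condition, and the same local analysis as in the previous paragraph shows that $\kz(f)$ still ends up as a plane conic, a line, or a conic plus line. In case (i), $\kc \neq \emptyset$ forces $D_1 \cap D_2 \subset D_3$, so $\kz(f) = D_1 \cap D_2$ is a twisted cubic. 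In case (iii), Bezout gives $[D_3]\cdot[(0,0,1)\text{-line}] = 0$ and $[D_3]\cdot[(1,0,0)\text{-line}] = [D_3]\cdot[(0,1,0)\text{-line}] = 1$, and a direct check shows that when the $(0,0,1)$-line $L$ lies in $D_3$ the two extra intersection points with the transverse lines automatically coincide with their meeting points with $L$, so $\kz(f)$ is a single line (or, in further degenerations, a chain of two or three lines --- another degenerate twisted cubic).

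The hard part will be bookkeeping the residual intersection points in the reducible cases and checking that $\kz(f)$ has no stray isolated points off the curve component $\kc$. The key technical input is the vanishing $[D_k]\cdot[L_k] = 0$ whenever $L_k$ is a coordinate line whose tridegree has a single $1$ in position $k$, combined with the elementary coincidence that, when a line $L = \{q_1\}\times\{q_2\}\times\PP^1$ lies in $D_3$, the identity $f^{(3)}(q_1, q_2)=0$ pins down the unique zero of $f^{(3)}$ along each transverse coordinate line to be exactly the meeting point with $L$.
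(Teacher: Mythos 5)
Your route is genuinely different from the paper's: Bezout in the Chow ring replaces the Eagon--Northcott resolution and Hilbert-polynomial computation of $\kc$, and you partition by whether \emph{some} pair of $D_i$ shares a divisorial component rather than the paper's split on whether \emph{every} pair does. The outline is plausible, but case~B(ii) contains a real gap. When no two of the $f^{(i)}$ share a factor, a conic-plus-line decomposition of $D_1 \cap D_2$ can only be a $(1,0,1)$-conic $\cup$ $(0,1,0)$-line or a $(0,1,1)$-conic $\cup$ $(1,0,0)$-line; a $(1,1,0)$-conic inside $D_1 \cap D_2$ would force $\pi^{(3)}=p_3$ to divide both $f^{(1)}$ and $f^{(2)}$, which is case~A. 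For the actual case-B configurations one has $[D_3]\cdot[K] = [D_3]\cdot[L] = h_1 h_2 h_3 \neq 0$, so the vanishing $[D_3]\cdot[L]=0$ that carried case~A simply does not hold, and ``the same local analysis as in the previous paragraph'' cannot be invoked. You now need a separate coincidence argument that the single residual point $K\cap D_3$ lands on the surviving line component; you flag this bookkeeping as ``the hard part'' and leave it undone, but those residual-coincidence verifications are precisely the substance of the paper's last two paragraphs, so the argument does not close as written. Two smaller omissions: your setup does not cover $f^{(i)}\equiv 0$ (the paper notes that if $f^{(1)}=0$ then $\kc=\kz(f)$), and in case~B(iii) the ``chain of two or three lines'' degeneration cannot in fact occur, since a transverse line lying in $D_3$ would force a shared factor and return you to case~A.
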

\begin{proof}
Let $f \in H^0(\PP^\bd,E)$, and for each $i \in [3]$, let $f^{(i)} \in  H^0(\PP^\bd,\ko(\bone_i))$ be the $i$th coordinate function of $f$. 

If any two of $f^{(1)}$, $f^{(2)}$, and $f^{(3)}$ share a common factor, then we may assume, up to scaling, that there exist linear forms~$L$, $M$, and $N$ in $\pi^{(1)}, \pi^{(2)}$, and $\pi^{(3)}$ respectively such that $f^{(1)} = MN$, $f^{(2)} = LM$, and $f^{(3)} = LM$. The zero scheme $\kz(f)$ of $f$ is the union of three nonplanar, concurrent lines (the lines are defined by $M=N=0$, $L=N=0$, and $L=M=0$, and they intersect in the point defined by $L=M=N=0$). Therefore, it is a degenerate twisted cubic. Hence, we may assume that a pair of the coordinate functions of $f$ exists without a common factor. We may also assume, without loss of generality, that $(f^{(2)}, f^{(3)})$ is such a pair. 
    
If $\kc$ denotes the subscheme defined by $f^{(2)}$ and $f^{(3)}$, then $\kz(f)$ is the intersection of $\kc$ with the surface defined by~$f^{(1)}$. In particular, if $f^{(1)} = 0$, then $\kc=\kz(f)$. Thus, we prove that $\kc$ is a possibly degenerate twisted cubic and discuss which irreducible component of $\kc$ appears as a component of $\kz(f)$. 
    
The locally free resolution of $\ko_\kc$, 
\[
\begin{CD}
    0 @>>> \ko(-2,-1,-1) @>{\left(\begin{array}{c}-f^{(3)} \\ f^{(2)}\end{array}\right)}>>{\begin{array}{c} \ko(-1,0,-1) \\ \oplus \\ \ko(-1,-1,0) \end{array}} @>{\left(\begin{array}{cc}f^{(2)} & f^{(3)} \end{array}\right)}>> \ko\,,   
\end{CD}
\]
gives rise to $\dim H^0(\kc,\ko_\kc(1))=4$ and implies that the Hilbert polynomial of $\kc$ is $3t+1$. Therefore, it follows that $\kc$ is a space curve of dimension~$1$, degree~$3$, and arithmetic genus~$0$, and hence it is a (possibly degenerate) twisted cubic in the Segre embedding. 
    
We show that if $\kc$ contains a line, then one of $f^{(2)}$ and $f^{(3)}$ is reducible, which would imply that if $f^{(2)}$ and $f^{(3)}$ are irreducible, then the subscheme they define is a nonsingular twisted cubic. Assume that $f^{(2)}$ is irreducible. If the line is defined by a linear form $L$ in $\pi^{(1)}$ and a linear form $N$ in $\pi^{(3)}$, then there exist a linear form $L'$ in $\pi^{(1)}$ and a linear form $N'$ in $\pi^{(3)}$ such that $f^{(2)} = LN'+L'N$. Since $f^{(3)}$ is bihomogeneous in $\pi^{(1)}$ and $\pi^{(2)}$, the surface defined by $f^{(3)}$ contains the line precisely when there exists a linear form $M$ in $\pi^{(2)}$ such that $f^{(3)}=LM$. Hence, it is reducible, and the subscheme defined by $f^{(2)}$ and $f^{(3)}$ consists of the line defined by $f^{(2)} = L=0$ and the nonsingular plane conic defined by $f^{(2)} = M = 0$. If both $f^{(2)}$ and $f^{(3)}$ are reducible but do not share any factor, say $f^{(2)}=LN$ and $f^{(3)}=L'M$, where $L$ and $L'$ are linearly independent linear forms in $\pi^{(1)}$, $M$ is a linear form in $\pi^{(2)}$, and $N$ is a linear form $\pi^{(3)}$, then the subscheme defined by $f^{(2)}$ and $f^{(3)}$ is the union of three lines defined by $M=N=0$, $L=M=0$, and $L'=N=0$, respectively. So, we showed that $\kc$ is a nonsingular twisted cubic, the union of a line and a nonsingular conic, or the union of three lines. 

Assume that $\kc$ is the union of a nonsingular plane conic and a line. Let $f^{(2)} = LN'+L'N$ and $f^{(3)}=LM$ be as described in the previous paragraph. If $f^{(1)}$ is irreducible, then the scheme defined by $f^{(1)}$ and $f^{(2)}$ is a nonsingular twisted cubic $\kc'$. Thus, $\kz(f)$ cannot have a positive dimensional component because $\kc$ and $\kc'$ do not share any common irreducible component. Therefore, we may assume that $f^{(1)}$ is reducible. Let $M'$ be a linear form in $\pi^{(2)}$ and $N''$ a linear form in $\pi^{(3)}$ such that $f^{(1)} = M'N''$. If $\{M,M'\}$ and $\{N,N''\}$ are linearly independent, then $\kz(f)$ consists of two points; one is defined by $M'$, $f^{(2)}$, and $f^{(3)}$, and the other is defined by $N''$, $f^{(2)}$, and $f^{(3)}$. If $M'$ is a multiple of $M$ but $N$ and $N''$ are linearly independent, then $\kz(f)$ is the nonsingular plane conic defined by $M$ and $f^{(2)}$. Similarly, if $N''$ is a multiple of $N$ but $M$ and $M'$ are linearly independent, then $\kz(f)$ is the line defined by $L$ and $N$. In particular, if $\{M,M'\}$ and $\{N,N''\}$ are both linearly dependent, then $\kz(f)$ is the union of three lines defined by $M=N=0$, $L=N=0$, and $L'=M=0$, respectively.

If $\kc$ is the union of three lines, then $f^{(2)} = LN$ and $f^{(3)}=L'M$, as was described in a previous paragraph. If $f^{(1)}$ is irreducible and $\dim\kz(f)>0$, then $f^{(1)}=MN'+M'N$ with $\{M, M'\}$ and $\{N, N'\}$ both linearly independent. Therefore,  the zero scheme $\kz(f)$ of $f$ is the line defined by $M=N=0$, and hence we may assume that $f^{(1)}$ is reducible. Let $M'$ be a linear form in $\pi^{(2)}$ and $N'$ a linear form in $\pi^{(3)}$ such that $f^{(1)} = M'N'$. If $\{M, M'\}$ and $\{N, N'\}$ are both linearly independent, then $\kz(f)$ consists of the point defined by $M'$, $N$, and $L'$ and the point defined by $N'$, $L$, and $M$. If $M'$ is a multiple of $M$ but $N$ and $N'$ are linearly independent, then $\kz(f)$ is a singular conic defined by $LN$ and $M$. Similarly, if $M$ and $M'$ are linearly independent, and if $N'$ is a multiple of $N$, then $\kz(f)$ is the singular conic defined by $L'M$ and $N$. In particular, if $\{M,M'\}$ and $\{N,N'\}$ are both linearly dependent, then $\kz(f)$ is the union of three lines defined by $M=N=0$, $L=M=0$, and $L'=N=0$, respectively.
\end{proof}

\begin{proposition}\label{prop:surface_component}
If the zero scheme of a global section of $E$ contains a surface as a component, then the zero scheme of $f$ is isomorphic to a rational normal scroll of degree $4$, the union of two quadric surfaces, or the union of a quadric surface and a line intersecting in a point. 
\end{proposition}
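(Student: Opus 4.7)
The plan is to classify the global sections $f=(f^{(1)},f^{(2)},f^{(3)})$ of $E$ whose zero scheme $\kz(f)$ admits a two-dimensional irreducible component, via a divisor-class analysis in the Chow ring of $\PP^\bd$.

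The starting observation is that if $f^{(i)}\neq 0$, then $V(f^{(i)})$ is an effective divisor of class $\bone_i$, equal to $(0,1,1)$, $(1,0,1)$, or $(1,1,0)$ for $i=1,2,3$ respectively. An irreducible surface of class $(a_1,a_2,a_3)$ contained in all three divisors would have to satisfy $a_j\le(\bone_i)_j$ componentwise for every $i,j$, which forces $(a_1,a_2,a_3)=(0,0,0)$ and is absurd. Hence $\kz(f)$ has a surface component only when at least one coordinate $f^{(i)}$ vanishes identically; the case $f=0$ yields $\kz(f)=\PP^\bd$, which has no two-dimensional component and can be excluded.

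Suppose first that two coordinates vanish, say $f^{(2)}=f^{(3)}=0$. Then $\kz(f)=V(f^{(1)})$ is an effective divisor of class $(0,1,1)$. If $f^{(1)}$ is irreducible, $V(f^{(1)})$ is the pullback of a smooth $(1,1)$-curve $\bar C\subset\PP(V_2)\times\PP(V_3)$ and is abstractly isomorphic to $\PP^1\times\PP^1$; since the degree of a surface of class $(a,b,c)$ in the Segre embedding $\PP^\bd\hookrightarrow\PP^7$ is $2(a+b+c)$, this surface has degree $4$ in $\PP^5$ and is identified with a rational normal scroll of degree $4$. If $f^{(1)}$ factors, then $V(f^{(1)})$ splits into components of classes $(0,1,0)$ and $(0,0,1)$, each a quadric surface in the Segre embedding, so $\kz(f)$ is the union of two quadric surfaces.

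Suppose next that exactly one coordinate vanishes, say $f^{(1)}=0$, so that $\kz(f)=V(f^{(2)})\cap V(f^{(3)})$. The only effective class dominated componentwise by both $(1,0,1)$ and $(1,1,0)$ is $(1,0,0)$, so the common surface component must be of the form $V(L)$ for some $L\in V_1^*$. The inclusions $V(L)\subset V(f^{(i)})$ for $i=2,3$ force factorizations $f^{(2)}=LN$ and $f^{(3)}=LM$ with $N\in V_3^*$ and $M\in V_2^*$, after rescaling so that both factorizations share the same $L$. A direct computation then gives
\[
\kz(f)\;=\;V(L)\cup\bigl(V(M)\cap V(N)\bigr),
\]
where $V(L)$ is a quadric surface and $V(M)\cap V(N)$ is a line meeting $V(L)$ in the single point defined by $L=M=N=0$. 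The main technical obstacle is the initial divisor-class dichotomy, which must be argued with care; once it is in place, the factorization step and the identification of the three geometric types follow from standard degree computations on $\PP^\bd$.
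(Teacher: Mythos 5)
Your proof is correct and follows essentially the same skeleton as the paper's. The paper opens by letting $\ba$ be the divisor class of the surface component $\ky$ and observing that $f$ must lift to a nonzero section of $E(-\ba)$, so $H^0(\PP^\bd, E(-\ba)) \neq 0$ forces $\ba$ to be a permutation of $(0,1,1)$ or $(0,0,1)$; you reach the same dichotomy from the other direction by bounding the surface's class componentwise by $\bone_i$ for every nonzero $f^{(i)}$ and concluding that at least one coordinate of $f$ vanishes (respectively exactly two or exactly one). These two openings are logically equivalent — ``$H^0(E(-\ba))\neq 0$ iff $\ba\le\bone_i$ for some $i$'' — and your version is arguably a bit cleaner since it avoids the momentary tension in the paper's ``if the trihomogeneous polynomial of class $\ba$ is reducible'' subcase (which, read literally, would contradict $\ba$ being the class of the irreducible component $\ky$). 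After the dichotomy, both arguments proceed identically: the two-vanishing case gives $V(f^{(1)})$, either an irreducible degree-$4$ surface (a scroll, not a Veronese, as it contains rulings) or a union of two quadrics; the one-vanishing case forces a common linear factor $L$ of $f^{(2)}$ and $f^{(3)}$, yielding the quadric $V(L)$ plus the residual line $V(M)\cap V(N)$. Your degree computation $2(a+b+c)$ is correct in $A(\PP^1\times\PP^1\times\PP^1)$, and the identification of the scroll is sound.
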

\begin{proof}
If the zero scheme $\kz(f)$ of a global section $f$ of $E$ has a surface $\ky$ as a component, then there exists a triple $\ba$ such that $\ko(\ba)$ is the line bundle associated with $\ky$. Furthermore, the global section $f$ of $E$ is the image of a nonzero global section of $H^0(\PP^\bd,E(-\ba))$ under the multiplication map by a nonzero element of $H^0(\PP^\bd,\ko(\ba))$. 

By assumption, $H^0(\PP^\bd,E(-\ba)) \not= 0$. Thus, either (1) two of the integers in $\ba$ are $1$, and the remaining one is $0$, or (2) one of the integers in $\ba$ is $1$, and the remaining two are both $0$. 

(1) Without loss of generality, we may assume that $\ba = (0,1,1)$. The surface $\ky$ is defined by a trihomogeneous polynomial of tridegree $(0,1,1)$. If the trihomogeneous polynomial is irreducible, then $\ky$ is a nonsingular surface of degree $4$ in the Segre embedding, which is either a Veronese surface or a rational normal scroll \cite[p. 525]{griffiths1994principles}. However, since $\ky$ contains lines, it must be a rational normal scroll. If the trihomogeneous polynomial is reducible, then it is the product of two linear forms, one in $\pi^{(2)}$ and the other in $\pi^{(3)}$. Thus, the surface $\ky$ is the union of two subvarieties of $\PP^\bd$, each of which is the product of the first factor of $\PP^\bd$ and a line from one of the two rulings of the biprojective space of the second and third factors of $\PP^\bd$. Thus, it is the union of two quadric surfaces in the Segre embedding. 

(2) Without loss of generality, we may assume that $\ba = (0,0,1)$. The zero scheme $\kz(f)$ is defined by two trihomogeneous polynomials of tridegree $(1,0,1)$ and tridegree~$(0,1,1)$, which share a linear form in $\pi^{(3)}$ as a common factor, and the surface component $\ky$ of $\kz(f)$ is defined by this linear factor (the remaining component of $\kz(f)$ is the line defined by the other linear factors of the two trihomogeneous polynomials). Thus, the surface~$\ky$ is the biprojective space consisting of the first two factors of $\PP^\bd$, and hence it is a quadric surface in the Segre embedding. Furthermore, the zero scheme of $f$ is the union of $\ky$ and the line defined by the remaining linear factors of the two trihomogeneous polynomials. The surface and the line intersect at the point defined by the three linear forms appearing in the decompositions of the trihomogeneous polynomials into linear forms.
\end{proof}

We define the {\em type} of a global section $f$ of $E$ to be the largest component of the zero scheme of $f$. According to Propositions~\ref{prop:curve_component} and~\ref{prop:surface_component}, the possible types of $f$ are a nonsingular twisted cubic (cub), a nonsingular plane conic (con), a line (lin), a singular plane conic (ll), the union of a nonsingular plane conic and a line (cl), the union of three nonplanar nonconcurrent lines (lll), the union of three nonplanar concurrent lines (3l), the union of a quadric surface and a line (ql), a rational normal scroll (scr), and the union of two quadric surfaces (qq). 

If $\star \in \{\mbox{cub, con, lin, ll, cl, lll, 3l, ql, scr, qq}\}$, then we denote by $\Delta^\star$ the Zariski closure of the set of global sections of $E$ whose zero schemes are of type $\star$: 
\[
\Delta^\star \coloneqq \overline{\left\{[f] \in \Delta \mid \text{$f$ is of type $\star$}\right\}}\,. 
\]
Table~\ref{tab: description varieties singular strata} summarizes descriptions of $\Delta^\star$ for each type $\star$, including its dimension, degree, and number of irreducible components. The degrees with an asterisk in Table~\ref{tab: description varieties singular strata} were calculated using \verb|Macaulay2|.

\begingroup
\renewcommand{\arraystretch}{1.2}
\begin{table}[ht]
\centering
\resizebox{\textwidth}{!}{
\begin{tabular}{c|c|c|c|p{7.5cm}}
Type $\star$ & $\dim\Delta^\star$ & \#\{components\} & $\deg\Delta^\star$ & description of $\kz(f)$, $f$ generic on $\Delta^\star$\\ \hhline{=====}
$\mathrm{cub}$ & $8$ & $1$ & $11^*$ & Nonsingular twisted cubic, Proposition~\ref{prop: codim Delta_3(E)} \\\hline
$\mathrm{con}$ & $8$ & $3$ & $12=4\times 3$ & Nonsingular conic, Proposition~\ref{prop:conic_component} \\\hline
$\mathrm{lin}$ & $8$ & $3$ & $24=8\times 3$ & Line, Proposition~\ref{prop:line_component} \\\hline
$\mathrm{ll}$ & $7$ & $3$ & $24=8\times 3$ & Singular plane conic, \cite[Proposition 1]{mathrepo} \\\hline
$\mathrm{cl}$ & $7$ & $3$ & $30=10^*\times 3$ & Space connected cubic, union of a nonsingular conic and a line, \cite[Proposition 2]{mathrepo} \\\hline
$\mathrm{lll}$ & $6$ & $3$ & $30=10^*\times 3$ & Space connected cubic, union of three nonconcurrent lines, \cite[Proposition 3]{mathrepo} \\\hline
$\mathrm{3l}$ & $5$ & $1$ & $14^*$ & Union of three nonplanar, concurrent lines, \cite[Proposition 4]{mathrepo}\\\hline
$\mathrm{ql}$ & $4$ & $3$ & $12=4\times 3$ & Union of a quadric surface and a line,\newline Proposition~\ref{prop:Delta_1_2}(2) \\\hline
$\mathrm{scr}$ & $3$ & $3$ & $3=1\times 3$ & Rational normal scroll, Proposition~\ref{prop:Delta_1_2}(1) \\\hline
$\mathrm{qq}$ & $2$ & $3$ & $6=2\times 3$ & Union of two nonsingular quadrics,\newline 
\cite[Proposition 6]{mathrepo}\\
\end{tabular}
}
\cprotect\caption{Dimensions, number of components, and degrees of $\Delta^\star$.}\label{tab: description varieties singular strata}
\end{table}
\endgroup

For a fixed $f = (f^{(1)},f^{(2)},f^{(3)}) \in H^0(\PP^\bd,E)$, define a linear transformation 
\[
\Phi_f \colon \prod_{i=1}^3 V_i^* \to \bigotimes_{i=1}^3 V_i^*
\]
by $\Phi(L,M,N) =  Lf^{(1)}+Mf^{(2)}+Nf^{(3)}$. 

The linear subspace of the projective space of $\bigotimes_{i=1}^3 V_i^*$, the dual of the image of~$\Phi_f$, coincides with the linear span of the zero scheme of $f$ in the Segre embedding. Therefore, Propositions~\ref{prop:curve_component} and~\ref{prop:surface_component} characterize the elements of $\Delta$ whose zero schemes contain positive dimensional components in terms of the rank of $\Phi_f$ as in Table \ref{tab: components type zero schemes}.

\begingroup
\renewcommand{\arraystretch}{1.2}
\begin{table}[ht]
\centering
\begin{tabular}{c|c|c}
Type $\star$ & Largest component of $\kz(f)$ & $\mathrm{rank}\,\Phi_f$ \\ \hhline{===}
$\mathrm{scr}$ & Rational normal scroll/union of two quadric surfaces & $2$ \\ \hline 
$\mathrm{ql}$ & Quadric surface & $3$ \\ \hline 
$\mathrm{cub}$ & Possibly degenerate twisted cubic & $4$ \\ \hline 
$\mathrm{con}$ & Possibly degenerate plane conic & $5$ \\ \hline 
$\mathrm{lin}$ & Line & $6$
\end{tabular}
\caption{Correspondence between the rank of $\Phi_f$ and the type of the largest dimensional component of $\kz(f)$.}\label{tab: components type zero schemes}
\end{table}
\endgroup

The bulk of the remaining subsection is devoted to verifying Table \ref{tab: description varieties singular strata}.

\begin{proposition}\label{prop:Delta_1_2}
\begin{itemize}
    \item[(1)] The locus $\Delta^{\mathrm{scr}}$ consists of three $3$-planes.
    \item[(2)] The locus $\Delta^{\mathrm{ql}}$ consists of three irreducible components. Each irreducible component is the Segre embedding of $\PP^1 \times \PP^3$, and its dimension and degree are both $4$.  
\end{itemize}
\end{proposition}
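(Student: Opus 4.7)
The strategy is to apply Proposition~\ref{prop:surface_component}, which classifies the surface components of $\kz(f)$ by the multidegree $\ba$ of the line bundle $\ko(\ba)$ defining the surface: $\ba$ is a permutation of $(0,1,1)$ in the scroll/qq case, and a permutation of $(0,0,1)$ in the ql case. By the $\mathfrak{S}_3$-symmetry permuting the three factors of $\PP^\bd$, it is enough to fix one representative in each case and then cyclically permute to obtain the three components. Throughout, writing $g \in H^0(\PP^\bd, \ko(\ba))$ for the section defining $\ky$, the condition $\ky \subset \{f^{(i)} = 0\}$ is equivalent to the divisibility $g \mid f^{(i)}$ in the multigraded ring $R$.

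For part~(1), take $\ba = (0,1,1)$. A prospective quotient $f^{(i)}/g$ would have multidegree $\bone_i - (0,1,1)$, which is componentwise nonnegative only for $i=1$ (where it equals $(0,0,0)$). Hence every $f$ with $\kz(f) \supset \ky$ satisfies $f^{(2)} = f^{(3)} = 0$ and $f^{(1)} = \lambda g$ for some $\lambda \in \C$, giving $[f] = [(g,0,0)]$ with $[g] \in \PP H^0(\PP^\bd, \ko(0,1,1)) \cong \PP^3$. This $\PP^3$ already contains the degeneration to reducible $g$ (the qq case) as a codimension-one subvariety, so Zariski closure does not enlarge it. Cyclic relabeling produces three $\PP^3$'s, pairwise distinct because they lie in the three coordinate hyperplanes where exactly one $f^{(i)}$ is nonzero.

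For part~(2), take $\ba = (0,0,1)$, so $g = N$ is a linear form in $\pi^{(3)}$ and $\ky = \{N = 0\}$ is a quadric surface in the Segre embedding. Divisibility by $N$ forces $f^{(1)} = MN$ and $f^{(2)} = LN$ for some $M \in V_2^*$ and $L \in V_1^*$; since $f^{(3)}$ is independent of $\pi^{(3)}$ yet must vanish on $\ky$, we obtain $f^{(3)} = 0$. Hence every such $f$ has the form $N \cdot (M, L, 0)$, and the parameterization
\[
\varphi\colon \PP(V_3^*) \times \PP(V_1^* \oplus V_2^*) \longrightarrow \PP H^0(\PP^\bd, E), \qquad ([N], [(M,L)]) \longmapsto [(MN, LN, 0)]
\]
identifies, on the level of coordinates, with the Segre embedding of $\PP^1 \times \PP^3$ into the $7$-plane $\PP(H^0(\ko(0,1,1)) \oplus H^0(\ko(1,0,1))) \subset \PP H^0(\PP^\bd, E)$: a direct expansion gives the coordinates $n_i \cdot w_j$ with $w = (m_1, m_2, l_1, l_2)$. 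Thus the image is irreducible of dimension $4$ and degree $\binom{4}{1} = 4$. The one subtlety worth verifying is the injectivity of $\varphi$, which follows at once from unique factorization of bilinear forms: the pair $([N], [(M,L)])$ is recovered from $[(MN, LN)]$ up to joint rescaling. Running over the three cyclic choices of $\ba$ produces three irreducible components, distinct because they sit inside three different $7$-planes.
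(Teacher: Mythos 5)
Your proof is correct and follows essentially the same route as the paper: both parts reduce, via the classification in Proposition~\ref{prop:surface_component}, to identifying which $f^{(i)}$ can be divisible by the defining section of the surface, and then recognizing the resulting parameter space as either a $\PP^3$ or a Segre embedding of $\PP^1 \times \PP^3$. Your extra observations (the explicit divisibility bookkeeping via $\bone_i - \ba$, the injectivity of $\varphi$, and the remark that passing to Zariski closure adds nothing) are harmless elaborations of steps the paper leaves implicit.
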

\begin{proof}
(1) The proof of Proposition~\ref{prop:surface_component} indicates that the zero scheme of a global section $f$ of $E$ is a rational normal scroll or the union of two quadric surfaces if and only if one of the coordinate functions of $f$ is nonzero, but the rest are zero. Thus, the locus $\Delta^{\mathrm{scr}}$ consists of the $3$-planes $\PP H^0(\PP^\bd,\ko(\bone_1)) = \PP(V_2^* \otimes V_3^*)$, $\PP H^0(\PP^\bd,\ko(\bone_2)) = \PP(V_1^* \otimes V_3^*)$, and $\PP H^0(\PP^\bd,\ko(\bone_3)) = \PP(V_1^* \otimes V_2^*)$.  

(2) The proof of Proposition~\ref{prop:surface_component} implies that a global section $f$ of $E$ defines the union of a quadric surface and a line precisely when one of the coordinate functions is zero,  but the rest are products of linear forms and share a linear factor. Hence, there exist three components, each determined by which coordinate function is zero. 
One of the loci is  
\[
\{[(0,LN,LM)] \in \Delta \mid  L \in V_1^*, M \in V_2^*, N \in V_3^*\}\,,
\]
which is the Segre embedding of $\PP(V_1^*) \times \PP(V_2^* \oplus V_3^*)$ in $\PP(V_1^* \otimes (V_2^* \oplus V_3^*))$ defined by sending $([L],[(M,N)])$ to $[(LN,LM)]$, and similarly, one can show that the remaining loci are also the Segre embedding of $\PP^1 \times \PP^3$. 
\end{proof}

\begin{proposition}\label{prop: codim Delta_3(E)}
The locus $\Delta^{\mathrm{cub}}$ is irreducible of dimension $8$. 
\end{proposition}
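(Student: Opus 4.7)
The plan is to realize $\Delta^{\mathrm{cub}}$ as the image of an irreducible projective bundle over a parameter space of smooth twisted cubics, which will yield both irreducibility and the dimension count. Let $T$ denote the space of smooth curves $\kc \subset \PP^\bd$ of multidegree $(1,1,1)$ in $A(\PP^\bd)$. The argument in the proof of Proposition~\ref{prop:curve_component} shows that the complete intersection of a form of multidegree $(1,0,1)$ and a form of multidegree $(1,1,0)$ has class $(h_1+h_3)(h_1+h_2) = h_1 h_2 + h_1 h_3 + h_2 h_3$, so any smooth twisted cubic appearing as the zero scheme of a global section of $E$ lies in $T$; conversely, every $\kc \in T$ is the image of an embedding $\phi = (\phi_1,\phi_2,\phi_3)\colon \PP^1 \hookrightarrow \PP^\bd$ in which each $\phi_i\colon \PP^1 \to \PP^1$ is an isomorphism, since the projections $\kc \to \PP^1$ have degree $1$. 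The group $(\mathrm{PGL}_2)^3$ acts transitively on such embeddings with stabilizer isomorphic to the diagonal $\mathrm{PGL}_2$, so $T$ is irreducible of dimension $9-3=6$.

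Next, I consider the incidence variety
\[
I \coloneqq \bigl\{(\kc, [f]) \in T \times \PP H^0(\PP^\bd, E) : f^{(i)} \in H^0(\PP^\bd, \mathcal{I}_\kc(\bone_i)) \text{ for all } i \in [3]\bigr\},
\]
together with the natural projections $\pi_T\colon I \to T$ and $\pi_E\colon I \to \PP H^0(\PP^\bd, E)$. For each $\kc \in T$, twisting the resolution of $\ko_\kc$ from the proof of Proposition~\ref{prop:curve_component} by $\ko(\bone_i)$ and applying the Künneth formula gives $H^0(\PP^\bd, \mathcal{I}_\kc(\bone_i)) \cong \C$ and $H^1(\PP^\bd, \mathcal{I}_\kc(\bone_i)) = 0$ for each $i \in [3]$. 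Hence the fiber of $\pi_T$ over $\kc$ is $\PP\bigl(\bigoplus_{i=1}^3 H^0(\PP^\bd, \mathcal{I}_\kc(\bone_i))\bigr) \cong \PP^2$, and $\pi_T$ is a $\PP^2$-bundle over the irreducible base $T$. Therefore $I$ is irreducible of dimension $6+2=8$.

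Finally, I would verify that $\pi_E(I)$ coincides with $\Delta^{\mathrm{cub}}$ (after closure) and that $\pi_E$ is generically injective. By Proposition~\ref{prop:curve_component}, for generic $[f]$ in the fiber over $\kc$ the two components $f^{(2)}, f^{(3)}$ already cut out $\kc$ as a complete intersection, so $\kz(f)=\kc$; this identifies the image of the type-cub stratum of $\pi_E$ with $\Delta^{\mathrm{cub}}$, and shows that the cubic $\kc$ is uniquely recovered from a generic $[f] \in \Delta^{\mathrm{cub}}$ as $\kz(f)$, giving generic injectivity. The image of an irreducible variety under a generically injective morphism has the same dimension and is irreducible, hence $\Delta^{\mathrm{cub}}$ is irreducible of dimension $8$. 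The main technical point is establishing the uniform cohomology computation $h^0(\PP^\bd,\mathcal{I}_\kc(\bone_i)) = 1$ over the whole of $T$, not merely on a dense open subset; this is handled by transporting the computation from the diagonal twisted cubic to every $\kc \in T$ using the $(\mathrm{PGL}_2)^3$-equivariance of the line bundles $\ko(\bone_i)$ and of the resolution from Proposition~\ref{prop:curve_component}.
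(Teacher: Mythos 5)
Your proof is correct and takes a genuinely different route from the paper's. The paper projects to the $\PP^7$ of pairs $(f^{(2)},f^{(3)})$, shows the image of $\{[f]:\kz(f)\text{ is a nonsingular twisted cubic}\}$ is dense there, and shows the fiber over a generic pair is the one-parameter family $\{[(\alpha f^{(1)},f^{(2)},f^{(3)})]:\alpha\in\C\}$, giving $7+1=8$. You instead parametrize by the cubic itself: the space $T$ of smooth $(1,1,1)$-curves is a $6$-dimensional homogeneous space for $(\mathrm{PGL}_2)^3$, the twisted Koszul resolution yields $h^0(\mathcal{I}_\kc(\bone_i))=1$ uniformly in $\kc$, so the incidence variety $I$ is a $\PP^2$-bundle over $T$ and $\dim I=6+2=8$; generic injectivity of $\pi_E$ then transfers irreducibility and dimension to $\Delta^{\mathrm{cub}}$. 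The paper's argument is shorter and needs no cohomology, but it is asymmetric in the three factors; yours is symmetric, ties $\Delta^{\mathrm{cub}}$ directly to the Fano-type parameter space of twisted cubics, and the $(\mathrm{PGL}_2)^3$-equivariant packaging would be the natural starting point if one later wanted to compute further invariants such as the degree $11$ reported in the paper. Two points deserve a sentence in a polished write-up: (a) it is Grauert's theorem (constant $h^0$ over the reduced, connected base $T$), or equivalently the equivariance argument you sketch, that upgrades ``every fiber of $\pi_T$ is a $\PP^2$'' to ``$I$ is a $\PP^2$-bundle,'' which is what you actually use to conclude $I$ is irreducible; and (b) the stabilizer of a general $\kc=\phi(\PP^1)$ is the conjugate $\{(\phi_1 h\phi_1^{-1},\phi_2 h\phi_2^{-1},\phi_3 h\phi_3^{-1}):h\in\mathrm{PGL}_2\}$ of the diagonal, not the diagonal itself — harmless for the dimension count but worth stating precisely.
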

\begin{proof}
The locus $\Delta^{\mathrm{cub}}$ is the closure of the open subset
\[
    \ku \coloneqq \left\{[f] \in \Delta \mid \text{$\kz(f)$ is a nonsingular twisted cubic}\right\}\,.
\]
We prove that $\dim\,\ku=8$. Define the rational map $\Psi\colon\PP H^0(\PP^\bd,E) \dashrightarrow \PP^7 = \PP((V_1^* \oplus V_3^*)\otimes (V_1^* \oplus V_2^*))$ by $\Psi([f]) = [(f^{(2)},f^{(3)})]$ if~$f = (f^{(1)},f^{(2)},f^{(3)})$. The image $\ku$ under $\Psi$, denoted $\ku'$, is an open subset of $\PP^7$, and hence it has dimension $7$. Thus, it suffices to show that for any point $[f] \in \ku$, the fiber of the morphism obtained from $\Psi$ by restricting to $\ku$ over $\Psi([f])$ has dimension $1$. 

If the element $[f]$ lies in $\ku$, then $\dim\ker\,\Phi_f = 2$ (see Table~\ref{tab: components type zero schemes}), which is equivalent to the condition that the linear syzygies of $f$ are generated precisely by two linear relations of~$f$, say 
\begin{equation}\label{eq: linear_relations}
    Lf^{(1)}+Mf^{(2)}+Nf^{(3)} = L'f^{(1)}+M'f^{(2)}+N'f^{(3)} = 0\,. 
\end{equation}
    
Note that since $f^{(1)}$, $f^{(2)}$, and $f^{(3)}$ are irreducible, if $Lf^{(1)}+Mf^{(2)}+Nf^{(3)} = 0$ is a nontrivial linear relation of $f$, then none of $L$, $M$, and $N$ are zero. 

Let $\Pi$ be the biprojective space 
\[
\left\{\left.\left(\lambda_1 \pi^{(2)}_1+\lambda_2 \pi^{(2)}_2\right)f^{(2)} + \left(\mu_1 \pi^{(3)}_1+\mu_2 \pi^{(3)}_2\right)f^{(3)} \, \right| \, [\lambda_1:\lambda_2] \in \PP^1, [\mu_1:\mu_2] \in \PP^1\right\}\,.
\]
Because of the linear relations~\eqref{eq: linear_relations} of $f$, the pencil of $Lf^{(1)}$ and $L'f^{(1)}$ is the same as that of $Mf^{(2)}+Nf^{(3)}$ and $M'f^{(2)}+N'f^{(3)}$, which it can be interpreted as the diagonal embedding of the pencil of $Lf^{(1)}$ and $L'f^{(1)}$ in $\Pi$. We call the pencil of $Lf^{(1)}$ and $L'f^{(1)}$ the diagonal embedding of the line associated with $f$. 
    
Suppose that $[f'] \in \Psi^{-1}(\Psi([f]))$. Let $(f')^{(1)}$ be the first coordinate function of $f'$. The diagonal embeddings of the lines in $\Pi$ associated with $f$ and $f'$ intersect nontrivially, so they must share elements. Thus, there exist linear forms $L$ and~$L'$ in $\pi^{(1)}$ such that~$Lf^{(1)} = L'(f')^{(1)}$, from which it follows that $f^{(1)}$ and $(f')^{(1)}$ (and hence $L$ and $L'$) differ only by a scalar multiple. Therefore, we have 
\[
    \Psi^{-1}\left(\Psi([f])\right) = \Psi^{-1}\left(\left[\left(f^{(2)},f^{(3)}\right)\right]\right) = \left\{\left.\left(\alpha f^{(1)},f^{(2)},f^{(3)}\right) \, \right| \, \alpha \in \C\right\}\,.
\]
Thus, we proved that $\ku$ has dimension $8$. Furthermore, each fiber of $\Psi$ is irreducible of the same dimension. Therefore, $\ku$, and hence its closure, is irreducible, and hence we completed the proof.
\end{proof}

\begin{remark}
Let $f \in H^0(\PP^\bd,E)$. If $[f]$ is in $\Delta^{\mathrm{scr}}$ or $\Delta^{\mathrm{ql}}$, then the zero scheme of $f$ contains a possibly degenerate twisted cubic, and hence $[f]$ appears as a limit point of~$\Delta^{\mathrm{cub}}$. This implies that $\Delta^{\mathrm{cub}}$ coincides with the set of elements of $\Delta$ such that their associated linear transformations from $\bigoplus_{i=1}^3 V_i^*$ to $\bigotimes_{i=1}^3 V_i^*$ have rank $4$ or less.
Using the representation of the components of $f=(f^{(1)}, f^{(2)}, f^{(3)})$ in \eqref{eq: components global section E 222}, one can show that the matrix of the linear transformation associated with $f$ relative to the basis $\{\pi^{(i)}_1,\pi^{(i)}_2\}_{i \in [3]}$ for $\bigoplus_{i=1}^3 V_i^*$ and the basis $\{\pi^{(1)}_i \otimes \pi^{(2)}_j \otimes \pi^{(3)}_k\}_{i,j,k \in [2]}$ for $\bigotimes_{i=1}^3 V_i^*$ is the transpose of the following $6\times 8$ matrix: 
\[
    \begin{pmatrix}
    a_{11}^{(1)}&0&a_{12}^{(1)}&0&a_{21}^{(1)}&0&a_{22}^{(1)}&0\\[2pt]
    0&a_{11}^{(1)}&0&a_{12}^{(1)}&0&a_{21}^{(1)}&0&a_{22}^{(1)}\\[2pt]
    a_{11}^{(2)}&a_{12}^{(2)}&0&0&a_{21}^{(2)}&a_{22}^{(2)}&0&0\\[2pt]
    0&0&a_{11}^{(2)}&a_{12}^{(2)}&0&0&a_{21}^{(2)}&a_{22}^{(2)}\\[2pt]
    a_{11}^{(3)}&a_{12}^{(3)}&a_{21}^{(3)}&a_{22}^{(3)}&0&0&0&0\\[2pt]
    0&0&0&0&a_{11}^{(3)}&a_{12}^{(3)}&a_{21}^{(3)}&x_{22}^{(3)}
    \end{pmatrix}\,, 
\]
whose $5 \times 5$ minors give rise to the set-theoretic equations for $\Delta^{\mathrm{cub}}$. 
\end{remark}

\begin{proposition}\label{prop:conic_component}
The locus $\Delta^{\mathrm{con}}$ consists of three irreducible components, each with dimension $8$ and degree $4$. 
\end{proposition}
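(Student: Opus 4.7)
The plan mirrors the parametrizations used in Proposition~\ref{prop:Delta_1_2}. By the case analysis at the end of the proof of Proposition~\ref{prop:curve_component}, the maximal component of $\kz(f)$ is a nonsingular plane conic precisely when two of the three forms $f^{(k)}$ are reducible and share a common linear factor in the variable they both involve, while the third is an irreducible form of the corresponding multidegree. Up to the $S_3$-symmetry permuting the factors of $\PP^\bd$, there are three such configurations, indexed by which variable $\pi^{(i)}$ contains the shared factor, giving a natural decomposition $\Delta^{\mathrm{con}} = \Delta^{\mathrm{con}}_1 \cup \Delta^{\mathrm{con}}_2 \cup \Delta^{\mathrm{con}}_3$. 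By symmetry it suffices to treat $\Delta^{\mathrm{con}}_2$, where the shared factor $M$ lies in $V_2^*$ and the generic element has the form $(MN'',\,g,\,LM)$ with $[M]\in\PP(V_2^*)$, $N''\in V_3^*$, $L\in V_1^*$, and $g\in V_1^*\otimes V_3^*$.

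The plan is to realize $\Delta^{\mathrm{con}}_2$ as the projective join of the linear subspace $\PP(V_1^*\otimes V_3^*)\cong\PP^3$ (the locus $f^{(1)}=f^{(3)}=0$) with the variety
\[
W_2 \coloneqq \overline{\{[(MN'',\,0,\,LM)] \mid M\in V_2^*,\,N''\in V_3^*,\,L\in V_1^*\}} \subset \PP((V_2^*\otimes V_3^*)\oplus(V_1^*\otimes V_2^*))\,.
\]
In turn, $W_2$ is the birational image of the projective bundle $\PP(\mathcal{F})\to\PP^1=\PP(V_2^*)$, where $\mathcal{F}$ is the rank-$4$ subbundle of $((V_2^*\otimes V_3^*)\oplus(V_1^*\otimes V_2^*))\otimes\ko_{\PP^1}$ with fiber $M\otimes V_3^*\oplus V_1^*\otimes M$ at $[M]$. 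Via the tautological inclusion $\ko_{\PP^1}(-1)\hookrightarrow V_2^*\otimes\ko$, one obtains the splitting $\mathcal{F}\cong\ko_{\PP^1}(-1)^{\oplus 4}$. Since $\PP(\mathcal{F})$ is a smooth irreducible $\PP^3$-bundle of dimension $4$ and the morphism to $W_2$ is birational (a generic decomposable tensor $M\otimes N''$ determines $[M]$ uniquely as its left factor), $W_2$ is irreducible of dimension $4$. Joining with the disjoint $\PP^3$ yields an irreducible $\Delta^{\mathrm{con}}_2$ of dimension $8$. For the degree, with $\zeta\coloneqq c_1(\ko_{\PP(\mathcal{F})}(1))$ and $H\coloneqq c_1(\ko_{\PP^1}(1))$ pulled back, the Chern polynomial $c(\mathcal{F})=(1-H)^4=1-4H$ in $A^*(\PP^1)=\Z[H]/(H^2)$ reduces the defining relation of the bundle to $\zeta^4=4H\zeta^3$. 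Combined with $\pi_*(\zeta^3)=1$ on a $\PP^3$-bundle, the projection formula gives $\int_{\PP(\mathcal{F})}\zeta^4=4\int_{\PP^1}H=4$, so $\deg W_2=4$. Since the projective join with a linear subspace preserves the degree of the non-linear base, $\deg\Delta^{\mathrm{con}}_2=4$. The same argument, with the indices $1,2,3$ permuted, handles $\Delta^{\mathrm{con}}_1$ and $\Delta^{\mathrm{con}}_3$.

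The two most delicate steps will be: first, justifying the birationality of $\PP(\mathcal{F})\to W_2$, which requires that for a generic pair $(MN'',LM)$ of nonzero decomposable tensors with compatible factorizations, the line $[M]\in\PP(V_2^*)$ is uniquely determined; and second, verifying that the three $\Delta^{\mathrm{con}}_i$ are genuinely distinct irreducible components rather than pieces of one another---this will follow from the fact that their generic elements determine both the shared linear factor and the index $i$ uniquely, using the case analysis in the proof of Proposition~\ref{prop:curve_component}.
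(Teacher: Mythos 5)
Your proposal is correct and follows essentially the same route as the paper: both identify each component $\Delta^{\mathrm{con}}_i$ as a cone with a linear $\PP^3$ vertex over the Segre embedding of $\PP^1\times\PP^3$, giving dimension $4+3+1=8$ and degree $4$. The only distinction is that the paper simply quotes the classical degree of this Segre variety, while you reprove it by realizing it as the image of $\PP(\ko_{\PP^1}(-1)^{\oplus 4})$ and computing $\int\zeta^4=4$ via the projective-bundle relation --- the underlying geometric identification is identical.
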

\begin{proof}
Let $f=(f^{(1)},f^{(2)},f^{(3)}) \in H^0(\PP^\bd,E)$. If the zero scheme of $f$ is a nonsingular plane conic, then one of the coordinate functions of $f$ is irreducible, and the others are reducible and share a linear factor. Thus, three different components exist, each corresponding to which coordinate function is irreducible. Since the locus $\Delta^{\mathrm{con}}$ is invariant under the permutation on the three variable sets $\pi^{(1)}$, $\pi^{(2)}$, and $\pi^{(3)}$, it suffices to prove the proposition for one of the components.  

Let $P \coloneqq  \PP (V_2^* \otimes V_3^*) \times \PP(V_1^*) \times \PP(V_2^* \oplus V_3^*)$, and let 
\[
    \ku \coloneqq \left\{\left.\left(\left[f^{(1)}\right],\left[L\right],\left[(M,N)\right]\right) \in P\ \right|\ \text{$f^{(1)}$ is irreducible}\right\}\,. 
\]
Define the map $\Psi\colon P \to \PP H^0(\PP^\bd,E)$ by 
\[
    \Psi\left(\left[f^{(1)}\right],\left[L\right],\left[(M,N)\right]\right) = \left[\left(f^{(1)}, LM, LN\right)\right]\,.
\]
The image of $\ku$ under $\Psi$ forms an open subset of the image of $\Psi$. It also lies in $\Delta^{\mathrm{con}}$, and hence so does its Zariski closure (which coincides with the image of $\Psi)$.  

The map from $\PP(V_1^*) \times \PP(V_2^* \oplus V_3^*)$ to $\PP((V_1^* \otimes V_2^*) \oplus (V_1^* \otimes V_3^*))$ defined by sending $([L], [(M,N)])$ to $[(LM,LN)]$ is the Segre embedding of $\PP(V_1^*)\times \PP(V_2^* \oplus V_3^*)$, and hence the image of $\Psi$ is the cone over this Segre embedding with the vertex $\PP (V_2^* \otimes V_3^*)$. In particular, it is irreducible of dimension $8$ and degree $4$. 
\end{proof}

\begin{remark}\label{rmk: equations of conic component}
For each $i \in [3]$, denote by $\Delta_i^{\mathrm{con}}$ the irreducible component of $\Delta^{\mathrm{con}}$ which contains a global section $f = (f^{(1)}, f^{(2)}, f^{(3)})$ of $E$ such that $f^{(i)}$ is irreducible, but the others are reducible and share a linear factor. This remark concerns a determinantal expression for the equations for $\Delta_i^{\mathrm{con}}$. We discuss only the case where $i=1$ as the remaining cases can be treated similarly. 
    
As was shown in the proof of Proposition~\ref{prop:conic_component}, the locus $\Delta_1^{\mathrm{con}}$ is the cone over the image of the Segre map from $\PP(V_1^*) \times \PP(V_2^* \oplus V_3^*)$ to $\PP(V_1^* \otimes (V_2^* \oplus V_3^*))$ with vertex $\PP(V_2^* \otimes V_3^*)$. Thus, it is enough to find the equations for the image of this Segre map. 

Every element of $V_1^* \otimes (V_2^* \oplus V_3^*)$ can be identified with a pair of an $f^{(3)} \in H^0(\PP^\bd,\ko(\bone_3))$ and an $f^{(2)} \in H^0(\PP^\bd,\ko(\bone_2))$, because 
\[
    V_1^* \otimes (V_2^* \oplus V_3^*) =  (V_1^* \otimes V_2^*) \oplus (V_1^* \otimes V_3^*) = H^0(\PP^\bd,\ko(\bone_3))\oplus H^0(\PP^\bd,\ko(\bone_2))\,.
\]
The element $[(f^{(3)},f^{(2)})] \in \PP (V_1^* \otimes (V_2^* \oplus V_3^*))$ is contained in the image of the Segre map if and only if $(f^{(3)},f^{(2)})$ has rank $1$ when considering it as a linear transformation from $V_1$ to $V_3^* \oplus V_2^*$. The latter is equivalent to the condition that the $2 \times 2$ minors of the matrix representation of the linear transformation relative to given bases for $V_1$ and $V_3^* \oplus V_2^*$ are all zero. More specifically, if $f^{(2)} = \sum_{j,k=0}^1 a_{jk}^{(2)} \pi^{(1)}_j \pi^{(3)}_k$ and let $f^{(3)} = \sum_{j,k=0}^1 a_{jk}^{(3)} \pi^{(1)}_j \pi^{(2)}_k$, then the matrix of the linear transformation $(f^{(3)}, f^{(2)})\colon V_1 \to V_3^* \oplus V_2^*$ relative to $\{e^{(1)}_1,e^{(1)}_2\}$ and $\{\pi^{(3)},\pi^{(2)}\}$ is 
\[
    \begin{pmatrix}
    a^{(3)}_{11} & a^{(3)}_{12} & a^{(2)}_{11} & a^{(2)}_{12} \\[2pt]
    a^{(3)}_{21} & a^{(3)}_{22} & a^{(2)}_{21} & a^{(2)}_{22} 
    \end{pmatrix}\,, 
\]
and hence $\Delta_1^{\mathrm{con}}$ is defined by the $2 \times 2$ minors of this matrix. 
\end{remark}

\begin{proposition}\label{prop:line_component}
The locus $\Delta^{\mathrm{lin}}$ consists of three irreducible components, each having dimension $8$. 
\end{proposition}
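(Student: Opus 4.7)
My plan is to parametrize each conjectured component of $\Delta^{\mathrm{lin}}$ via an incidence variety of sections whose zero scheme contains a prescribed line, and then read off irreducibility and dimension from a projective bundle structure. Every line in the Segre embedding of $\PP^\bd = \PP(V_1) \times \PP(V_2) \times \PP(V_3)$ has the form $\kl = \{p_j\} \times \{p_k\} \times \PP(V_i)$ (up to permutation of factors) for some $\{i,j,k\}=[3]$, so such lines fall into three $2$-dimensional families $\Lambda_i \cong \PP^1 \times \PP^1$ indexed by the ``free'' factor $i$. I expect to show that the closure of the locus of sections whose zero scheme is a line of type $i$ forms an irreducible $8$-dimensional component $\Delta_i^{\mathrm{lin}} \subseteq \Delta^{\mathrm{lin}}$, and that these three are the only components.

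For fixed $i$ and $\kl \in \Lambda_i$, the linear subspace $W_\kl \subseteq H^0(\PP^\bd, E)$ of sections vanishing on $\kl$ is the kernel of the restriction map $H^0(\PP^\bd, E) \to H^0(\kl, E|_\kl)$. Since $\ko(\bone_k)|_\kl \cong \ko_{\PP^1}(1-\delta_{ki})$, the target has dimension $2+2+1=5$, and surjectivity of the restriction (standard for globally generated line bundles on a rational curve) yields $\dim W_\kl = 12 - 5 = 7$, so $\PP(W_\kl) \cong \PP^6$. As $\kl$ varies over $\Lambda_i$, the spaces $W_\kl$ assemble into a rank-$7$ subbundle of the trivial bundle $H^0(\PP^\bd, E) \otimes \ko_{\Lambda_i}$. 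Hence the incidence variety
\[
I_i \coloneqq \{(\kl, [f]) \in \Lambda_i \times \PP H^0(\PP^\bd, E) \mid \kl \subseteq \kz(f)\}
\]
is a $\PP^6$-bundle over $\Lambda_i \cong \PP^1 \times \PP^1$, hence irreducible of dimension $8$.

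The image of the second projection $\varphi_i \colon I_i \to \PP H^0(\PP^\bd, E)$ is a closed irreducible subvariety contained in $\Delta^{\mathrm{lin}}$. To conclude $\dim \mathrm{Im}(\varphi_i) = 8$, I would verify that $\varphi_i$ is generically injective, i.e., for generic $f \in W_\kl$ the line $\kl$ is the unique positive-dimensional component of $\kz(f)$. This reduces to a residual intersection calculation in $A(\PP^\bd) = \Z[h_1,h_2,h_3]/\langle h_1^2,h_2^2,h_3^2\rangle$: for $i=2$, one has $[V(f^{(2)}, f^{(3)})] = (h_1+h_3)(h_1+h_2) = h_1 h_2 + h_1 h_3 + h_2 h_3$ and $[\kl] = h_1 h_3$, so the residual curve has class $h_1 h_2 + h_2 h_3$; intersecting with $V(f^{(1)})$ (class $h_2+h_3$) yields total degree $1$ against the point class $h_1 h_2 h_3$. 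Thus the residual scheme of $\kl$ in $\kz(f)$ has degree one, hence is a single reduced point for generic $f$. The three components $\Delta_i^{\mathrm{lin}}$ are distinct because the direction of the line component is an invariant of a generic point, and every $[f] \in \Delta^{\mathrm{lin}}$ belongs to some $\Delta_i^{\mathrm{lin}}$ by Proposition~\ref{prop:curve_component}.

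The main obstacle I anticipate is the scheme-theoretic residual step: a priori the line $\kl$ could appear with multiplicity $\ge 2$ in $\kz(f)$ for generic $f \in W_\kl$, which would invalidate the residual degree-$1$ conclusion. I plan to resolve this by a transversality argument on $I_i$, showing that the sublocus of sections vanishing to higher order along $\kl$ has strictly smaller dimension than $I_i$ itself; alternatively, one can pass to an affine chart centered on $\kl$ and exhibit a specific $f \in W_\kl$ whose zero scheme is scheme-theoretically $\kl$ together with one transverse reduced point, which suffices to establish genericity.
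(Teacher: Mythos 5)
Your approach is genuinely different from the paper's and is essentially correct, though your picture of the residual point is wrong in a way worth flagging.

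The paper parametrizes an open dense subset of each $\Delta_i^{\mathrm{lin}}$ directly by the coefficient data $[(MN'+M'N, LN, L'M)]$ and fibers it over the pair $[(LN,L'M)]$, whose locus is the complete intersection of two quadric cones (dimension $5$), with $3$-dimensional fibers parametrized by $(M',N')$. Your incidence-variety route goes the other way: you fiber over the $2$-dimensional Fano variety of lines $\Lambda_i \cong \PP^1\times\PP^1$ with $\PP^6$-fibers $\PP(W_\kl)$, using surjectivity of the restriction $H^0(\PP^\bd,E)\to H^0(\kl,E|_\kl)$ (which your computation of $\ko(\bone_k)|_\kl \cong \ko_{\PP^1}(1-\delta_{ki})$ and $\dim H^0(\kl,E|_\kl)=5$ establishes). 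Both yield an irreducible $8$-dimensional parameter space, and the bundle picture is arguably the more conceptual of the two.

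The gap is in the residual step. Work through the case $i=2$ with $\kl = \{p_1\}\times\PP(V_2)\times\{p_3\}$, $L$ vanishing at $p_1$, $N$ vanishing at $p_3$: for $f\in W_\kl$ one has $f^{(1)}=MN$, $f^{(3)}=LM'$, and $f^{(2)}$ a bilinear form in $(\pi^{(1)},\pi^{(3)})$ vanishing at $(p_1,p_3)$. Then $V(f^{(2)},f^{(3)}) = \kl \cup C$ with $C = V(M',f^{(2)})$ a conic, and $C\cap V(f^{(1)})$ is the single point $(p_1,p_{M'},p_3)$ — but this point already lies on $\kl$. Likewise $V(M)\cap V(f^{(2)})\cap V(L) = \{(p_1,p_M,p_3)\}\subset\kl$. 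So for generic $f\in W_\kl$ one has $\kz(f) = \kl$ set-theoretically, \emph{not} $\kl$ together with a transverse reduced point; the degree-$1$ residual class $h_1h_2h_3$ is supported on $\kl$. Your proposed patch — exhibiting an $f$ with $\kz(f)=\kl\sqcup\{\text{pt}\}$ — would therefore fail, because no such $f$ in $W_\kl$ exists. The good news is that $\kz(f)=\kl$ is even better for your purposes: it makes $\varphi_i$ literally injective on the generic locus, so $\dim\overline{\mathrm{Im}(\varphi_i)}=8$ without any residual bookkeeping. The other parts of your argument (the three components are distinct by the direction of the line, and they exhaust $\Delta^{\mathrm{lin}}$ by Proposition~\ref{prop:curve_component}) are fine.

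One smaller caveat: $\mathrm{Im}(\varphi_i)$ itself is not contained in $\Delta^{\mathrm{lin}}$, since sections in $W_\kl$ whose zero scheme has a conic or surface component are of a different type. What you actually want is that the open subset of $\mathrm{Im}(\varphi_i)$ of sections of type $\mathrm{lin}$ is dense in $\mathrm{Im}(\varphi_i)$ (which follows from $\kz(f)=\kl$ for generic $f$), so that $\overline{\mathrm{Im}(\varphi_i)}=\Delta_i^{\mathrm{lin}}$. This is implicit in your write-up but should be said explicitly.
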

\begin{proof}
If the zero scheme of $f=(f^{(1)},f^{(2)},f^{(3)}) \in H^0(\PP^\bd,E)$ is a line, then one of the coordinate functions of $f$ is irreducible, and the others are reducible but do not share any factor. Thus, three different components exist, each corresponding to which coordinate function is irreducible. If it is $f^{(i)}$, then the corresponding component of $\Delta^{\mathrm{lin}}$ is denoted by $\Delta_i^{\mathrm{lin}}$. Since the locus $\Delta^{\mathrm{lin}}$ is invariant under the permutation on the three variable sets $\pi^{(1)}$, $\pi^{(2)}$, and $\pi^{(3)}$, we only prove one of the components is irreducible of dimension $8$.  

Let $\ku$ be the open subset of points of $\Delta_1^{\mathrm{lin}}$ of the form $[(MN'+M'N, LN, L'M)]$ with $L, L' \in V_1^*\setminus \{\bzero\}, M, M' \in V_2^* \setminus \{\bzero\}$, and $N, N' \in V_3^*\setminus \{\bzero\}$, where both $\{M,M'\}$ and $\{N,N'\}$ are linearly independent so that $MN'+M'N$ is irreducible. Define the map $\Psi\colon \ku\to\PP((V_1^* \otimes V_3^*) \oplus (V_1^* \otimes V_2^*))$ by $\Psi([(MN'+M'N, LN, L'M)]) = [(LN,L'M)]$.
The image of this map is the complete intersection of two quadric hypersurfaces; one is the cone over the Segre embedding of~$\PP(V_1^*) \times \PP(V_3^*)$ in $\PP(V_1^* \otimes V_3^*)$ given by sending $([L],[N])$ to $[LN]$, and the other is the cone over the Segre embedding of~$\PP(V_1^*) \times \PP(V_2^*)$ in~$\PP(V_1^* \otimes V_2^*)$ by sending $([L'],[M])$ to $[L'M]$. So, it is irreducible of dimension $5$. The fiber of $\Psi$ over $[(LN,L'M)]$ is 
\[
    \{[(MN'+M'N,LN,L'M)] \mid M' \in V_2^* \setminus \{\bzero\}, N' \in V_3^* \setminus \{\bzero\}\}\,,
\]
and thus, it can be identified with an open subset of the $3$-plane $\{[(M',N')] \mid M' \in V_2^*, N' \in V_3^*\}$, which implies that $\ku$, and hence its closure, is irreducible and has dimension $8$. 
\end{proof}

\begin{remark}\label{rmk: equations of line component}
This remark concerns the equations for $\Delta_i^{\mathrm{lin}}$ for each $i \in [3]$. We discuss only the case where $i=1$ as the remaining cases can similarly be done. 
With the aim of doing so, we use \eqref{eq: components global section E 222} to describe the algebraic relations among the coefficients of the components of $f=(f^{(1)}, f^{(2)}, f^{(3)})\in H^0(\PP^\bd,E)$ such that $\kz(f)$ is a line. 
    
Recall that if the line is determined by $M \in V_2^*$ and $N \in V_3^*$, then there exist $L, L' \in V_1^*$, $M' \in V_2^*$, and~$N' \in V_3^*$ such that $f^{(1)} = MN'+M'N$, $f^{(2)} = LN$, and $f^{(3)} = L'M$. 
    
The condition that $f^{(2)}$ factors into two linear forms is the same as the condition that $[f^{(2)}]$ lies in the Segre embedding of $\PP(V_1^*) \times \PP(V_3^*)$ in $\PP(V_1^* \otimes V_3^*)$. The latter is equivalent to the condition that $f^{(2)}$ has rank one when considering it as a linear transformation from $V_1$ to $V_3^*$. 
Thus, the condition for $f^{(2)}$ to be a product of linear forms is expressed as the vanishing of the determinant of the matrix of this linear transformation relative to the basis $\{e^{(1)}_1,e^{(1)}_2\}$ for $V_1$ and the basis $\{\pi^{(3)}_1,  \pi^{(3)}_2\}$ for~$V_3^*$:
\[
    Q_1 \coloneqq 
    \begin{vmatrix}
    a^{(2)}_{11} & a^{(2)}_{12} \\[2pt]
    a^{(2)}_{22} & a^{(2)}_{22}
    \end{vmatrix}
    = a^{(2)}_{11}a^{(2)}_{22}-a^{(2)}_{12}a^{(2)}_{21} = 0\,. 
\]
Similarly, the polynomial $f^{(3)}$ factors into linear forms if and only if its coefficients satisfy the following algebraic relation:   
\[
    Q_2\coloneqq 
    \begin{vmatrix}
    a^{(3)}_{11} & a^{(3)}_{12} \\[2pt]
    a^{(3)}_{21} & a^{(3)}_{22}
    \end{vmatrix}
    = a^{(3)}_{11}a^{(3)}_{22}-a^{(3)}_{12}a^{(3)}_{21} = 0\,. 
\]
    
The first coordinate function $f^{(1)}$ can be expressed as a linear combination of $M$ and $N$ if and only if $f^{(1)}(p_M,p_N) = 0$, where $p_M \in V_2$ and $p_N \in V_3$ such that $[p_M]$ is the point of $\PP(V_2)$ defined by $M$ and $[p_N]$ the point of $\PP(V_3)$ defined by $N$. The latter can be translated into the system of polynomial equations
\[
    \begin{cases}
    F_{11} \coloneqq f^{(1)}\left(a^{(2)}_{12},-a^{(2)}_{11},a^{(3)}_{12},-a^{(3)}_{11}\right) = 0\\[4pt]
    F_{12} \coloneqq f^{(1)}\left(a^{(2)}_{12},-a^{(2)}_{11},a^{(3)}_{22},-a^{(3)}_{21}\right) = 0\\[4pt]
    F_{21} \coloneqq f^{(1)}\left(a^{(2)}_{22},-a^{(2)}_{21},a^{(3)}_{12},-a^{(3)}_{11}\right) = 0\\[4pt]
    F_{22} \coloneqq f^{(1)}\left(a^{(2)}_{22},-a^{(2)}_{21},a^{(3)}_{22},-a^{(3)}_{21}\right) = 0\,, 
    \end{cases}
\]
because 
\[
    [N] = 
    \begin{cases} 
    \left[a^{(2)}_{11} \pi^{(2)}_1+a^{(2)}_{12}\pi^{(2)}_2\right] & \text{if $\left(a^{(2)}_{11},a^{(2)}_{12}\right) \neq (0,0)$}\,,\\[4pt]  
    \left[a^{(2)}_{21}\pi^{(2)}_1+a^{(2)}_{22}\pi^{(2)}_2\right] & \text{if $\left(a^{(2)}_{21},a^{(2)}_{22}\right) \neq (0,0)$,}
    \end{cases}
\]
and 
\[
    [M] = 
    \begin{cases}
    \left[a^{(3)}_{11}\pi^{(3)}_1+a^{(3)}_{12}\pi^{(3)}_3\right] & \text{if $\left(a^{(3)}_{11},a^{(3)}_{12}\right) \neq (0,0)$}\,,\\[4pt]
    \left[a^{(3)}_{21}\pi^{(3)}_1+a^{(3)}_{22}\pi^{(3)}_2\right]  & \text{if $\left(a^{(3)}_{21},a^{(3)}_{22}\right) \neq (0,0)$.}
    \end{cases}
\]
Therefore, the polynomials $Q_1$, $Q_2$, $F_{11}$, $F_{12}$, $F_{21}$, and $F_{22}$ cut out $\Delta_1^{\mathrm{lin}}$ set-theoretically.

For each $i,j \in \{1,2\}$, the subscheme $\kz_{ij}$ defined by $Q_1$, $Q_2$, and $F_{ij}$ has dimension~$8$. This means that it contains~$\Delta_1^{\mathrm{lin}}$ as an irreducible component. Furthermore, the degree of $F_{ij}$ is $3$ by definition, and hence $\kz_{ij}$ has degree $12$.
The locus $\kz_{ij}$ contains the union of two irreducible subschemes of degree~$2$; one defined by~$\langle Q_1, Q_2, a^{(2)}_{i1}, a^{(2)}_{j2} \rangle = \langle Q_2, a^{(2)}_{i1}, a^{(2)}_{j2}\rangle$, and the other defined by $\langle Q_1, a^{(3)}_{i1}, a^{(3)}_{j2} \rangle = \langle Q_2, a^{(3)}_{i1}, a^{(3)}_{j2}\rangle$. 
Straightforward calculations show that $\kz_{ij}$ agree on the complement of the closed subsets given by $(a^{(j)}_{i1},a^{(k)}_{j2}) = (0,0)$, $k \in \{2,3\}$. This implies that the residual scheme to $\Delta_1^{\mathrm{lin}}$ in $\kz_{ij}$ is the union of the subschemes of degree $2$, and hence the degree of $\Delta_1^{\mathrm{lin}}$ is $8 = 12-2 \cdot 2$.  
\end{remark}

Let $\mathrm{Sing} \, \Delta$ be the reduced structure on the singular locus of $\Delta$, and define $\mathrm{Sing}^{(i)}\Delta$ by 
\[
\mathrm{Sing}^{(i)}\Delta \coloneqq 
\begin{cases}
\mathrm{Sing}\,\Delta & \text{if $i=1$,} \\
\mathrm{Sing}\,(\mathrm{Sing}^{(i-1)}\Delta) & \text{if $i >1$.} 
\end{cases}
\]
The dimensions and the number of irreducible components of $\mathrm{Sing}^{(i)}\Delta$ are studied in \cite{mathrepo} and summarized in Table~\ref{tab: singular strata}. All the varieties appearing in this table are summarized in Table \ref{tab: description varieties singular strata} and were described in the previous propositions. The identities in the first column were verified using \verb|Macaulay2|.

\begingroup
\renewcommand{\arraystretch}{1.2}
\begin{table}[ht]
\centering
\begin{tabular}{c|c|c|c}
$i$ & $\mathrm{Sing}^{(i)}\Delta$ & $\dim\mathrm{Sing}^{(i)}\Delta$ & \#\{irreducible components\}\\ \hhline{====}
$1$ & $\Delta^{\mathrm{cub}}\cup\Delta^{\mathrm{con}}\cup\Delta^{\mathrm{lin}}$ & $8$ & $7=1+3+3$ \\ \hline 
$2$ & $\Delta^{\mathrm{ll}}\cup\Delta^{\mathrm{cl}}$ & $7$ & $6=3+3$ \\ \hline 
$3$ & $\Delta^{\mathrm{lll}}\cup\Delta^{\mathrm{scr}}$ & $6$ & $6=3+3$ \\ \hline 
$4$ & $\Delta^{\mathrm{3l}}$ & $5$ & $1$ \\ \hline 
$5$ & $\Delta^{\mathrm{qq}}$ & $2$ & $3$ \\ \hline 
\end{tabular}
\cprotect\caption{Singular strata of $\Delta$.}\label{tab: singular strata}
\end{table}
\endgroup

One verifies that the game presented in Example \ref{ex: 222 with infinitely many tmNes} corresponds to a global section $f\in\Delta^{\mathrm{cub}}$. We conclude this section by presenting two games whose associated global sections of $E$ belong to the varieties $\Delta^{\mathrm{con}}$ and $\Delta^{\mathrm{lin}}$.

\begin{example}\label{ex: 222 game with Nash equilibria on a conic}
We construct a three-player game $X=(X^{(1)},X^{(2)},X^{(3)})$ with real payoff tensors such that its Nash equilibrium scheme is a nonsingular plane conic whose real part contains infinitely many points corresponding to totally mixed Nash equilibria. 

For each $i\in[2]$, let $\alpha_i \coloneqq \alpha_{i1}\pi_1^{(2)} + \alpha_{i2}\pi_2^{(2)}$ and $\beta_i \coloneqq \beta_{i1}\pi_1^{(3)} + \beta_{i2}\pi_2^{(3)}$ be linear forms with real coefficients satisfying that $\xi_1 \coloneqq \alpha_{11}\beta_{12}+\alpha_{21}\beta_{22}$, $\xi_2\coloneqq\alpha_{12}\beta_{12}+\alpha_{22}\beta_{22}$, $\eta_1 \coloneqq \alpha_{11}\beta_{11}+\alpha_{21}\beta_{21}$, and $\eta_2\coloneqq\alpha_{12}\beta_{11}+\alpha_{22}\beta_{21}$ are all positive, and let $\gamma \coloneqq \gamma_1\pi_1^{(1)} + \gamma_2\pi_2^{(1)}$ be a linear form with real coefficients satisfying $\gamma_1<0$ and $\gamma_2>0$. Define the global section $f$ of $E$ by 
\[
f = (\alpha_1\beta_1+\alpha_2\beta_2,\alpha_1\gamma,\beta_1\gamma)\,. 
\]
By Proposition~\ref{prop:curve_component}, the zero scheme of $f$ is a nonsingular plane conic. 

If $\pi^{(1)}=\left(\gamma_2/(\gamma_2-\gamma_1),-\gamma_1/(\gamma_2-\gamma_1)\right) \in \Delta_1^\circ$, if $\pi^{(2)}\in\Delta_1^\circ$ satisfies $\alpha_1\beta_1+\alpha_2\beta_2~=~0$, and if 
\[
\pi^{(3)} = \left(\frac{\xi_1\pi_1^{(2)}+\xi_2\pi_2^{(2)}}{(\xi_1+\eta_1)\pi_1^{(2)}+(\xi_2+\eta_2)\pi_2^{(2)}},\frac{\eta_1\pi_1^{(2)}+\eta_2\pi_2^{(2)}}{(\xi_1+\eta_1)\pi_1^{(2)}+(\xi_2+\eta_2)\pi_2^{(2)}}\right)\in\Delta_1^\circ\,,
\]
then straightforward calculations show that $\bpi=(\pi^{(1)},\pi^{(2)},\pi^{(3)})\in (\Delta_1^\circ)^3$ are real solutions to~$f=0$.

Note that $f$ is obtained from the game $X$ with payoff tensors
\begingroup
\setlength{\arraycolsep}{2.5pt}
\begin{align*}
X^{(1)} =  
\begin{bmatrix}[cc:cc]
\xi_3 & \xi_1 & 0 & 0\\
\xi_4 & \xi_2 & 0 & 0
\end{bmatrix},\ 
X^{(2)} =
\begin{bmatrix}[cc:cc]
\beta_{11}\gamma_1 & \beta_{12}\gamma_1 & \beta_{11}\gamma_2 & \beta_{12}\gamma_2\\
0 & 0 & 0 & 0
\end{bmatrix},\ 
X^{(3)} = 
\begin{bmatrix}[cc:cc]
\alpha_{11}\gamma_1 & 0 & \alpha_{11}\gamma_2 & 0\\
\alpha_{12}\gamma_1 & 0 & \alpha_{12}\gamma_2 & 0
\end{bmatrix}.
\end{align*}
\endgroup
If we specify the coefficients of $\alpha_1$, $\alpha_2$, $\beta_1$, $\beta_2$, and $\gamma$ as follows:  
$(\alpha_{11},\alpha_{12},\alpha_{21},\alpha_{22}) = (1,1,2,3)$, $(\beta_{11},\beta_{12},\beta_{21},\beta_{22}) = (1,1,-1,1)$, $(\gamma_1,\gamma_2) = (-1,1)$,  
then the corresponding payoff tables are
\begingroup
\setlength{\arraycolsep}{4pt}
\[
X^{(1)} =  
\begin{bmatrix}[cc:cc]
-1 & 3 & 0 & 0\\
-2 & 4 & 0 & 0
\end{bmatrix}\,,\ 
X^{(2)} =
\begin{bmatrix}[cc:cc]
-1 & -1 & 1 & 1\\
0 & 0 & 0 & 0
\end{bmatrix}\,,\ 
X^{(3)} = 
\begin{bmatrix}[cc:cc]
-1 & 0 & 1 & 0\\
-1 & 0 & 1 & 0
\end{bmatrix}\,.
\]
\endgroup
The set of totally mixed Nash equilibria of $X$ is
\[
    \left\{\left(\left(\frac{1}{2},\frac{1}{2}\right),\pi^{(2)},\left(\frac{3\pi_1^{(2)}+4\pi_2^{(2)}}{4\pi_1^{(2)}+6\pi_2^{(2)}},\frac{\pi_1^{(2)}+2\pi_2^{(2)}}{4\pi_1^{(2)}+6\pi_2^{(2)}}\right)\right)\ \bigg|\ \pi^{(2)}\in\Delta_1^\circ\right\}\,,
\]
which is interpreted as the intersection of the locus defined by
\[
    \pi_1^{(2)}\pi_1^{(3)}+3\pi_1^{(2)}\pi_2^{(3)}-2\pi_2^{(2)}\pi_1^{(3)}+4\pi_2^{(2)}\pi_2^{(3)} = -\pi_1^{(1)}+\pi_2^{(1)} = 0
\]
and the product $(\Delta_1^\circ)^3$ of probability simplices.\hfill$\diamondsuit$
\end{example}

\begin{example}\label{ex: 222 game with Nash equilibria on a line}
We construct a three-player game $X=(X^{(1)},X^{(2)},X^{(3)})$ with real payoff tensors such that its Nash equilibrium scheme is a line whose real part contains infinitely many points corresponding to totally mixed Nash equilibria. 

Let $\alpha_i \coloneqq \alpha_{i1}\pi_1^{(1)} + \alpha_{i2}\pi_2^{(1)}$, $\beta_i \coloneqq \beta_{i1}\pi_1^{(2)} + \beta_{i2}\pi_2^{(2)}$, and  $\gamma_i \coloneqq \gamma_{i1}\pi_1^{(3)} + \gamma_{i2}\pi_2^{(3)}$
for all $i\in[2]$, where~$\alpha_{ij}, \beta_{ij}, \gamma_{ij} \in \R$. Define the global section $f$ of $E$ by 
\[
    f = (\alpha_1\beta_1+\alpha_2\beta_2, \alpha_1\gamma_1, \beta_2\gamma_2)\,,  
\]
which can also be interpreted as the global section of $E$ determined by the game with payoff tensors 
\begin{align*}
X^{(1)} &=  
\begin{bmatrix}[cc:cc]
\beta_{21}\gamma_{21} & \beta_{22}\gamma_{21} & \beta_{21}\gamma_{22} & \beta_{22}\gamma_{22}\\
0 & 0 & 0 & 0
\end{bmatrix},
  \\
X^{(2)} &=
\begin{bmatrix}[cc:cc]
\alpha_{11}\gamma_{11} & 0 & \alpha_{11}\gamma_{12} & 0\\
\alpha_{12}\gamma_{11} & 0 & \alpha_{12}\gamma_{12} & 0
\end{bmatrix}, \\ 
X^{(3)} &= 
\begin{bmatrix}[cc:cc]
\alpha_{11}\beta_{11}+\alpha_{21}\beta_{21} & \alpha_{11}\beta_{12}+\alpha_{21}\beta_{22} & 0 & 0\\
\alpha_{12}\beta_{11}+\alpha_{22}\beta_{21} & \alpha_{12}\beta_{12}+\alpha_{22}\beta_{22} & 0 & 0
\end{bmatrix}\,.
\end{align*}

If $\alpha_{11}<0$ and $\alpha_{12}>0$, then  $\pi^{(1)}=\left(\alpha_{12}/(\alpha_{12}-\alpha_{11}),-\alpha_{11}/(\alpha_{12}-\alpha_{11})\right)\in\Delta_1^\circ$ is a solution to $\alpha_1=0$. Similarly, if $\beta_{21}<0$ and $\beta_{22}>0$, then $\pi^{(2)}=\left(\beta_{22}/(\beta_{22}-\beta_{21}),-\beta_{21}/(\beta_{22}-\beta_{21})\right)\in\Delta_1^\circ$ is a solution to $\beta_2=0$. So,   
\[
    \bpi=\left(\left(\frac{\alpha_{12}}{\alpha_{12}-\alpha_{11}},\frac{-\alpha_{11}}{\alpha_{12}-\alpha_{11}}\right),\left(\frac{\beta_{22}}{\beta_{22}-\beta_{21}},\frac{-\beta_{21}}{\beta_{22}-\beta_{21}}\right),\pi^{(3)}\right)\in(\Delta_1^\circ)^3
\]
is a solution to $f=0$ for every $\pi^{(3)}\in\Delta_1^\circ$, and hence the set of the totally mixed Nash equilibria of $X$ parameterizes a portion of the real line defined by $\alpha_1 = \beta_2 = 0$, which is contained in the real part of $\kz(f)$. 

For example, if $X = (X^{(1)}, X^{(2)}, X^{(3)})$ with the following specific payoff tensors
\[
X^{(1)} =  
\begin{bmatrix}[cc:cc]
-8 & 12 & 10 & -15 \\
 0 & 0 & 0 & 0
\end{bmatrix}\,,\ 
X^{(2)} =
\begin{bmatrix}[cc:cc]
-2 & 0 & 3 & 0\\
2 & 0 & -3 & 0
\end{bmatrix}\,,\ 
X^{(3)} = 
\begin{bmatrix}[cc:cc]
-5 & 5 & 0 & 0\\
-5 & 10 & 0 & 0
\end{bmatrix}\,, 
\]
or if the coefficients of the linear forms $\alpha_i$, $\beta_i$, and $\gamma_i$ are 
$(\alpha_{11},\alpha_{12},\alpha_{21},\alpha_{22}) = (-1,1,2,3)$, $(\beta_{11},\beta_{12},\beta_{21},\beta_{22}) = (1,1,-2,3)$, and $(\gamma_{11},\gamma_{12},\gamma_{21},\gamma_{22}) = (2,-3,4,-5)$, then the set of totally mixed Nash equilibria of $X$ is
\[
    \left\{\left(\left(\frac{1}{2},\frac{1}{2}\right),\left(\frac{3}{5},\frac{2}{5}\right),\pi^{(3)}\right)\ \bigg|\ \pi^{(3)}\in\Delta_1^\circ\right\}\,,
\]
which is interpreted as the intersection of the locus defined by 
\[
    -\pi_1^{(1)}+\pi_2^{(1)} = -2\pi_1^{(2)}+3\pi_2^{(2)} = 0
\]
and the product $(\Delta_1^\circ)^3$ of probability simplices.\hfill$\diamondsuit$
\end{example}

\subsection{The Nash discriminant variety of games of boundary format}\label{subsec: Nash discriminant boundary format}

This subsection concerns games of boundary format (games at the boundary between the balanced and unbalanced games). This subsection aims to study the Nash discriminant variety of such games and to reveal its unexpected properties. To be more specific, we prove that the Nash discriminant variety is reducible. One of the irreducible components is a hypersurface. We also discuss a formula for the degree of this hypersurface component.

\begin{theorem}\label{thm: discriminant of E for boundary format}
Let $\bd=(d_1,\ldots,d_n)$ such that $d_n-1=\sum_{i=1}^{n-1}(d_i-1)$, and let $E$ be the vector bundle on $\PP^\bd$ defined in \eqref{eq: Nash vector bundle}. The discriminant variety of $E$ consists of an irreducible hypersurface $\kd_1$ of degree 
\begin{equation}\label{eq: degree hypersurface component boundary format}
    \frac{(d_n-1)!}{(d_1-1)!\cdots(d_{n-1}-1)!}\left(d_n-1+\sum_{i=1}^{n-1}(d_i-1)(d_n-d_i-1)\right)
\end{equation}
and a variety $\kd_2$ of codimension at least $2$. 
\end{theorem}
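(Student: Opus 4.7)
By Remark~\ref{rmk: Nash discriminant cone over discriminant E}, it suffices to analyze $\Delta(E)\subset\PP H^0(\PP^\bd,E)$. The key structural feature of boundary format is that $r\coloneqq\mathrm{rank}\,E=2(d_n-1)=\dim\PP^\bd$, and the component $f^{(n)}$ of a section $f=(f^{(1)},\ldots,f^{(n)})$ is a system of $d_n-1$ multilinear forms on $Y\coloneqq\prod_{i<n}\PP^{d_i-1}$, itself of dimension $d_n-1$. For generic $f$, the scheme $W\coloneqq\kz(f^{(n)})\subset Y$ is reduced of length $c(\bd)=(d_n-1)!/\prod_{i<n}(d_i-1)!$, and over each $p\in W$ the linear system $(f^{(1)}|_p,\ldots,f^{(n-1)}|_p)$ in $\pi^{(n)}$ cuts out a unique point of $\PP^{d_n-1}$, recovering the $c(\bd)$ points of $\kz(f)$. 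Define $\kd_2\subset\Delta(E)$ as the closure of the locus of $f$ for which either $\dim W\ge 1$, or for some $p\in W$ the above linear system in $\pi^{(n)}$ has rank at most $d_n-2$, and set $\kd_1\coloneqq\overline{\Delta(E)\setminus\kd_2}$.

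\textbf{Codimension of $\kd_2$.} Both defining conditions are of codimension at least two. The first is codimension $\ge 2$ in the space of coefficients of $f^{(n)}$: the generic codimension-one degeneracy of $d_n-1$ multilinear forms on a variety of matching dimension produces coincident (non-reduced) zeros rather than a positive-dimensional component, so positive-dimensionality is a further codimension-one specialization. The second is a rank-$\le d_n-2$ condition on a $(d_n-1)\times d_n$ matrix, imposed at one of finitely many points of $W$, of codimension two in $\PP H^0(\PP^\bd,E)$. Hence $\codim\kd_2\ge 2$.

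\textbf{Degree of $\kd_1$ via Hurwitz-Riemann.} Following the approach of Examples~\ref{ex: computation degree hypersurface component 222} and~\ref{ex: computation degree hypersurface component 223}, choose a generic pencil $\kl=\PP\langle f_0,f_1\rangle\subset\PP H^0(\PP^\bd,E)$. Since $\codim\kd_2\ge 2$, one has $\kl\cap\kd_2=\emptyset$, so $\deg\kd_1$ equals the cardinality of the branch locus of the finite morphism $\psi\colon\kc\to\kl\simeq\PP^1$ of degree $c(\bd)$, where $\kc$ is the dependency curve of $\phi=(f_0,f_1)\colon\ko_{\PP^\bd}^{\oplus 2}\to E$ and $\psi(p)$ is the unique section of $\kl$ vanishing at $p$. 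Bertini for degeneracy loci together with an Eagon-Northcott cohomology computation (as in the cited examples, giving $h^0(\ko_\kc)=1$) shows that $\kc$ is smooth and irreducible with class $[\kc]=c_{r-1}(E)$ by Thom-Porteous. Irreducibility of $\kd_1$ itself follows from the irreducibility of the universal singular-zero incidence $\{(f,p)\in\PP H^0(\PP^\bd,E)\times\PP^\bd:f(p)=0,\,df_p\text{ has rank}<r\}$, a hypersurface in the irreducible total space $\{(f,p):f(p)=0\}$ which projects with generically one-to-one fibers onto $\kd_1$.

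\textbf{Genus via adjunction and explicit evaluation.} On $\kc$ the exact sequence $0\to K\to\ko_\kc^{\oplus 2}\to I\to 0$, with $K=\ker\phi|_\kc=\psi^*\ko_{\PP^1}(-1)$ and $I=\Im\phi|_\kc$, yields $c_1(I)=-c_1(K)=\psi^*H$. The normal bundle of the first degeneracy locus is $N_{\kc/\PP^\bd}=K^*\otimes(E/I)|_\kc$, with first Chern class $c_1(N_{\kc/\PP^\bd})=(r-2)\psi^*H+c_1(E)|_\kc$. Adjunction and Riemann-Hurwitz then give
\[
\deg\kd_1=(K_{\PP^\bd}+c_1(E))\cdot c_{r-1}(E)+r\cdot c(\bd).
\]
Using $c_1(K_{\PP^\bd})=-\sum_jd_jh_j$ and $c_1(E)=\sum_j(r-d_j+1)h_j$, in boundary format $K_{\PP^\bd}+c_1(E)=-h_n+\sum_{j<n}(2(d_n-d_j)-1)h_j$, while $c_{r-1}(E)=\sum_i(d_i-1)\hat h_i^{d_i-2}\prod_{k\ne i}\hat h_k^{d_k-1}$ is obtained from $c_t(E)=\prod_i(1+\hat h_it)^{d_i-1}$. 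A direct multinomial expansion in $A^*(\PP^\bd)=\Z[h_1,\ldots,h_n]/(h_1^{d_1},\ldots,h_n^{d_n})$ yields
\[
(K_{\PP^\bd}+c_1(E))\cdot c_{r-1}(E)=c(\bd)\Bigl[\textstyle\sum_{i<n}(d_i-1)(d_n-d_i-1)-(d_n-1)\Bigr],
\]
which, after substituting $r=2(d_n-1)$, reproduces~\eqref{eq: degree hypersurface component boundary format}. The main technical obstacle is this last Chern-class identity; it can be approached either by induction on $n$ via the projection $\PP^\bd\to\PP^{d_n-1}$, or by exploiting the formal factorization $c_{r-1}(E)=c_r(E)\cdot\sum_i(d_i-1)/\hat h_i$ combined with residue-type evaluations on the multiprojective space.
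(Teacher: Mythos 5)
Your approach diverges genuinely from the paper's. The paper exploits the boundary-format structure by showing the local Jacobian of a section $f=(f^{(1)},\ldots,f^{(n)})$ is block anti-triangular with two $(d_n-1)\times(d_n-1)$ blocks, so that $\kd_1$ is the cone over the discriminant of the very ample bundle $E'=\ko_{\PP^{\bd'}}(\bone)^{\oplus(d_n-1)}$ on $\PP^{\bd'}$; irreducibility and the degree formula then come essentially for free from the cited \cite[Corollary 2.7]{abo2022ramification}, and $\codim\kd_2\ge 2$ is established by a dimension count on the incidence $\Gamma(E)$, using that $[f]\in\Delta(E)\setminus\kd_1$ forces $\dim\kz(f)\ge 1$. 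Your Hurwitz--Riemann computation on a generic pencil is a legitimate alternative route, and your intersection formula $\deg\kd_1=(K_{\PP^\bd}+c_1(E))\cdot c_{r-1}(E)+r\cdot c(\bd)$ is correct (it reproduces $-4+8=4$ for $\bd=(2,2,3)$, consistent with Example~\ref{ex: computation degree hypersurface component 223}). But as written the proof has a genuine gap: the final Chern-class identity
\[
(K_{\PP^\bd}+c_1(E))\cdot c_{r-1}(E)=c(\bd)\Bigl(\textstyle\sum_{i<n}(d_i-1)(d_n-d_i-1)-(d_n-1)\Bigr)
\]
is exactly the content of the theorem and is not established; you flag it yourself as the ``main technical obstacle.'' Because the whole degree formula reduces to this computation, leaving it open means the theorem is not proved.

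There are secondary gaps that also need filling. Your codimension claim for $\kd_2$ is a heuristic (``positive-dimensionality is a further codimension-one specialization''), not an argument; one really needs an incidence-variety dimension count or an explicit normal-form analysis. The Hurwitz--Riemann step requires that $\kc$ be smooth \emph{and connected} for a generic pencil and that the generic ramification of $\psi$ be simple; the first two were verified ad hoc in Examples~\ref{ex: computation degree hypersurface component 222}--\ref{ex: computation degree hypersurface component 223} via Eagon--Northcott cohomology but must be redone in general (vanishing of $H^0$ of $\bigwedge^2 E^*$ and the cokernel identification), and the simple-ramification claim is asserted but not checked. Finally, your irreducibility argument for $\kd_1$ via the singular-zero incidence is plausible but underdeveloped: you would need to show the relevant incidence is irreducible and that its generic fiber over $\kd_1$ is a single reduced point, whereas the paper gets irreducibility directly from the irreducibility of $\Delta(E')$.
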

\begin{proof}
For each $i \in[n]$ and $\alpha_i \in [d_i]$, let $\ku_{\alpha_i}$ be the open subset of $\PP^{d_i-1}$ defined by $\pi^{(i)}_{\alpha_i} \not=0$, and let $x^{(i)} = (x^{(i)}_1, \ldots, x^{(i)}_{d_i-1})$ be the standard local coordinates on $\ku_{\alpha_i}$ (for example, if $\alpha_i = d_i$, then $x^{(i)}_j = \pi^{(i)}_j/\pi^{(i)}_{d_i}$ for each $j \in [d_i-1]$). For each $\balpha \coloneqq (\alpha_1, \ldots, \alpha_n)$, we write $\ku_{\balpha}$ for the open subset $\prod_{i=1}^n \ku_{\alpha_i}$ of $\PP^\bd$ and $x = (x^{(1)}, \ldots, x^{(n)})$ for its local coordinates. 

The zero scheme $\kz(f)$ of a global section $f$ of $E$ is singular at a point $p \in \PP^\bd$ if and only if for any $\balpha = (\alpha_1, \ldots, \alpha_n) \in \prod_{i=1}^n[d_i]$ such that $p \in \ku_{\balpha}$, the determinant $|\partial {f|}_{\ku_{\balpha}}/\partial x|$ of the Jacobian matrix $\partial {f|}_{\ku_{\balpha}}/\partial x$ of ${f|}_{\ku_{\balpha}}$ vanishes at $p$. We show that $\partial {f|}_{\ku_{\balpha}}/\partial x$ is an anti-upper triangular block matrix whose main antidiagonal blocks consist of two $(d_n-1) \times (d_n-1)$ matrices and that the Jacobian determinant $|\partial {f|}_{\ku_{\balpha}}/\partial x|$ of ${f|}_{\ku_{\balpha}}$ factors into the determinants of these two main antidiagonal blocks, which give rise to the two components of  $\Delta(E)$. 

There exists a unique global section $f^{(i)}$ of $\ko(\bone_i)^{\oplus (d_i-1)}$ for each $i \in [n]$ such that $f = \sum_{i=1}^n f^{(i)}$. Moreover, each direct summand $f^{(i)}$ of $f$ can uniquely be written as the sum of global sections~$f^{(i)}_1, \ldots, f^{(i)}_{d_i-1}$ of~$\ko(\bone_i)$. 

For each $i, j \in [n]$, define the $(d_i-1) \times (d_j-1)$ matrix 
\[
\frac{\partial {f^{(i)}|}_{\ku_{\balpha}}}{\partial x^{(j)}}
\coloneqq 
\left(
\frac{\partial {f^{(i)}_\lambda|}_{\ku_{\balpha}}}{\partial x^{(j)}_\mu}
\right)_{\substack{1\le \lambda \le d_i-1\\[2pt] 1\le \mu \le d_j-1}}\,.   
\]
One can partition $\partial {f|}_{\ku_{\balpha}}/\partial x$ into the $(d_i-1) \times (d_j-1)$ blocks $\partial {f^{(i)}|}_{\ku_{\balpha}}/\partial x^{(j)}$: 
\[
\frac{\partial {f|}_{\ku_{\balpha}}}{\partial x} 
= \left(\frac{\partial {f^{(i)}|}_{\ku_{\balpha}}}{\partial x^{(j)}}\right)_{1\le i,j \le n}\,. 
\]
Since $\partial {f^{(n)}|}_{\ku_{\balpha}}/\partial x^{(n)} = 0$, the Jacobian matrix $\partial {f|}_{\ku_{\balpha}}/\partial x$ can be written as
\begin{equation}\label{eq: block jacobian}
\frac{\partial {f|}_{\ku_{\balpha}}}{\partial x} 
=
\begin{pmatrix}
     \frac{\partial ({f^{(1)}|}_{\ku_{\balpha}},\ldots, {f^{(n-1)}|}_{\ku_{\balpha}})}{\partial (x^{(1)}, \ldots, x^{(n-1)})} & \frac{\partial ({f^{(1)}|}_{\ku_{\balpha}},\ldots, {f^{(n-1)}|}_{\ku_{\balpha}})}{\partial x^{(n)}} \\[8pt]
     \frac{\partial {f^{(n)}|}_{\ku_{\balpha}}}{\partial (x^{(1)}, \ldots, x^{(n-1)})} & 0
\end{pmatrix}\,,
\end{equation}
where each block has size $(d_n-1) \times (d_n-1)$.
Hence, the Jacobian determinant $|\partial {f|}_{\ku_{\balpha}}/\partial x|$ of ${f|}_{\ku_{\balpha}}$ is, up to plus-minus sign, equal to the product of the determinants of the top-right and bottom-left blocks in \eqref{eq: block jacobian}:
\[
    \left|\frac{\partial {f|}_{\ku_{\balpha}}}{\partial x}\right| = \pm\left|\frac{\partial ({f^{(1)}|}_{\ku_{\balpha}},\ldots, {f^{(n-1)}|}_{\ku_{\balpha}})}{\partial x^{(n)}}\right|\cdot\left|\frac{\partial {f^{(n)}|}_{\ku_{\balpha}}}{\partial (x^{(1)}, \ldots, x^{(n-1)})}\right|\,.
\]

First, consider the component $\kd_1$ of $\Delta(E)$ determined by $|\partial {f^{(n)}|}_{\ku_{\balpha}}/\partial (x^{(1)}, \ldots, x^{(n-1)})|$. If $\bd' \coloneqq (d_1, \ldots, d_{n-1})$ and if $E'$ denotes $ \ko_{\PP^{\bd'}}(\bone)^{\oplus (d_n-1)}$, then $H^0(\PP^\bd,\ko(\bone_n)^{\oplus (d_n-1)}) = H^0(\PP^{\bd'},E')$ and the component $\kd_1$ is the cone over the discriminant locus $\Delta(E')$ of $E'$ whose vertex is~$\PP H^0(\PP^\bd, \bigoplus_{i=1}^{n-1} \ko(\bone_i)^{\oplus (d_i-1)})$. In particular, $\deg\Delta(E') = \deg\kd_1$ and $\codim\,\Delta(E') = \codim\,\kd_1$.
Since $E'$ is the direct sum of $d_n-1$ copies of the very ample (and hence $1$-jet ample) line bundle $\ko_{\PP^{\bd'}}(\bone)$, it is $1$-jet ample. In particular, the vector bundle $E'$ is very ample, as well as $1$-jet spanned. Therefore, it follows from \cite[Corollary 2.7]{abo2022ramification} that $\Delta(E')$ is an irreducible hypersurface of degree 
\[
\deg\Delta(E') = \int_{\PP^{\bd'}}(c_1(\omega_{\PP^{\bd'}}) + c_1(E'))c_{d_n-2}(E')+(d_n-1)c_{d_n-1}(E')\,,  
\]
where $\omega_{\PP^{\bd'}}$ denotes the canonical bundle on $\PP^{\bd'}$. 

If $A(\PP^{\bd'}) = \C[h_1, \ldots, h_{n-1}]/\langle h_1^{d_1}, \ldots, h^{d_{n-1}}_{n-1}\rangle$, then the relevant Chern classes of $\omega_{\PP^{\bd'}}$ and $E'$ are expressed as follows: $c_1(\omega_{\PP^{\bd'}}) = \ko_{\PP^{\bd'}}(-\bd')= -\sum_{i=1}^{n-1}d_ih_i$, $c_1(E') = (d_n-1)\sum_{i=1}^{n-1}h_i$, 
\begin{align*}
\begin{split}
    c_{d_n-2}(E') &= (d_n-1)\left(\sum_{i=1}^{n-1}h_i\right)^{d_n-2}\\
    &= \frac{(d_n-1)!}{(d_1-1)!\cdots(d_{n-1}-1)!}\sum_{i=1}^{n-1}(d_i-1)h_1^{d_1-1}\cdots h_i^{d_i-2}\cdots h_{n-1}^{d_{n-1}-1}\,,\\
    c_{d_n-1}(E') &= \frac{(d_n-1)!}{(d_1-1)!\cdots(d_{n-1}-1)!}h_1^{d_1-1}\cdots h_{n-1}^{d_{n-1}-1}\,.
\end{split}
\end{align*}
Therefore, the degree formula, as given above, implies 
\[
    \deg\Delta(E')  = \frac{(d_n-1)!}{(d_1-1)!\cdots(d_{n-1}-1)!}\left(d_n-1+\sum_{i=1}^{n-1}(d_i-1)(d_n-d_i-1)\right)\,.
\]

Next, suppose that $[f] \in \Delta(E) \setminus \kd_1(E)$. Let $f = (f^{(1)},\ldots,f^{(n)})$ be the decomposition of $f$ into global sections of $\ko(\bone_i)^{\oplus (d_i-1)}$. Consider $f^{(n)}$ as a global section of $E'$. The zero scheme~$\kz(f^{(n)})$ is nonsingular of codimension $d_n-1$, because $E'$ is globally generated and $[f] \not\in \kd_1$ (and hence $[f^{(n)}]\not\in \Delta(E')$). Furthermore, it consists of $c_{d_n-1}(E') = (d_n-1)!/((d_1-1)!\cdots (d_{n-1}-1)!)$ distinct closed points.  

The canonical projection from $\PP^\bd$ to $\PP^{\bd'}$ maps $\kz(f)$ onto $\kz(f^{(n)})$. The fiber of the projection restricted to $\kz(f)$ over a closed point of $\kz(f^{(n)})$ is the linear subspace of $\PP^{d_n-1}$ defined by the $d_n-1$ linear forms in $\pi^{(n)}$ obtained from $f^{(1)}, \ldots, f^{(n-1)}$ by evaluating at the closed point. Therefore, the scheme $\kz(f)$ is the disjoint union of $(d_n-1)!/((d_1-1)!\cdots (d_{n-1}-1)!)$ (possibly different dimensional) linear subspaces. This means that $[f] \in \Delta(E)\setminus \kd_1$ if and only if $\kz(f)$ has a positive dimensional component. 

Define 
\[
\Gamma(E) \coloneqq \left\{ (p,[f]) \in \PP^\bd \times (\PP H^0(\PP^\bd,E) \setminus \kd_1) \mid \text{$\kz(f)$ is singular at $p$}\right\}\,, 
\]
and denote the projections from $\Gamma(E)$ to $\PP^\bd$ and $\PP H^0(\PP^\bd,E)$ by $\varpi_1$ and $\varpi_2$ respectively. Since $\kz(f)$ is singular at $p\in \PP^\bd$ if and only if for any $\balpha = (\alpha_1, \ldots, \alpha_n) \in \prod_{i=1}^n [d_i]$ with~$p \in \ku_{\balpha}$,  
\[
({f^{(1)}|}_{\ku_{\balpha}})(p) = \bzero, \, \ldots,  ({f^{(n)}|}_{\ku_{\balpha}})(p) = \bzero, \, \left|
\frac{\partial ({f^{(1)}|}_{\ku_{\balpha}},\ldots, {f^{(n-1)}|}_{\ku_{\balpha}})}{\partial x}(p) \right| = 0\,. 
\]
Thus, the dimension of the fiber of $\varpi_1$ over $p$ is bounded above by $\dim \PP H^0(\PP^\bd,E) - (\dim \PP^\bd+1)$, and hence, for a generic $p \in \PP^\bd$, 
\begin{align*}
\dim \Gamma(E) &= \dim \PP^\bd - \dim \varpi_1^{-1}(p)  \\
&\le \dim \PP^\bd +(\dim \PP H^0(\PP^\bd,E) - (\dim \PP^\bd+1)) \\
&= \dim \PP H^0(\PP^\bd,E) - 1\,. 
\end{align*}
As was shown above, if $[f] \in \Delta(E)\setminus 
\kd_1$, then $\dim \kz(f) \ge 1$. Therefore, 
\[
\dim \Delta(E) \setminus \kd_1 \le \dim \Gamma(E) - \dim \varpi_2^{-1}([f]) 
= \dim \Gamma(E) - \dim \kz(f) \le \dim \PP H^0(\PP^\bd,E) - 2\,. 
\]
Therefore, the codimension of the second component $\kd_2 \coloneqq \overline{\Delta(E)\setminus \kd_1}$ of $\Delta(E)$ is greater than or equal to $2$.
\end{proof}

\begin{example}\label{eq: description of Delta E for 223}
We illustrate the idea of the proof of Theorem~\ref{thm: discriminant of E for boundary format} with a specific format $\bd$. Let $\bd = (2,2,3)$, let $E = \ko(0,1,1)\oplus\ko(1,0,1)\oplus\ko(1,1,0)^{\oplus 2}$, and let $f=(f^{(1)},f^{(2)},f_1^{(3)},f_2^{(3)})$ be a global section of $E$, where
\[
\begin{array}{ll}
    f^{(1)} = \sum_{j=1}^2\sum_{k=1}^3 a_{jk}^{(1)}\pi_j^{(2)}\pi_k^{(3)}\,, &  f^{(2)} = \sum_{i=1}^2\sum_{k=1}^3 a_{ik}^{(2)}\pi_i^{(1)}\pi_k^{(3)}\,,\\[5pt]
    f_1^{(3)} = \sum_{i=1}^2\sum_{j=1}^2 a_{ij}^{(3,1)}\pi_i^{(1)}\pi_j^{(2)}\,, & f_2^{(3)} = \sum_{i=1}^2\sum_{j=1}^2 a_{ij}^{(3,2)}\pi_i^{(1)}\pi_j^{(2)}\,.
\end{array}
\]
If $\balpha = (2,2,3)$ and if $(x^{(1)}_1, x^{(2)}_1, x^{(3)}_1,x^{(3)}_2)$ is the vector of local coordinates on $\ku_{\balpha}$, then the Jacobian matrix of $f$ on $\ku_{\balpha}$ is 
\[
\frac{\partial {f|}_{\ku_{\balpha}}}{\partial x} 
=
\left(
\begin{smallmatrix}
     0 & a_{11}^{(1)}x_1^{(3)}+a_{12}^{(1)}x_2^{(3)}+a_{13}^{(1)} & a_{11}^{(1)}x_1^{(2)}+a_{21}^{(1)} & a_{12}^{(1)}x_1^{(2)}+a_{22}^{(1)} \\[2pt]
     a_{11}^{(2)}x_1^{(3)}+a_{12}^{(2)}x_2^{(3)}+a_{13}^{(2)} & 0 & a_{11}^{(2)}x_1^{(1)}+a_{21}^{(1)} & a_{12}^{(2)}x_1^{(1)}+a_{22}^{(1)} \\[2pt]
     a_{11}^{(3,1)}x_1^{(2)}+a_{12}^{(3,1)} & a_{11}^{(3,1)}x_1^{(1)}+a_{21}^{(3,1)} & 0 & 0 \\[2pt]
     a_{11}^{(3,2)}x_1^{(2)}+a_{12}^{(3,2)} & a_{11}^{(3,2)}x_1^{(1)}+a_{21}^{(3,2)} & 0 & 0
\end{smallmatrix}
\right)\,.
\]

On one hand, eliminating the variables $x_1^{(1)}$ and $x_1^{(2)}$ from the ideal
\[
\left\langle {f_1^{(3)}|}_{\ku_{\balpha}}, {f_2^{(3)}|}_{\ku_{\balpha}},\left|
\begin{smallmatrix}
    a_{11}^{(3,1)}x_1^{(2)}+a_{12}^{(3,1)} & a_{11}^{(3,1)}x_1^{(1)}+a_{21}^{(3,1)} \\[2pt]
    a_{11}^{(3,2)}x_1^{(2)}+a_{12}^{(3,2)} & a_{11}^{(3,2)}x_1^{(1)}+a_{21}^{(3,2)}
\end{smallmatrix}
\right|\right\rangle\,,
\]
gives rise to the equation of the first component $\kd_1$ of $\Delta(E)$, which is an irreducible hypersurface of degree $4$. It coincides with the discriminant of the system of bilinear forms $f_1^{(3)}=f_2^{(3)}=0$ in $\pi^{(1)}$ and $\pi^{(2)}$. On the other hand, eliminating the variables $x_1^{(1)},x_1^{(2)},x_1^{(3)}$, and $x_2^{(3)}$ from the ideal
\[
\left\langle {f^{(1)}|}_{\ku_{\balpha}}, {f^{(2)}|}_{\ku_{\balpha}}, {f_1^{(3)}|}_{\ku_{\balpha}}, {f_2^{(3)}|}_{\ku_{\balpha}}, \left|
\begin{smallmatrix}
    a_{11}^{(1)}x_1^{(2)}+a_{21}^{(1)} & a_{12}^{(1)}x_1^{(2)}+a_{22}^{(1)} \\[2pt]
    a_{11}^{(2)}x_1^{(1)}+a_{21}^{(1)} & a_{12}^{(2)}x_1^{(1)}+a_{22}^{(1)}
\end{smallmatrix}
\right|\right\rangle\,,
\]
we obtain the generators of the ideal of the second component $\kd_2$ of $\Delta(E)$. With the aid of \verb|Macaulay2|, one can verify that $\codim\,\kd_2=2$ and $\deg\kd_2=19$. One can also check that $\kd_2$ is irreducible over the field $\Q$ and its radical ideal is minimally generated by $11$ homogeneous polynomials in  $a_{ij}^{(1)},a_{ij}^{(2)},a_{ij}^{(3,1)}$, and $a_{ij}^{(3,2)}$; three of them have degree $6$, two of them have degree $7$, and the rest of them have degree $8$.\hfill$\diamondsuit$
\end{example}

\begin{example}\label{ex: 223 with one double tmNe}
Consider a three-player game $X=(X^{(1)},X^{(2)},X^{(3)})$ of format $\bd=(2,2,3)$ whose payoff tensors are 
\[
X^{(1)} =  
\begin{bmatrix}[cc:cc:cc]
2 & 2 & 1 & 2 & 3 & 0 \\
1 & 3 & 2 & 3 & 2 & 2 
\end{bmatrix},\ 
X^{(2)} =
\begin{bmatrix}[cc:cc:cc]
3 & 2 & 2 & 1 & 4 & 2 \\
1 & 4 & 1 & 3 & 2 & 3 
\end{bmatrix},\ 
X^{(3)} = 
\begin{bmatrix}[cc:cc:cc]
3 & 1 & 2 & 2 & 1 & 2 \\
2 & 4 & 3 & 4 & 3 & 8 
\end{bmatrix}\,.
\]
The Nash equilibrium scheme $\kz_X$ of $X$ is defined by the ideal $J = J_1+J_2+J_3$ of Definition \ref{def: Nash equilibria scheme}, where $J_1 = \langle \varDelta f_{1,2}^{(1)}\rangle$, $J_2 = \langle \varDelta f_{1,2}^{(2)}\rangle$, and $J_3 = \langle \varDelta f_{1,3}^{(3)}, \varDelta f_{1,3}^{(3)}\rangle$.
More explicitly, it is defined by the system
\[
\begin{cases}
    0 = \varDelta f_{1,2}^{(1)} = \pi_1^{(2)}\pi_1^{(3)}-\pi_1^{(2)}\pi_2^{(3)}+\pi_1^{(2)}\pi_3^{(3)}-\pi_2^{(2)}\pi_1^{(3)}-\pi_2^{(2)}\pi_2^{(3)}-2\pi_2^{(2)}\pi_3^{(3)}\\ 
    0 = \varDelta f_{1,2}^{(2)} = \pi_1^{(1)}\pi_1^{(3)}+\pi_1^{(1)}\pi_2^{(3)}+2\pi_1^{(1)}\pi_3^{(3)}-3\pi_2^{(1)}\pi_1^{(3)}-2\pi_2^{(1)}\pi_2^{(3)}-\pi_2^{(1)}\pi_3^{(3)}\\
    0 = \varDelta f_{1,2}^{(3)} = \pi_1^{(1)}\pi_1^{(2)}-\pi_1^{(1)}\pi_2^{(2)}-\pi_2^{(1)}\pi_1^{(2)}\\
    0 = \varDelta f_{1,3}^{(3)} = 2\pi_1^{(1)}\pi_1^{(2)}-\pi_1^{(1)}\pi_2^{(2)}-\pi_2^{(1)}\pi_1^{(2)}-4\pi_2^{(1)}\pi_2^{(2)}\,.
\end{cases} 
\]
The point
\[
\bpi = 
\left(
\left(\frac{2}{3},\frac{1}{3}\right),\left(\frac{2}{3},\frac{1}{3}\right),\left(\frac{3}{5},\frac{1}{5},\frac{1}{5}\right)\right)\in \Delta_1 \times \Delta_1 \times \Delta_2
\]
is a solution to the system mentioned above. 
With the aid of \verb|Macaulay2|, we verified that $\kz_X$ is a nonreduced point of multiplicity $2$ supported at $[\bpi] \in \PP^\bd$.\hfill$\diamondsuit$
\end{example}

\subsection{The Nash resultant variety}\label{subsec: Nash resultant}
Theorem~\ref{thm: number tmNe generic game} shows that if $X=(X^{(1)},\ldots,X^{(n)}) \in V^{\oplus n}$ is generic, then $\kz_X=\emptyset$ if and only if the format of $X$ is ``beyond boundary'', i.e., $d_n-1>\sum_{i=1}^{n-1}(d_i-1)$. However, there exist $X  \in V^{\oplus n}$ whose Nash equilibrium schemes are not empty. In this subsection, we study the locus of such $X$. 

\begin{definition}\label{def: Nash resultant variety}
Let $n\ge 2$ and let $\bd=(d_1,\ldots,d_n)\in\Z_{\mge[2]}^n$. If $d_n-1>\sum_{i=1}^{n-1}(d_i-1)$, then we call the subset of $\PP V^{\oplus n}$
\[
\kr(\bd) \coloneqq \left\{[X] \in \PP V^{\oplus n} \mid \kz_X \neq \emptyset\right\}
\]
the \emph{Nash resultant variety}. 
\end{definition}

\begin{proposition}\label{prop: Nash resultant variety two players}
Let $\bd=(d_1,d_2)\in\Z_{\mge[2]}^2$ with $d_1 < d_2$. The Nash resultant variety $\kr(\bd)$ is irreducible of codimension $d_2-d_1$ and degree $\binom{d_2-1}{d_1-1}$. 
\end{proposition}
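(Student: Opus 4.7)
The plan is to reduce Proposition~\ref{prop: Nash resultant variety two players} to a classical statement about determinantal varieties. The first observation is that for a two-player game the equations \eqref{eq: linear system 2 players} defining $\kz_X$ decouple: $\{\varDelta f_{1,k}^{(1)} = 0\}_{k}$ is a linear system in $\pi^{(2)}$ whose coefficients depend only on $X^{(1)}$, while $\{\varDelta f_{1,k}^{(2)} = 0\}_{k}$ is a linear system in $\pi^{(1)}$ whose coefficients depend only on $X^{(2)}$. Since $d_1-1 < d_2$, the first system is underdetermined and thus always admits a nontrivial projective solution for arbitrary $X^{(1)}$. Hence $\kz_X \neq \emptyset$ if and only if the second system has a nontrivial solution in $\PP^{d_1-1}$. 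Introducing the proportionality constant $\lambda$ with $\pi^{(1)\mT} X^{(2)} = \lambda\,\bone^{\mT}$, this is equivalent to the rank-drop condition $\mathrm{rank}\,\tilde M(X^{(2)}) \le d_1$, where $\tilde M(X^{(2)}) \coloneqq \bigl(\begin{smallmatrix}X^{(2)}\\ \bone^{\mT}\end{smallmatrix}\bigr)$ is the $(d_1+1)\times d_2$ matrix described in the introduction.

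Next, I interpret the condition on $X^{(2)}$ as the pullback of a classical determinantal variety. Let $\varphi\colon V \to N \coloneqq \C^{(d_2-1)\times d_1}$ be the surjective linear map sending $X^{(2)}$ to the coefficient matrix $A$ of player~$2$'s system, with entries $A_{k,i}=x_{i,1}^{(2)}-x_{i,k+1}^{(2)}$; its kernel $K$ consists of matrices with constant rows and has $\dim K=d_1$. Under $\varphi$, the rank-drop condition on $\tilde M$ is exactly $\mathrm{rank}\,A \le d_1-1$, so the locus $Y \coloneqq \{X^{(2)}\in V : \mathrm{rank}\,\tilde M(X^{(2)}) \le d_1\}$ equals $\varphi^{-1}(\Sigma)$, where $\Sigma\subset N$ is the variety of $(d_2-1)\times d_1$ matrices of rank at most $d_1-1$. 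Since $d_2-1 \ge d_1$, $\Sigma$ is the classical maximal-minors determinantal variety; by a standard result (see, e.g., \cite[Chapter II.5]{arbarello1985geometry}) it is irreducible, of codimension $d_2-d_1$ in $N$, and of degree $\binom{d_2-1}{d_1-1}$.

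Finally, because $\varphi$ is a linear surjection, choosing a linear section $N\hookrightarrow V$ yields $V \cong K\oplus N$ and $Y \cong K \times \Sigma$, so $Y$ is irreducible of codimension $d_2-d_1$ in $V$; projectively, $\PP Y$ is the join in $\PP V$ of the linear subspace $\PP K$ with a copy of $\PP\Sigma$, and joining with a disjoint linear subspace preserves the degree. Since $\kz_X$ depends only on $X^{(2)}$, we have $\kr(\bd) = \PP(V\oplus Y) \subset \PP V^{\oplus 2}$, which is the projective join of $\PP Y$ (sitting in the $X^{(2)}$-factor) with the disjoint linear subspace $\PP V$ (spanning the $X^{(1)}$-factor); this join again preserves both irreducibility and the degree and imposes the codimension of $\PP Y$ in $\PP V$. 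Consequently $\kr(\bd)$ is irreducible, its codimension equals $\codim_{\PP V}\PP Y=d_2-d_1$ (matching the general formula of Theorem~\ref{thm: codim degree Nash resultant variety} specialized to $n=2$), and its degree equals $\deg\PP\Sigma = \binom{d_2-1}{d_1-1}$. The only nontrivial ingredient is the classical codimension-and-degree formula for $\Sigma$; everything else reduces to the decoupling of the two players' systems and to elementary properties of projective joins with linear subspaces.
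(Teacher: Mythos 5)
Your approach is the same as the paper's: decouple the two linear systems, observe that the constraint coming from $X^{(1)}$ is vacuous since $d_1-1<d_2$, translate the condition on $X^{(2)}$ into a rank condition on the $(d_2-1)\times d_1$ coefficient matrix $A$ of player~$2$'s system, identify the resulting locus as (a cone over) the classical maximal-minors determinantal variety, and invoke the ACGH formulas for its codimension and degree. The extra bookkeeping you supply --- surjectivity of $\varphi$, the splitting $Y\cong K\times\Sigma$, and the joins with linear subspaces --- is correct and simply makes explicit what the paper leaves implicit.

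The one thing to flag is that your computed codimension is $d_2-d_1$, whereas the proposition (and the paper's own proof) claims $d_2-d_1+1$. Your answer is the right one: the variety $\Sigma$ of $(d_2-1)\times d_1$ matrices of rank at most $d_1-1$ has codimension $\bigl((d_2-1)-(d_1-1)\bigr)\cdot\bigl(d_1-(d_1-1)\bigr)=d_2-d_1$, and this agrees with the general Theorem~\ref{thm: codim degree Nash resultant variety} specialized to $n=2$, which gives $d_2-1-(d_1-1)=d_2-d_1$. A concrete check at $(d_1,d_2)=(2,3)$ confirms it: there $\tilde M$ is a $3\times 3$ matrix and the condition $\mathrm{rank}\,\tilde M\le 2$ is the single hypersurface $\det\tilde M=0$, so the codimension is $1=d_2-d_1$, not $2$. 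The ``$d_2-d_1+1$'' in the proposition statement is a typo that your argument correctly does not reproduce.
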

\begin{proof}
Let $X = (X^{(1)},X^{(2)}) \in V_1\oplus V_2$. As was shown in Proposition~\ref{prop: discriminant_two_players}, the Nash equilibrium scheme~$\kz_X$ of $X$ is defined by two homogeneous systems of linear equations, the second of which has the coefficient matrix 
\[
A \coloneqq 
\begin{pmatrix}
x_{11}^{(2)}-x_{12}^{(2)} & \cdots & x_{d_11}^{(2)}-x_{d_12}^{(2)} \\
\vdots & & \vdots \\
x_{11}^{(2)}-x_{1d_2}^{(2)} & \cdots & x_{d_11}^{(2)}-x_{d_1d_2}^{(2)} 
\end{pmatrix}\,. 
\] 
Since $d_1 < d_2$, this system is overdetermined, and hence $\kz_X\neq\emptyset$ precisely when $\mathrm{rank}\,A\le d_1-1$. In particular, the Nash resultant variety $\kr(\bd)$ is a determinantal variety, and its codimension and degree are $\codim\,\kr(\bd)=d_2-d_1$ and $\deg\kr(\bd)=\binom{d_2-1}{d_1-1}$ respectively.
\end{proof}

The following theorem extends Proposition \ref{prop: Nash resultant variety two players} to a more general tensor of beyond boundary format. 

\begin{theorem}\label{thm: codim degree Nash resultant variety}
Let $n\ge 2$ and let $\bd=(d_1,\ldots,d_n)\in\Z_{\mge[2]}^n$. If $d_n-1>\sum_{i=1}^{n-1}(d_i-1)$, then the Nash resultant variety $\kr(\bd)$ is irreducible, and its codimension and degree are
\[
\codim\,\kr(\bd) = d_n-1-\sum_{i=1}^{n-1}(d_i-1)
\]
and   
\begin{align*}
\deg\kr(\bd) &= \binom{d_n-1}{\codim\,\kr(\bd)}\binom{d_n-\codim\,\kr(\bd)-1}{d_1-1,\ldots,d_{n-1}-1} \\
&= \frac{(d_n-1)!}{(d_1-1)! \cdots (d_{n-1}-1)! (d_n-1-\sum_{i=1}^{n-1}(d_i-1))!}
\end{align*}
respectively. 
\end{theorem}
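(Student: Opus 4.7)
The plan is to exhibit $\kr(\bd) \subset \PP V^{\oplus n}$ as a join with a linear subspace in order to reduce to a variety $R_n \subset \PP V$, and then to compute the invariants of $R_n$ by intersection theory on an incidence variety over $Y \coloneqq \PP^{d_1-1} \times \cdots \times \PP^{d_{n-1}-1}$. Define
$R_n \coloneqq \{[X^{(n)}] \in \PP V : \exists \bpi_{-n} \in Y \text{ with } X^{(n)}(\bpi_{-n}) \wedge \bone = \bzero\}$.
One direction is immediate: any Nash equilibrium yields a witness, so $[X] \in \kr(\bd)$ implies $[X^{(n)}] \in R_n$. Conversely, given $[X^{(n)}] \in R_n$ with witness $\bpi_{-n}$, the remaining conditions $X^{(i)}(\bpi_{-i}) \wedge \bone = \bzero$ for $i < n$ are exactly $\sum_{i<n}(d_i-1) < d_n$ linear equations on $\pi^{(n)} \in \C^{d_n}$, which by the beyond-boundary hypothesis always admit a nonzero solution, completing the equilibrium. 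Hence $\kr(\bd)$ is the join of $R_n$ (embedded as $[(0,\ldots,0,X^{(n)})]$) with the linear subspace $\PP V^{\oplus(n-1)}$; since such joins preserve irreducibility, codimension, and degree, it suffices to treat $R_n$.

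I would then introduce the incidence
$\Phi \coloneqq \{(\bpi_{-n}, [X^{(n)}]) \in Y \times \PP V : X^{(n)}(\bpi_{-n}) \wedge \bone = \bzero\}$
with projections $\pi_1$ and $\pi_2$. For each fixed $\bpi_{-n}$, global generation of $E$ implies that the $d_n-1$ equations $\varDelta f^{(n)}_{1,k}=0$ are linearly independent in $X^{(n)}$, so $\pi_1 \colon \Phi \to Y$ is a projective subbundle whose fibers have codimension $d_n-1$ in $\PP V$. Thus $\Phi$ is irreducible of dimension $m + \dim \PP V - (d_n-1)$ with $m \coloneqq \sum_{i<n}(d_i-1)$, and $R_n = \pi_2(\Phi)$ is irreducible. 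Because each defining equation is a section of $\ko_Y(\bone) \boxtimes \ko_{\PP V}(1)$, the class in the Chow ring of $Y \times \PP V$ is $[\Phi] = (h+H)^{d_n-1}$, where $h = h_1 + \cdots + h_{n-1}$ is the Segre hyperplane class on $Y$ and $H$ is the hyperplane class on $\PP V$.

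The crucial technical step is showing $\pi_2 \colon \Phi \to R_n$ is birational, so that $\pi_{2*}[\Phi] = [R_n]$. Dimension counting already yields $\codim R_n = (d_n-1) - m$ and generic fibers of dimension zero. To confirm that the generic fiber consists of a single point, I would analyze the double incidence
$Z \coloneqq \{(y_1, y_2, [X^{(n)}]) : y_1 \ne y_2 \in Y,\ [X^{(n)}] \in \Lambda_{y_1} \cap \Lambda_{y_2}\}$,
where $\Lambda_y \coloneqq \pi_1^{-1}(y)$. The Gauss-type assignment $y \mapsto \Lambda_y$ is injective on points, and a standard generic-position argument shows that the $2(d_n-1)$ linear forms cutting out $\Lambda_{y_1} \cap \Lambda_{y_2}$ are independent for generic $y_1 \ne y_2$, giving $\dim Z = 2m + \dim \PP V - 2(d_n-1)$. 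Since $d_n - 1 > m$, this is strictly less than $\dim R_n = \dim \PP V + m - (d_n-1)$, so the locus in $R_n$ of points with multiple preimages under $\pi_2$ is proper, and $\pi_2$ is birational.

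Finally, I would compute the pushforward $[R_n] = \sum_j \binom{d_n-1}{j} H^{d_n-1-j}\,\pi_{2*}(h^j)$. Only the term $j = m = \dim Y$ survives, and $\int_Y h^m = \binom{m}{d_1-1,\ldots,d_{n-1}-1}$ is the Segre degree of $Y$. Setting $c \coloneqq d_n - 1 - m$, this yields $\deg R_n = \binom{d_n-1}{m}\binom{m}{d_1-1,\ldots,d_{n-1}-1}$, which simplifies to $(d_n-1)!/(c!\prod_{i<n}(d_i-1)!)$, matching the stated formulas for $\codim \kr(\bd)$ and $\deg \kr(\bd)$. The main obstacle is the birationality step; the cone reduction and the Chern-class bookkeeping are otherwise routine.
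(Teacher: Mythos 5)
Your proposal is correct, and its second half --- computing the invariants of $R_n$ --- takes a more self-contained route than the paper's. The initial reduction, to the locus $R_n\subset\PP V$ of tensors $[X^{(n)}]$ admitting a witness $\bpi_{-n}$ via the observation that the remaining $m\coloneqq\sum_{i<n}(d_i-1)<d_n$ conditions form an underdetermined linear system in $\pi^{(n)}$, is essentially the paper's reduction to the resultant variety $\kr(F)$ of $F=\ko_{\PP^{\bd'}}(\bone)^{\oplus(d_n-1)}$ on $\PP^{\bd'}$. After that, the paper invokes GKZ's theory of resultants of very ample bundles (Chapter~3, Prop.~3.1 and Thm.~3.10) and asserts that ``slight modifications'' of those proofs handle $\codim>1$, whereas you construct the incidence $\Phi$ explicitly, show $\pi_1\colon\Phi\to Y$ is a projective subbundle (so $\Phi$ is smooth and irreducible, with class $(h+H)^{d_n-1}$), establish birationality of $\pi_2$ by a dimension count on the double incidence $Z$, and push forward. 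A small strengthening: the independence of the $2(d_n-1)$ linear forms cutting out $\Lambda_{y_1}\cap\Lambda_{y_2}$ in fact holds for \emph{every} $y_1\neq y_2$, not just generically, since a dependence relation forces $a_k\pi^{(1)}_{\bj_{-n}}+b_k\pi^{(2)}_{\bj_{-n}}=0$ for each $\bj_{-n}$, and Segre coordinate vectors of distinct points of $Y$ are linearly independent --- so no genericity is needed there. Your degree $\binom{d_n-1}{m}\binom{m}{d_1-1,\ldots,d_{n-1}-1}$ agrees with $\int_{\PP^{\bd'}}c_m(F)$ from the paper's proof and with the factorial form in the statement; note that the binomial $\binom{d_n-1}{\codim\kr(\bd)-1}$ displayed in the theorem is a typo for $\binom{d_n-1}{\codim\kr(\bd)}$. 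What your route buys is a from-scratch argument uniform in all codimensions, in place of a citation plus an unverified modification claim.
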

\begin{proof}
Let $E$ be the vector bundle on $\PP^\bd$ as given in \eqref{eq: Nash vector bundle}. If $\kr(E)$ denotes the resultant variety of $E$, i.e., the set of global sections whose zero schemes are not empty:
\[
\kr(E) = \{[f] \in \PP H^0(\PP^\bd,E) \mid \kz(f)\neq \emptyset\}\,,
\]
then $\kr(\bd)$ is, after a suitable linear change of coordinates, a cone over $\kr(E)$, and hence it suffices to show the irreducibility of $\kr(E)$ as well as to find the codimension and degree of $\kr(E)$. 

Given $f=(f^{(1)},\ldots,f^{(n)}) \in H^0(\PP^\bd,E)$, where $f^{(i)}\in H^0(\PP^\bd,\ko(\bone_i)^{\oplus(d_i-1)})$ for all $i\in[n]$, the proof of Theorem \ref{thm: number tmNe generic game} indicates that $\kz(f) = \emptyset$ if and only if $\kz(f^{(n)}) =\emptyset$. 
If $\bd'\coloneqq(d_1,\ldots,d_{n-1})$ and $F\coloneqq \ko_{\PP^{\bd'}}(\bone)^{\oplus(d_n-1)}$, then $H^0(\PP^\bd,\ko(\bone_n)^{\oplus(d_n-1)}) = H^0(\PP^{\bd'},F)$.
Since $\kr(E)$ is the cone over the resultant variety of $F$ 
\[
\kr(F) = \{[f^{(n)}] \in \PP H^0(\PP^{\bd'}, F) \mid \kz(f^{(n)})\neq \emptyset\}\,, 
\]
it suffices to show that $\kr(F)$ is irreducible and to compute its codimension and degree. These were already studied in \cite[Section 3 in Chapter 3]{GKZ}. For the sake of the reader's convenience, we outline the proofs used in \cite{GKZ} below. 

Let $\Gamma(F)\coloneqq\{(p,[f^{(n)}]) \in \PP^{\bd'} \times \PP H^0(\PP^{\bd'},F) \mid \kz(f^{(n)}) \neq \emptyset \}$, and for each $i \in \{1,2\}$, denote the projection from $\PP^{\bd'} \times \PP H^0(\PP^{\bd'},F)$ to its $i$th factor by $q_i$. The fiber $q_1^{-1}(p)$ of $q_1$ over any point $p$ of $\PP^{\bd'}$ is identified with a linear subspace of $\PP H^0(\PP^{\bd'},F)$ defined by $d_n-1$ linearly independent linear forms obtained by evaluating $f^{(n)}$ at $p$. This means that $q_1^{-1}(p)$ has codimension $d_n-1$, and hence $\Gamma(F)$ has a projective bundle structure with the morphism $q_1\colon \Gamma(E) \to \PP^{\bd'}$. In particular, it is irreducible and has $\dim \PP^{\bd'}+\dim \PP H^0(\PP^{\bd'},F)-(d_n-1) = \dim \PP H^0(\PP^{\bd'},F)-(d_n-1-\sum_{i=1}^{n-1}(d_i-1))$. Furthermore, since $F$ is very ample, if $[f^{(n)}] \in \kr(F)$ is generic, then $Z(f^{(n)})$ consists of a closed point. Therefore, the mprphism $q_2\colon \Gamma(F) \to \kr(F)$ is birational, and hence $\kr(F)$ is irreducible of codimension $d_n-1-\sum_{i=1}^{n-1}(d_i-1)$. 

Regarding the degree of $\kr(F)$, the proof of \cite[Chapter 3, Theorem 3.10]{GKZ} shows that, if $d_n-1 \geq \sum_{i=1}^{n-1}(d_i-1)+1$, then its degree is 
\begin{align*}
\deg\kr(F) &= \int_{\PP^{\bd'}} c_{d_n-c-1}(F) \\
&= \binom{d_n-1}{d_n-c-1}\binom{d_n-c-1}{d_1-1, d_2-1, \ldots, d_{n-1}-1} \\
&= \frac{(d_n-1)!}{(d_1-1)!(d_2-1)! \cdots (d_{n-1}-1)!\,c!}\,, 
\end{align*}
which completes the proof.
\end{proof}

\begin{remark}\label{rmk: determinantal representation Nash resultant boundary format}
Let $\bd = (d_1,\ldots, d_n) \in \Z_{\mge[2]}^n$. 
If $d_n-1 = \sum_{i=1}^{n-1}(d_i-1)+1$, then Theorem \ref{thm: codim degree Nash resultant variety} implies that the Nash resultant variety $\kr(\bd)$ is a hypersurface. We call the polynomial defining~$\kr(\bd)$ the \emph{Nash resultant}. 

As the proof of Theorem~\ref{thm: codim degree Nash resultant variety} suggests, the problem of finding the Nash resultant is equivalent to the problem of finding the resultant of the system of the $(d_n-1)$ multilinear forms 
\[
\varDelta f_{1,k}^{(n)}  =   f_1^{(n)}-f_k^{(n)}=\sum_{\bj_{-n} \in I_{-n}} (x_{(1,\bj_{-n})}^{(n)}-x_{(k,\bj_{-n})}^{(n)}) \, \pi_{\bj_{-n}} = 0\,.
\]
Thus, setting $y_{j_{-n}}\coloneqq x_{(1,\bj_{-n})}^{(n)}-x_{(k,\bj_{-n})}^{(n)}$, one can reduce the problem of finding the Nash resultant to the problem of finding the system of the generic $(d_n-1)$ multilinear forms, or equivalently, the hyperdeterminant of the boundary format $(d_1, \ldots, d_{n-1}, d_n-1)$. Furthermore, it follows from \cite[Chapter 14.3.B]{GKZ} that the Nash resultant can be expressed as the determinant of a matrix of order $(d_n-1)!/\prod_{i=1}^{n-1} (d_i-1)!$. 
To be more precise, for each $i \in [n]$, let $e_i$ be the nonnegative integer defined by 
\[
e_i \coloneqq
\begin{cases}
0 & \text{if $i=1$} \\
\sum_{j=1}^{i-1} (d_j-1) & \text{otherwise,}
\end{cases}
\]
and let $\mathrm{Sym}^{e_i}(V_i^*)$ be the $e_i$th symmetric product of the dual space $V_i^*$ of $V_i$. If $\partial_X$ denotes the linear transformation 
\begin{equation}\label{eq: map partial X}
\partial_X\colon \left(\bigotimes_{i=1}^{n-1} \mathrm{Sym}^{e_i} V_i\right)^{\!\!d_n-1} \longrightarrow \bigotimes_{i=1}^{n-1} \mathrm{Sym}^{e_i+1}V_i
\end{equation}
by $\partial_X(G_1, \ldots, G_{d_n-1}) \coloneqq \sum_{k=1}^{d_n-1} \varDelta f_{1,k}^{(n)}\, G_i$,  
then the determinant of $\partial_X$ is a polynomial defining~$\kr(\bd)$. 
\end{remark}

\begin{example}\label{ex: Nash resultant in the case 224}
We discuss an example to illustrate the procedure of finding the Nash resultant given in Remark \ref{rmk: determinantal representation Nash resultant boundary format} in more detail. 

Let $n=3$, let $\bd=(2,2,4)$, and let $X = (X^{(1)},X^{(2)}, X^{(3)}) \in V^{\oplus 3}$. By Theorem~\ref{thm: codim degree Nash resultant variety}, the Nash resultant variety $\kr(\bd)$ is a hypersurface in $V^{\oplus n}$ of degree $6$. The Nash resultant depends only on the variables of entries of $X^{(3)}$.

Note that $e_1 = 0$ and $e_2 = d_1-1 = 1$, so the map $\partial_X$ in \eqref{eq: map partial X} is the linear transformation from $(V_2^*)^{\oplus 3}$ to $V_1^*\otimes\mathrm{Sym}^2(V_2^*)$ that sends a triple $(G_1,G_2,G_3)$ of linear forms in $\pi_1^{(2)}$ and $\pi_2^{(2)}$ to $\sum_{i=1}^3 \varDelta f_{1,k}^{(3)} \, G_i$, where 
\[
\varDelta f_{1,k}^{(3)} = f_1^{(3)}-f_k^{(3)} = \sum_{i=1}^2 \sum_{j=1}^2 (x_{ij 1}^{(3)}-x_{ij k}^{(3)})\, \pi_{i}^{(1)}\pi_{j}^{(2)}
\]
for each $k \in \{2,3,4\}$. The matrix representation of $\partial_X$ relative to the standard bases for~$V_2^{\oplus 3}$ and $V_1 \otimes \mathrm{Sym}^2 (V_2^*)$ is 
\[
\begin{pmatrix}
x_{111}^{(3)}-x_{112}^{(3)} & 0 & x_{111}^{(3)}-x_{113}^{(3)} & 0 & x_{111}^{(3)}-x_{114}^{(3)}  & 0  \\[2pt]
x_{121}^{(3)}-x_{122}^{(3)} & x_{111}^{(3)}-x_{112}^{(3)} & x_{121}^{(3)}-x_{123}^{(3)} & x_{111}^{(3)}-x_{113}^{(3)} & x_{121}^{(3)}-x_{124}^{(3)} & x_{121}^{(3)}-x_{114}^{(3)} \\[2pt]
0 & x_{121}^{(3)}-x_{122}^{(3)} & 0 & x_{121}^{(3)}-x_{123}^{(3)} & 0 & x_{121}^{(3)}-x_{124}^{(3)} \\[2pt]
x_{211}^{(3)}-x_{212}^{(3)} & 0 & x_{211}^{(3)}-x_{213}^{(3)} & 0 & x_{211}^{(3)}-x_{214}^{(3)}  & 0 \\[2pt]
x_{221}^{(3)}-x_{222}^{(3)} & x_{211}^{(3)}-x_{212}^{(3)} & x_{221}^{(3)}-x_{223}^{(3)} & x_{211}^{(3)}-x_{213}^{(3)} & x_{221}^{(3)}-x_{224}^{(3)} & x_{211}^{(3)}-x_{214}^{(3)} \\[2pt]
0 & x_{221}^{(3)}-x_{222}^{(3)} & 0 & x_{221}^{(3)}-x_{223}^{(3)} & 0 & x_{221}^{(3)}-x_{224}^{(3)}  
\end{pmatrix}\,.
\]
The Nash resultant hypersurface is defined by the determinant of this $6 \times 6$ matrix. This polynomial has 960 nonzero terms in the 16 entries of $X^{(3)}$. \hfill$\diamondsuit$
\end{example}

\section*{Acknowledgements}

This work is partially supported by the Thematic Research Programme {\em``Tensors: geometry, complexity and quantum entanglement''}, University of Warsaw, Excellence Initiative - Research University and the Simons Foundation Award No. 663281 granted to the Institute of Mathematics of the Polish Academy of Sciences for the years 2021-2023. We are thankful for the support and excellent working conditions during the semester ``Algebraic Geometry with Applications to Tensors and Secants'' (AGATES). HA would like to express gratitude to the Max Planck Institute for Mathematics in the Sciences for its generous hospitality and inspiring research environment during his visit. IP expresses her sincere gratitude to KTH Royal Institute of Technology in Stockholm for its hospitality and stimulating discussions during her visit. The work of LS was partially supported by a KTH grant from the Verg Foundation and Brummer \& Partners MathDataLab. We are grateful to Antonio Lerario for fruitful discussions on Thom's Isotopy Lemma. The authors also thank the referee and Claus Hertling for carefully reading the paper and for their valuable suggestions and comments.

\bibliographystyle{alpha}
\bibliography{biblio}

\end{document}